\pgfplotsset{compat=newest}
\pgfplotsset{plot coordinates/math parser=false}
\newlength\figureheight
\newlength\figurewidth
\DeclareMathOperator{\spn}{span}
\DeclareMathOperator*{\Ran}{Ran}
\DeclareMathOperator*{\argmin}{arg\,min}
\DeclareMathOperator{\rank}{rank}
\DeclareMathOperator{\R}{\mathbb{R}}
\DeclareMathOperator{\E}{\mathbb{E}}
\DeclareMathOperator{\tr}{\operatorname{tr}}
\DeclareMathOperator{\RE}{\operatorname{Re}}
\newcommand{\norm}[1]{\left\lVert #1\right\rVert}
\newcommand\floor[1]{\lfloor#1\rfloor}
\newcommand\ceil[1]{\lceil#1\rceil}
\newcommand\bb[1]{\mathbb{#1}} % mathbb
\newcommand\q[1]{\mathcal{#1}} % math cursive
\newcommand\Ex{\mathbb{E}}
\newtheorem{theorem}{Theorem}
\newtheorem{lemma}[theorem]{Lemma}
\newtheorem{proposition}[theorem]{Proposition}
\newtheorem{corollary}[theorem]{Corollary}
\newtheorem{definition}[theorem]{Definition}
\newcommand{\footremember}[2]{%
\footnote{#2}
\newcounter{#1}
\setcounter{#1}{\value{footnote}}%
}
\newcommand{\footrecall}[1]{%
\footnotemark[\value{#1}]%
}
\title{\large\bfseries On the Robustness of Noise-Blind Low-Rank Recovery from %with 
Rank-One Measurements}
\date{\today}
\author{Felix Krahmer\footremember{TUM}{Department of Mathematics, Technical University of Munich, 85748 Garching bei M\"unchen, Germany (\href{mailto:felix.krahmer@tum.de}{felix.krahmer@tum.de})} \and Christian K\"ummerle\footremember{JHU}{Department of Applied Mathematics \& Statistics, Johns Hopkins University, Baltimore, MD 21218, USA (\href{mailto:c.kuemmerle@tum.de}{kuemmerle@jhu.edu})}  \and Oleh Melnyk\footrecall{TUM} \footremember{HMGU}{Mathematical Imaging and Data Analysis, ICT, Helmholtz Center Munich, 85764 Neuherberg, Germany (\href{mailto:oleh.melnyk@tum.de}{oleh.melnyk@tum.de})}
}
\begin{document}

\maketitle
\vspace{-3mm}

\begin{abstract}
We prove new results about the robustness of well-known convex noise-blind optimization formulations for the reconstruction of low-rank matrices from underdetermined linear measurements. Our results are applicable for symmetric \emph{rank-one} measurements as used in a formulation of the \emph{phase retrieval} problem.

We obtain these results by establishing that with high probability rank-one measurement operators defined by i.i.d. Gaussian vectors exhibit the so-called Schatten-$1$ \emph{quotient property}, which corresponds to a lower bound for the inradius of their image of the nuclear norm (Schatten-$1$) unit ball.

We complement our analysis by numerical experiments comparing the solutions of noise-blind and noise-aware formulations. These experiments confirm that noise-blind optimization methods
exhibit comparable robustness to noise-aware formulations.

\textit{Keywords: low-rank matrix recovery, phase retrieval, quotient property, noise-blind, robustness, nuclear norm minimization}

\end{abstract}
\vspace{4mm}

%Highlights:
%\begin{enumerate}
%\item With high probability the Gaussian rank-one measurement operator exhibits the nuclear norm quotient property.  
%\item Low-rank matrices can be robustly recovered from Gaussian rank-one measurements via equality constrained nuclear norm minimization even if they are not positive semi-definite.
%\item Numerical simulations confirm that low-rank matrix recovery from Gaussian rank-one measurements via equality-constrained nuclear norm minimization and its inequality-constrained variant are comparably robust.
%\end{enumerate}
%\tableofcontents

\section{Introduction}\label{sec: introduction}

\subsection{Motivation and Literature Overview} \label{sec:intro:mot}
Initiated by the seminal works introducing the idea of compressive sensing \cite{CRT06a,CRT06b,Donoho:2006}, the problem of recovering structured data from an underdetermined system random linear measurements has been subject of intensive study in mathematics, signal processing, and computer science in recent years. 

Two classes of structural models that have proven particularly useful are variants of sparsity, where the signal is assumed to have only few significant coefficients in an appropriate basis, and low-rank models, where a matrix-valued signal is assumed to be well-approximated by a matrix of rank much less than the dimension. In both cases, the measurements are of the form
\begin{equation}\label{eq: gen measurements}
\q A X + w= b ,
\end{equation}
where $\q A$ is a (random) linear operator, which maps the unknown structured object $X$ to the observed measurements $b$ and $w$ is a noise. One typically assumes that the number of measurement is much smaller than the signal dimension, which is why such measurement scenarios (especially in combination with a sparsity assumption) are commonly referred to as {\em compressive sensing}. 

%In case of the sparse recovery, $X$ is a so-called sparse vector with only few non-zero coefficients in a given basis, and for low-rank matrix recovery, $X$ is a matrix of rank considerably smaller than the dimension. 
It is not difficult to show that for both the sparsity and the low-rank model, the object that is extremal with regards to the underlying structure, i.e., has fewest non-zero coefficients or smallest rank, respectively, is indeed the correct solution in many cases if there is no additive noise.

That is, in case of sparse recovery with $w=0$ one can often recover $X$ by solving
\[
\min_Z \norm{Z}_{\ell_0}  \text{ such that } \q{A} Z =  b
\]
with $\norm{Z}_{\ell_0}$ denoting the number of non-zero entries of $Z$, and for noiseless low-rank matrix recovery $X$ can typically be recovered via the problem
\[
\min_Z \operatorname{rank}Z  \text{ such that } \q{A} Z =  b.
\]

%,  \eqref{eq: gen measurements} is a linear system of equations such that  and $\q A$ a linear map. Then, $X$ can be obtained by solving
%\[
%\argmin_Z \norm{Z}_{\ell_0}  \text{ such that } \q{A} Z =  b
%\]
%with $\norm{Z}_{\ell_0}$ denoting the number of non-zero entries of $Z$, which is 
Both these problems, however, are non-convex optimization problems and, in general, NP-hard \cite{Natarajan95, Fazel.March2002}.

However, tractable reconstruction of $X$ via appropriate convex relaxations is provably possible \cite{Foucart.2013} if $\q A$ possesses the \emph{null space property} \cite{Cohen09} or the \emph{restricted isometry property} \cite{CT06}. If $X$ is sparse, one can reconstruct it in the noiseless case ($w = 0$) as a minimizer of  
\begin{equation}\label{eq: sparse rec}
\min_Z \norm{Z}_{\ell_1}  \text{ such that } \q{A} Z =  b,
\end{equation}
which can be efficiently solved numerically \cite{DE03}. If $X$ is of low rank, one can find it via the second order cone problem
\begin{equation}\label{eq: lowrank rec}
\min_Z \norm{Z}_{S_1}  \text{ such that } \q{A} Z =  b,
\end{equation}
where the nuclear norm $\|\cdot\|_{S_1}$ of a matrix is defined to be the sum of its singular values.

Subsequent works in the field gave an answer to the questions that followed:
\begin{enumerate}
\item[Q1] Will the reconstruction work if the assumption of  sparsity or low rank is only approximately satisfied \cite{CRT06b, Candes.2011}?  \label{en: 1} 
\item[Q2] Can $X$ be reconstructed robustly in the presence of noise with known strength as constraints in \eqref{eq: sparse rec} or \eqref{eq: lowrank rec} are no longer valid \cite{CRT06b, Candes.2011}? \label{en: 2} 
\item[Q3] What if the strength is unknown or underestimated \cite{Wojtaszczyk10,BA18}? \label{en: 3} 
\item[Q4] Can one blindly apply \eqref{eq: sparse rec} or \eqref{eq: lowrank rec} in the presence of noise and still obtain reasonable reconstruction \cite{Wojtaszczyk10,DeVoreWojtaszczyk09,GKKMR19}? \label{en: 4} 
\end{enumerate} 

The answer the to the first two question is ``yes'', if $\q A$ possesses the \emph{robust null space property} \cite{Foucart.2013}, which is known to be fulfilled for many random matrices with high probability, mostly matrices with independent rows.

The last two questions, however, cannot be answered affirmatively in general if $\mathcal{A}$ fulfills only a suitable \emph{restricted isometry property} \cite{Can08,Candes.2011} or \emph{robust null space property} \cite{Cohen09,Kabanava.2016} with respect to the underlying structure. On the other hand, if in addition $\mathcal{A}$ fulfills the so-called \emph{quotient property} \cite{Wojtaszczyk10,Foucart.2013} (see \Cref{sec:previousresults} below), the last two questions can be answered with ``yes'' in the case of the sparse recovery problem, i.e., if $X$ is a sparse vector. For low-rank matrix recovery answers to question \hyperref[en: 4]{Q3} are only available in limited classes of scenarios and \hyperref[en: 4]{Q4} has hardly been studied in the literature. %\textcolor{cyan}{ADD SOMETHING ABOUT QP FOR NUCLEAR NORM} %\CK{The first property is a modified version of the null space property, which promotes sparsity during the reconstruction process in the presence of noise and commonly is satisfied when the number of measurements is sufficiently high.} 
For sparse recovery, in contrast, these questions are much better understood, in parts also because the quotient property is closely related to well-studied questions in the field of high-dimensional geometry \cite{Gluskin89}.
	%and ensures the existence of a feasible point for noisy measurements in the optimization problem \eqref{eq: sparse rec} within distance proportional to the noise strength. %As the number of measurements increases, the number of constraints in \eqref{eq: sparse rec} increases what naturally leads to the upper bound on the number of measurements when performing a  reconstruction with $\Delta$. \OM{number of measurements talk seems odd in this paragraph}
   
As the null space property, the quotient property is typically established via random constructions, and the %key factors impacting the number of measurements are the structural complexity of $X$
 choice of the random distribution of $\q A$ is critical. In contrast to the null space property, however, these constructions often rely on the independence of the columns rather than the rows. The quotient property adapted for \cref{eq: sparse rec} has been studied in a number of works over the recent years,  
 %complexity of the proofs to show any of the mentioned properties is depends heavily on the distribution, which ranges from
 extending the seminal results for \mbox{Gaussian} matrices \cite{Gluskin89} to matrices with i.i.d. subgaussian entries such as Bernoulli \cite{LPRT05}, Weibull matrices \cite{Fou14}, log-concave distributions \cite{DGT09}, and, more recently, even more heavy-tailed distributions including Cauchy random vectors \cite{GKKMR19}.

When the random matrix has independent entries (and, consequently, both independent rows to yield the quotient property and independent rows to yield the null space property), these results can be combined with results about the null space property to yield noise-blind recovery guarantees, e.g., for Gaussian \cite{Wojtaszczyk10} and subgaussian random matrices \cite{DeVoreWojtaszczyk09}. As such bounds require a logarithmic number of finite moments \cite{ML17}, the resulting guarantees for noise-blind sparse recovery require slightly stronger conditions on the tail decay of the random matrix entries than the quotient property by itself. Similar results about null space and quotient properties hold for random measurement operators based on independent matrices with i.i.d. subgaussian entries and can be used to characterize the performance of \cref{eq: lowrank rec} for low-rank matrix recovery.

For most application scenarios 
of % compressed sensing \cite{Donoho:2006}, which is another well-known name for 
sparse recovery, 
the assumption of measurement operators with independent entries is too restrictive. In more realistic measurement scenarios, one encounters structural constraints imposed by the application. Examples for sparse recovery include Fourier structure \cite{RV08},  combinations of Fourier and wavelet structure \cite{BreakingCoherence,KW14} as motivated, e.g., by magnetic resonance imaging \cite{LDSP08}
 and convolutional structure as motivated, e.g., by channel identification \cite{Hauptetal, RRT12, KMR14}. Such structured scenarios are also preferred for reasons of reduced computational complexity \cite{Rauhut10}, but their theoretical analysis pose considerable challenges as compared to unstructured random ones \cite{Candes.2010b,Recht.2010}.
 
Structural constraints are even more significant in the low-rank matrix recovery as the matrices that constitute the linear operator $\q A$ of \cref{eq: gen measurements} are in many applications not modeled very accurately by random matrices with i.i.d. entries. For example, in recommender systems, it is common to model a user-product matrix as approximately low-rank \cite{Koren.2009}, which is known only through a subset of its entries, often assumed to be revealed at random locations \cite{Candes.2010, Candes.2010b}. This corresponds to a low-rank matrix recovery problem with an operator $\q A$ whose constituent matrices \emph{do not} have i.i.d. entries, but are outer products of random standard basis vectors. Other relevant structured measurement scenarios for low-rank matrix recovery include symmetric rank-one measurements (see also \Cref{sec:intro:rankone} below), as related to phase retrieval \cite{CandesEldarStrohmerVoro.2013}, often in combination with additional structure such as coded diffraction patterns \cite{CLS15} or non-symmetric rank-one and low-rank measurements as encountered in blind deconvolution \cite{ARR13} and blind demixing \cite{JKS18}, and various structural models related to quantum state tomography \cite{Gross.2010,Flammia.2012,Liu.2011}. 

Despite this large variety of relevant structured measurement scenarios, the study of the quotient property with structure is only in its beginnings. To our knowledge, there are two main contributions to mention here: \cite{BA18} has provided such an estimate for sparse recovery with Fourier structure, and \cite{Liu.2011} observed for low-rank matrix recovery from Pauli measurements the quotient property directly follows from orthogonality.

One should mention though that solving \eqref{eq: sparse rec} is not the only way to attempt noise-blind recovery. For measurement systems satisfying the restricted isometry property, recovery guarantees have been established for various greedy and greedy-type methods such as orthogonal matching pursuit \cite{Tropp07} and compressive sampling matching pursuit \cite{CoSaMP}, which do not require a prior knowledge of the noise level, but instead require an upper bound on the sparsity level, which may also not be available in all cases. Furthermore, the square root lasso, a variant of \eqref{eq: sparse rec} given by the minimization problem
\begin{equation}\label{eq: sqrtlasso}
\min_Z \norm{Z}_{\ell_1} + \lambda \|b - \q{A} Z\|_2
\end{equation}
has been shown to yield recovery with a parameter $\lambda$ independent of the noise level under restricted eigenvalue conditions \cite{BCW11}, and also under the robust null space property \cite{PetersenJung20}.

For the low-rank matrix recovery problem, the restricted isometry property has mainly been established for unstructured measurement systems (with the notable exception of Pauli measurements as relevant in quantum information theory \cite{Liu.2011}), so the guarantees for greedy and greedy-type methods have limited applicability. While an approach analogous to the square root lasso also allows for noise-blind recovery when the noise is random \cite{GaiffasKlopp17} -- for example for matrix completion -- no analysis based on the robust null space property analogous to \cite{PetersenJung20} is available yet. 
For the phase retrieval problem, where the solution is known to be positive semidefinite (see below for more details), it has been shown that noise-blind recovery at near-optimal rate can be achieved by completely ignoring the nuclear norm objective of \cref{eq: lowrank rec} and just optimizing the data fidelity over the positive semidefinite cone \cite{DemanetHand,Candes.2014,SlawskiLiHein:2015}. 

%ADD DETAILS... (Add here that greedy algorithms work, but require some estimate of the sparsity and the square root lasso will work, also the consistent reconstruction approach for phase retrieval does not require any estimate)

  Despite the availability of alternative approaches for a number of scenarios, however, we feel that understanding the quotient property of structured random measurement systems is of interest both from the mathematical point of view -- given the fundamental role of algorithms \eqref{eq: sparse rec} and \eqref{eq: lowrank rec} as a benchmark -- and from the viewpoint of applications, given that these approaches form the basis of competitive solution approaches such as Iteratively Reweighted Least Squares \cite{Daubechies10,Fornasier11,Lai13,KS18}, and hence their understanding sheds light on these methods as well.

  %Commonly there is an observed trade-off between the complexity of the measurement operator $\q A$ and the number of measurements required for successful reconstruction. 
 %The interest in more structured random models is mainly driven by the practical experimental setups and computational complexity needs \cite{Rauhut10}. 

%In this paper, we study a low-rank matrix recovery, the ideological counterpart of the sparse recovery. Problems of this type naturally arise in many applications: In recommender systems, it is of interest to find a low-rank matrix compatible with partial information \cite{Koren.2009,Candes.2009,Candes.2010,Candes.2010b}. In quantum state tomography, a positive semidefinite low rank matrix is to be identified from linear measurements \cite{Gross.2010,Flammia.2012}.

\subsection{Rank-One Measurements of Low-Rank Matrices} \label{sec:intro:rankone}

In this article, we study the recovery of low-rank matrices from measurements $\q A$ described by random, symmetric rank one matrices, which constitutes a structured random measurement system that arises in many applications.
%
%ollow the program of studying structured random measurement systems by focusing on the recovery of low-rank matrices from measurements $\q A$ by described by random rank-one matrices. % the case of similar measurement matrices $A_j$ that are of \emph{rank-one}. 

Such measurements arise, for instance, in covariance estimation \cite{Leus.2011,Ariananda.2012,Chen.2015}, where the goal is to recover the covariance matrix $X$ of a random distribution from a quadratic sketches, and in the recovery of images from phaseless measurements \cite{Candes.2011,Candes.2013,Candes.2014,Kueng.2017,Kabanava.2016}, which is relevant for example in Fourier ptychographic microscopy \cite{TianLiRamWal.2014}: If an image is represented by a vector $x \in \bb C^{n}$, phaseless, noisy measurements reflecting only the squared magnitude $|a_{j}^* x|^2$ of the scalar products $a_{j}^* x$  are measured such that
\begin{equation} \label{eq:lowrankrecovery:prob}
b_j = \q{A} (x x^*)_j  + w_j = \langle A_j, x x^* \rangle_F + w_j
\end{equation}
for all $j= 1, \ldots, m$ where the \emph{measurement matrices} $A_j := a_j a_j^*$ are \emph{Hermitian rank-one} since for each $j$,
\begin{equation} \label{eq:phaseretrieval:measurements}
b_j - w_j = |a_j^*x|^2  = a_j^* (x x^*) a_j = \tr(a_j a_j^* x x^*) = \langle a_j a_j^*, x x^* \rangle_F = \langle A_j, x x^* \rangle_F = \langle A_j, X \rangle_F
\end{equation}
with $X = x x^*$ being a Hermitian rank-one matrix \cite{CandesEldarStrohmerVoro.2013}. Beyond that, we also consider the recovery of Hermitian matrices $X$ of small rank $r>1$.% that is small, but higher than one, for which the different coordinates of \cref{eq: gen measurements} are
%\begin{equation} \label{eq:measop:rankone} 
%b_j = \q{A} (X)_j  + w_j = \langle A_j, X \rangle_F + w_j,   \text{ for all } j= 1, \ldots, m
%\end{equation}
%with $A_j = a_j a_j^*$ for all $j = 1,\ldots, m$.

The randomness of such measurements is described by the random distribution of the vectors $a_j$. In this paper, we study the case of independent standard complex Gaussian random vectors $\{a_j: j=1,\ldots,m\}$.%for $j = 1, \ldots, m$.

For this measurement model, we briefly review prior results addressing the questions \hyperref[en: 1]{Q1}-\hyperref[en: 4]{Q4} with respect to the tractable program based on nuclear norm minimization \cref{eq: lowrank rec}: Let $\q H_n$ denote set of Hermitian $(n \times n)$ matrices.
%With $\q H_n$ denoting set of Hermitian $(n \times n)$ matrices, the operator $\q A$ is a map from $\q H_n$ to $\bb R^{m}$ and with this notation, one may recover the matrix of interest in the noiseless case by solving the \emph{rank minimization} problem 
%\begin{equation}\label{eq: rec unrelaxed}
%\min_{Z \in \q H_n} 
%\left\{\rank(Z) \text{ such that } \q{A} (Z) =  b  \right\}. 
%%\quad \text { or } \quad  \min_{Z \succeq 0} \left\{\rank(Z) \text{ such that } \q{A} (Z) =  b  \right\}.
%\end{equation}
%However, these rank optimization problems are known to be NP-hard in general \cite{Fazel.March2002}. On the other hand, a tractable convex optimization formulation that solves the original problem under certain conditions can be achieved by minimizing the \emph{nuclear norm} \cite{Candes.2010b,Recht.2010} such that
%\begin{equation}\label{eq: rec noiseless}
%\Delta(b) := \argmin \left\{ \norm{Z}_{S_1},\ Z \in \q H_n \text{ such that } \q{A} Z =  b  \right\},
%\end{equation}
%where the non-convex rank objective is replaced by the convex function $\norm{Z}_{S_1} = \sum_{i} \sigma_i(Z)$, which sums the singular values of the matrix $Z$. The formulation \cref{eq: rec noiseless} has the advantage that it is equivalent to a semidefinite program and can be solved rather efficiently by a variety of available solvers \cite{MazumderHastieTibshirani.2010,MichaelGrantandStephenBoyd.2020}. 
%At this point one can repeat the questions \hyperref[en: 1]{Q1}-\hyperref[en: 4]{Q4} for low-rank matrix recovery. 
It has been shown that for this model, the solution of \cref{eq: lowrank rec}, i.e., the nuclear norm minimizer 
\begin{equation}\label{eq: rec noiseless}
\Delta(b) := \argmin \left\{ \norm{Z}_{S_1},\ Z \in \q H_n \text{ such that } \q{A} Z =  b  \right\},
\end{equation}
provides good reconstruction even if $X$ is only approximately low-rank \cite{Kueng.2017,Kabanava.2016}, resulting in a positive answer to \hyperref[en: 1]{Q1}.  Similar results \cite{Candes.2013,Chen.2015} have been obtained for the optimization problem $\Delta^{\succeq 0}$ , a variant of $\Delta$ with an additional positive semidefinite constraint on $Z$. The positive semidefinite constraint significantly reduces the search space. However, one
%. As a drawback, we 
no longer %considers
minimizes over a subspace of matrices, but a convex cone.
 
In \cite{Kueng.2017} authors introduce the recovery method
\begin{equation}\label{eq: rec noisy}
\Delta_{q,\eta}(b) := \argmin \left\{ \norm{Z}_{S_1}, Z \in \q H_n \text{ such that } \norm{\q{A} Z - b }_{q} \le \eta \right\},
\end{equation}
a variant of $\Delta$ where its equality constraint is replaced by an inequality constraint on the residual error in terms of $\ell_q$-norm. If $\eta$ is chosen compatibly with the noise level $\|w\|_{\ell_q}$, i.e., if we encounter noise of \emph{known level}, \cite{Kueng.2017} shows that $\Delta_{q,\eta}$ robustly  reconstructs $X$, answering \hyperref[en: 2]{Q2} affirmatively. However, an analysis of the performance of $\Delta$ in the presence of noise, i.e., an answer to \hyperref[en: 4]{Q4} has not been achieved.

% another variant of $\Delta$, see \cref{eq: rec noisy} for more details, which robustly reconstructs $X$ in the presence of noise with known strength and, thus, answers \hyperref[en: 2]{Q2} affirmatively.

%\CK{(INCLUDE THIS IN A SHORT WAY HERE:) As discussed in the introduction, it is quite well-understood when the nuclear norm minimizer $\Delta$ of \cref{eq: rec noiseless} can successfully recover $X$ from noiseless measurements. However, an analysis of the performance of $\Delta$ in the presence of noise, i.e., an answer to \hyperref[en: 4]{Q4} has not been achieved using the same methods as for the noiseless case. As a remedy, an alternation of $\Delta$ was proposed in \cite{Kueng.2017} to accommodate the noisy case. In this method, the equality constraint is replaced by an inequality constraint on the residual error in terms of $\ell_q$-norm, that is   
%
%where the parameter $\eta$ should be chosen as an upper bound on the noise strength $\norm{w}_{q}$. {\color{cyan} In this sense, $\Delta_{q,\eta}$ is designed as a {\em noise aware} reconstruction method.}}

 Analogous results for rank-one and general semidefinite matrices $X$ have been established in \cite{Candes.2013} and \cite{Chen.2015}, respectively, and have been extended to simultaneous stability in terms of both measurement noise and structure violation later developed in \cite{Kabanava.2016}. We review these results in detail in \Cref{sec:previousresults}.  

An alternative approach for general Hermitian low-rank matrices is to minimize an of $S_1$-norm Lasso-style \cite{Tib96} objective 
\begin{equation}\label{eq: rec Lasso}
\Delta^{\text{Lasso}}_{q,\mu}(b) := \argmin_{Z \in \q H_n} \left\{ \norm{\q{A}Z - b }_{\ell_q}^q + \mu \norm{Z}_{S_1} \right\},
\end{equation}
as considered, e.g., in \cite{Candes.2011,Liu.2011}. It is not hard to show that for an appropriate parameter $\mu$, the minimizer will exhibit similar behavior of \cref{eq: rec noisy}.

As observed in \cite{DemanetHand,Candes.2014}, the nuclear norm objective can be omitted provided that positive semidefiniteness is enforced: In absence of any measurement noise, there is typically only one positive semidefinite solution; likewise in noisy scenarios, an accurate reconstruction can be obtained by choosing the positive semidefinite matrix that best fits the data in $\ell_q$-norm such that
\begin{equation} \label{eq:least:ellq}
\Delta_{q}^{\succeq 0}(b): = \argmin_{Z \succeq 0} \norm{\mathcal{A}Z - b}_{\ell_q}.
\end{equation}
If the matrix of interest $X$ is positive semidefinite, arguably a situation of particular interest in many applications, this provides an affirmative answer to \hyperref[en: 3]{Q3}, as confirmed by the recovery guarantees of \cite{Candes.2014} (for the case $q=1$) and \cite{Kabanava.2016}. However, the positive semidefiniteness of $X$ is crucial for $\Delta_{q}^{\succeq 0}$ and its analysis in \cite{Candes.2014, Kabanava.2016} does extend directly beyond that case.
%, it is unclear how this can be extended to the recovery of low-rank matrices in general.

 %develop recovery guarantees for the reconstruction of rank-one matrices via $\Delta_{1}^{\succeq 0}$ and  quantify the specifically the Frobenius ($p=2$) error of the reconstruction. These results have been extended and generalized in \cite{Kabanava.2016} to the recovery of positive semi-definite matrices of low rank via $\Delta_{q}^{\succeq 0}$ 
% However, as the answer does not proceed via \hyperref[en: 4]{Q4}, this approach is not expected to be generalized to the recovery of matrices that are \emph{not} positive (or negative) semidefinite.

%%It turns out that its hyperparameter $\mu \ge 0$ can be chosen \emph{independently} of the noise level; as we detail in \Cref{app:proof:noiseless:lasso}, in this sense, $\Delta^{\text{Lasso}}_{q,\mu}$ provides an affirmative answer to \hyperref[en: 3]{Q3} when $\mu$ is chosen sufficiently small. On the other hand, the noise-dependent error scales with $\q O( \mu^{-1})$ and thus it will, if $\mu$ is too small, blow up, restricting the noise-blind application of the Lasso reconstruction. This leaves potential space for improvement.  
In fact, we are not aware of any approaches to \hyperref[en: 4]{Q4} nor any alternative approaches to \hyperref[en: 3]{Q3} that would cover general Hermitian matrices.
%any answers in the literature addressing \hyperref[en: 3]{Q3} for general Hermitian matrices. %answers to \hyperref[en: 3]{Q3} for general Hermitian matrices are rare in the literature and not complete. 
%However, we did not find any results for the recovery of Hermitian low-rank matrices in the literature. 
While arguably this scenario has not found many applications yet, we still believe that it is an important case for a thorough understanding of low-rank matrix recovery problems. We expect that its analysis can help to pave the way for an understanding of other structured measurement scenarios without positive semidefiniteness such as randomized blind deconvolution \cite{ARR13}.
% \hyperref[en: 4]{Q4} remained unanswered until now as well. 

\subsection{Our Contribution and Outlook}
In this article, we provide answers for questions \hyperref[en: 3]{Q3} and \hyperref[en: 4]{Q4} for random rank-one measurements and general Hermitian matrices $X$, thus including the case of Hermitian matrices that are not positive semidefinite.   

In \Cref{thm: QP}, we show that with high probability the rank-one measurement operator $\mathcal{A}: \q H_n \to \R^m$ defined by independent Gaussian vectors $a_j$ fulfills, under appropriate conditions, the \emph{nuclear norm quotient property} \cite{Candes.2011,Liu.2011}. Our proof technique for \Cref{thm: QP} is entirely different from techniques used to establish the various results for the quotient property in compressed sensing and low-rank matrix recovery mentioned in \Cref{sec:intro:mot}, and might be of independent interest. As our techniques are tailored to a low-rank recovery problem from structured measurements, we expect that they will prove useful to study other structured measurement scenarios such as randomized blind deconvolution.

In analogy to sparse recovery, our findings for rank-one measurements (see \Cref{col: final}) entail recovery guarantees for the noise-blind nuclear norm minimizer $\Delta$ and its inequality-constrained variant $\Delta_{q,\eta}$ with incompatible choice of parameter $\eta$.  These results answer \hyperref[en: 3]{Q3} and \hyperref[en: 4]{Q4} affirmatively.

%{\color{cyan} In analogy to sparse recovery, our findings entail noise-blind recovery guarantees for $\Delta$ and other common choices of the reconstruction methods for rank-one measurements, answering \hyperref[en: 3]{Q3} and \hyperref[en: 4]{Q4} affirmatively, see \Cref{col: final}.   }

%As a %tool to obtain an understanding
%{\color{cyan} key ingredient} of noise-blind recovery and to answer \hyperref[en: 3]{Q3} and \hyperref[en: 4]{Q4} affirmatively, we show 
%\Cref{thm: quotient+NSP to optimality}, which implies that the nuclear norm quotient property can be combined with well-established \emph{rank null space properties} \cite{RechtXuHassibi:2011,Kabanava.2016} to obtain guarantees about the robustness of reconstruction maps which already perform well in the noiseless case or in the case of controlled noise.
%
%By combining \Cref{thm: QP} and \Cref{thm: quotient+NSP to optimality}, our analysis culminates in \Cref{col: final}, which establishes recovery guarantees for $\Delta$ and other common choices of the reconstruction methods for rank-one measurements. 

The rest of the paper is structured as following. The next section establishes basic notation and definitions used throughout the paper. In  \Cref{sec:previousresults}, we discuss other possible choices of reconstruction maps beyond $\Delta$ and results regarding their performance provided in \cite{Kabanava.2016}. 
After presenting our results in \Cref{sec: results}, 
%with \Cref{sec:noiseblind:guarantees} containing general case and \Cref{sec:rankone:results} dedicated to the rank-one random Gaussian measurements. 
we perform numerical experiments in \Cref{sec: numerics} to shed light on the actual difference in the performance of various reconstruction methods for noisy measurements.
%To the best of our knowledge, a numerical comparison of the noise robustness of different reconstruction methods based on convex optimization has not been performed in the literature.
Finally, we detail the proofs of our results in \Cref{sec: proofs}. %the reader can find the proofs of our results.

\subsection{Preliminaries and Notation}\label{sec: preliminaries}
%\CK{Maybe separate the preliminaries and previous results? I have the impression that some preliminaries are quite basic and can maybe introduced, if necessary, whenever needed, e.g. before our theorem or in the proofs.}
%
%\textcolor{red}{
%\begin{itemize}
%\item Discussion about distribution of $a_j$ (maybe in conclusion/\OM{outlook}): How about sub-Gaussian? Asymmetric measurements? \\
%\item issues with Danzig selector $\Delta_{q,\eta}(b)$ and alternative Lasso + result by Kabanava\\
%	\item Pauli measurements and use rank quotient property for Dantzig/Matrix Lasso: \cite{Liu.2011}
%	\item \cite{CaiZhang:2015} (studies asymmetric rank-one measurement and a nuclear norm estimator with two convex constraints. Q: relevant for us? \OM{outlook. Did not found it in the paper. Incorect reference?})
%	\item Psd-constrained least-squares estimator \cite{SlawskiLiHein:2015,Kabanava.2016} or psd-constrained $\ell_1$-estimator \cite{Candes.2014}. \OM{You touch them above and we don't really need any results concerning them.}
%\end{itemize}}

%\subsubsection{Basics of matrix analysis}
We now set up some notation that will be used throughout the paper. We denote the Frobenius (Hilbert-Schmidt) product $\langle \cdot, \cdot \rangle_F$ of two matrices $B, D \in \bb{C}^{n_1 \times n_2}$ by $\langle B, D \rangle_F = \tr( B D^* )$.
%\textcolor{red}{Should we immediately concentrate on the Hermitian matrix space?}
For any matrix  $X \in \bb{C}^{n \times n}$, we use the following notation for its \emph{singular value decomposition (SVD)} 
\[
X = \sum_{j=1}^{ n } \sigma_{j}(X) u_j v_j^* = U \Sigma V^*,
\]
where $U$ and $V$ are $n \times n$ unitary matrices and $\Sigma \in \bb{C}^{n_1 \times n_2}$ is diagonal with entries non-negative entries $\sigma_{j}(X)$, the singular values of $X$ ordered in decreasing order. The rank of matrix $X$ is given by the number of positive  singular values. 

For $1 \le p \le \infty$, the \emph{Schatten-$p$ norm} of the matrix $X \in \bb{C}^{n \times n}$ is the $\ell_p$-norm of its singular values, that is
\[
\norm{X}_{S_p} := \norm{ (\sigma_1(X), \ldots, \sigma_n (X) )}_{\ell_p}.
\]  
%We note that the nuclear norm $\norm{\cdot}_*$ is the Schatten-1 norm $\norm{\cdot}_{S_1}$.
We will always write $\norm{\cdot}_{p} = \norm{\cdot}_{S_p}$ for matrices and $\norm{\cdot}_{p} = \norm{\cdot}_{\ell_p}$ for vectors. 

The \emph{dual norm} $\norm{\cdot}_{*}$ associated with $\norm{\cdot}$ 
%in normed space $(\q X, \norm{\cdot})$ 
is given by $\norm{v}_{*} = \sup_{ \norm{z} = 1} z^* v.$  
For $\ell_p$- and $S_p$-norms, the associated dual norms are $\ell_{p^*}$ and $S_{p^*}$ respectively, where $p^{*}$ is a H{\"o}lder dual of $p$, that is $1/p + 1/p^* = 1$.  We recall that H{\"o}lder's inequality holds for Schatten-$p$ norms \cite[p. 92, 95]{Bhatia.1997} such that
\[
|\langle X, Z \rangle_F| \le \norm{X}_{p} \norm{Z}_{p^*},
\quad \text{ for all } Z,X \in \bb{C}^{n \times n} \text{ and } 1 \le p \le \infty.
\] 
The \emph{best rank-$r$ approximation} $X_r$ of $X$ and its \emph{complement} $X_r^c$ are defined as
\[
X_{r}:= \sum_{j=1}^{r} \sigma_{j}(X) u_j v_j^*,
\ \text{ and }\ 
X_{r}^{c} := X - X_{r} = \sum_{j=r+1}^{n} \sigma_{j}(X) u_j v_j^*,
\] 
and $X_{r}$ is a matrix which minimizes the projection error on the manifold of rank $r$ matrices, that is
\begin{equation}\label{eq: rank proj}
\norm{X_r^c}_1 =  \norm{X - X_r}_1 
=\min_{\substack{ Z \in \bb{C}^{n \times n}, \\ \rank(Z) = r}}  \norm{X - Z}_1.
\end{equation}

The measurement operator $\q A$ is a linear operator mapping Hermitian matrices $\q H_n$ to real vectors $\bb{R}^m$. The \emph{kernel (null space)} and the \emph{range} of $\q A$ are denoted by
\begin{align*}
\ker \q A & := \{ Z \in \q H_n : \q A(Z) = 0\}, \\
\Ran \q A & := \{ v \in \bb{R}^m : \text{ there exists } Z \in \q H_n \text{ such that } \q A(Z) = v  \},
\end{align*}
respectively.

%In the following statements we will denote by $\q A$ the measurement operator with rank-1 Gaussian measurement matrices and by $\q B$ an abstract measurement operator
%Finally, we define a scaled version of the measurement operator $\q A_s := \frac{1}{\sqrt m} A$

%The adjoint of the measurement operator $\q A$ is defined as
%\[
%\left \langle \q{A} Z, v \right \rangle = \left \langle Z, \q{A}^* v \right \rangle_F
%\]
%for all $Z \in \q H_n$ and for all $v \in \bb{R}^{m}$. 

\subsection{Technical foundations} \label{sec:previousresults}
In this section, we review previous results about recovery guarantees for convex recovery methods for low-rank recovery, with a particular focus on those suitable for rank-one measurement operators as defined in \cref{eq:lowrankrecovery:prob}. 
 
An important tool in the performance analysis of recovery methods $\Delta$ and $\Delta_{q,\eta}$ as introduced in \cref{eq: rec noiseless} and \cref{eq: rec noisy} in \Cref{sec:intro:rankone}  is the \emph{robust rank null space property} \cite{Kabanava.2016}, which is similar in its core to the robust null space property that is used to characterize bounds for $\ell_1$-type optimization methods for the sparse vector recovery problem \cite{Foucart.2013}. 
\begin{definition}[\cite{Kabanava.2016}, Definition 3.1] \label{def:rNSP}
For $p \ge 1$, the measurement operator $\q{A}: \q{H}_{n} \to \bb{R}^m$ satisfies the \emph{$S_p$-robust rank null space property with respect to norm $\norm{\cdot}$ on $\bb{R}^m$ of order $r$} with constants $0 < \rho <1$ and $\tau >0$ if for all $X \in \q{H}_{n}$, the inequality
\[
\norm{X_{r}}_{p} \le \frac{\rho} { r^{1 - \frac{1}{p}} } \norm{ X_{r}^{c}}_1 + \tau \norm{ \q{A} X }
\]
holds.
\end{definition}

%\textcolor{green}{Some other references for NSP for matrices?}
%Similarly to the null space property in the sparse recovery, $S_p$-robust rank null space property promotes low-rank solution of the optimization problems $\Delta$ and $\Delta_{q,\eta}$. 
The following theorem shows that the $S_p$-robust rank null space property is sufficient to establish the recovery guarantees for nuclear norm minimization problems such as $\Delta$ from \cref{eq: rec noiseless} and $\Delta_{q,\eta}$ from \cref{eq: rec noisy}, both in the noiseless case and in the presence of noise.

\begin{proposition}[{\cite[version of Theorem 3.1]{Kabanava.2016}}]\label{thm: Kabanava3.1}
For $q \ge 1$ and $1 \le p \le 2$, let the measurement operator $\q{A}: \q{H}_{n} \to \bb{R}^{m}$ satisfy the $S_2$-robust rank null space property with respect to norm $\norm{\cdot}_{q}$ of order $r$ with constants $0 < \rho < 1$ and $\tau > 0$. Then, for any $X \in \q{H}_{n}$ and any noise $w \in \bb{R}^{m}, \norm{w}_{q} \le \eta$, the inequalities
\[
\norm{X - \Delta \left( \q{A} X \right) }_p \le \frac{ 2 (1 + \rho)^2 } { (1 - \rho) r^{1 - 1/p} } \norm{X_{r}^{c} }_1
\]
and
\begin{equation} \label{eq: rec guarantees eta}
\norm{X - \Delta_{q,\eta} \left( \q{A} X + w \right) }_p \le \frac{ 2 (1 + \rho)^2 } { (1 - \rho) r^{1 - 1/p} } \norm{X_{r}^{c} }_1 + \frac{2 \tau (3 + \rho)}{ 1 - \rho } r^{1/p - 1/2} \eta
\end{equation}
hold. 
%The same bounds holds for the case of Hermitian matrix measurement and reconstruction.
\end{proposition}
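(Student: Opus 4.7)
I would argue as follows. Let $Z^*$ denote a minimizer of $\Delta(\q{A}X)$ or $\Delta_{q,\eta}(\q{A}X+w)$, respectively, and set $H := Z^* - X$. For $\Delta$, feasibility forces $\q{A}H = 0$. For $\Delta_{q,\eta}$, the condition $\norm{w}_q \le \eta$ makes $X$ itself feasible and the triangle inequality gives $\norm{\q{A}H}_q \le 2\eta$. In either case, the optimality $\norm{Z^*}_1 \le \norm{X}_1$ yields the key inequality $\norm{X+H}_1 \le \norm{X}_1$.

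The next step extracts a cone-type condition for $H$. Let $X_r = U\Sigma V^*$ be the SVD of the best rank-$r$ approximation of $X$, and split $H = H_T + H_{T^\perp}$ with respect to the tangent space $T = \{UA^* + BV^* : A, B \in \bb{C}^{n \times r}\}$ at $X_r$. The defining property of $T^\perp$ is that its elements have row and column spaces orthogonal to those of $X_r$, which implies the additive identity $\norm{X_r + H_{T^\perp}}_1 = \norm{X_r}_1 + \norm{H_{T^\perp}}_1$. Writing $X + H = (X_r + H_{T^\perp}) + (X_r^c + H_T)$, applying the reverse triangle inequality, and substituting the optimality relation, one arrives at $\norm{H_{T^\perp}}_1 \le \norm{H_T}_1 + 2\norm{X_r^c}_1$. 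A singular-value comparison then refines this to $\norm{H_r^c}_1 \le \norm{H_r}_1 + 2\norm{X_r^c}_1$, where $H_r$ denotes the best rank-$r$ approximation of $H$.

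Combining this cone bound with the $S_2$-robust rank null space property of order $r$ applied to $H$, and using $\norm{H_r}_1 \le \sqrt{r}\norm{H_r}_2$ since $\rank(H_r) \le r$, I obtain $\norm{H_r}_1 \le \rho\norm{H_r^c}_1 + \sqrt{r}\tau\norm{\q{A}H}_q$. Plugging into the cone inequality and rearranging gives $(1-\rho)\norm{H_r^c}_1 \le 2\norm{X_r^c}_1 + \sqrt{r}\tau\norm{\q{A}H}_q$, which combined with $\norm{H}_1 = \norm{H_r}_1 + \norm{H_r^c}_1$ leads to $\norm{H}_1 \le \tfrac{2(1+\rho)}{1-\rho}\norm{X_r^c}_1 + \tfrac{2\sqrt{r}\tau}{1-\rho}\norm{\q{A}H}_q$.

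To pass to the Schatten-$p$ norm for $1 \le p \le 2$, I split $\norm{H}_p \le \norm{H_r}_p + \norm{H_r^c}_p$. For the first summand, $\rank(H_r) \le r$ gives $\norm{H_r}_p \le r^{1/p - 1/2}\norm{H_r}_2$. For the second, I organize the singular values of $H_r^c$ into consecutive dyadic blocks of size $r$: the largest singular value of each block is bounded by the average of the previous block, which (after summing and using $\norm{\cdot}_{\ell_p} \le \norm{\cdot}_{\ell_1}$ on length-$r$ vectors) yields the Stechkin-type estimate $\norm{H_r^c}_p \le r^{1/p-1}\norm{H}_1$. Substituting the NSP bound on $\norm{H_r}_2$ together with the previously derived control on $\norm{H}_1$, and using $\norm{\q{A}H}_q \le 2\eta$, produces the stated inequality after elementary bookkeeping. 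I expect the main technical obstacle to be the refinement step in the second paragraph, which turns the tangent-space cone bound $\norm{H_{T^\perp}}_1 \le \norm{H_T}_1 + 2\norm{X_r^c}_1$ (where $H_T$ may have rank up to $2r$) into one directly involving $\norm{H_r}_1$ and $\norm{H_r^c}_1$ so as to cleanly mesh with the order-$r$ NSP.
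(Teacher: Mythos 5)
The paper does not actually prove this proposition---it imports it from \cite{Kabanava.2016}---so the relevant comparison is with the proof given there. Your overall architecture (error matrix $H=Z^*-X$, feasibility of $X$, a cone condition on $H$, the order-$r$ null space property, and the Stechkin-type estimate $\norm{H_r^c}_p \le r^{1/p-1}\norm{H}_1$) is the right one, and your final bookkeeping does reproduce the stated constants once the cone condition $\norm{H_r^c}_1 \le \norm{H_r}_1 + 2\norm{X_r^c}_1$ is in hand. The genuine gap is exactly the step you flag yourself: this cone condition does not follow from the tangent-space bound $\norm{H_{T^\perp}}_1 \le \norm{H_T}_1 + 2\norm{X_r^c}_1$ by any ``singular-value comparison.'' The matrix $H_T$ has rank up to $2r$, the quantity $\norm{H_T}_1$ is in general not dominated by the Ky Fan norm $\norm{H_r}_1$ (the map $H \mapsto H_T$ is not even a nuclear-norm contraction), and $\norm{H_{T^\perp}}_1$ does not dominate $\norm{H_r^c}_1$; combining $\norm{H}_1 \le \norm{H_T}_1 + \norm{H_{T^\perp}}_1$ with your bound only yields $\norm{H_r^c}_1 \le 2\norm{H_T}_1 - \norm{H_r}_1 + 2\norm{X_r^c}_1$, which is strictly weaker than what you need. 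Staying on the Recht--Fazel--Parrilo tangent-space route would force you to assume a null space property of order $2r$ (or to peel $H_{T^\perp}$ into further rank-$r$ blocks), neither of which the hypothesis provides.

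The fix is to prove the cone condition directly, bypassing $T$ altogether, via Mirsky's (Lidskii-type) singular-value perturbation inequality $\sum_j \left|\sigma_j(A) - \sigma_j(B)\right| \le \norm{A-B}_1$. Applying it with $A = X$ and $B = -H$ (so that $A-B = X+H$ and $\sigma_j(-H)=\sigma_j(H)$) and splitting the sum at index $r$ gives $\norm{X+H}_1 \ge \sum_{j\le r}\bigl(\sigma_j(X)-\sigma_j(H)\bigr) + \sum_{j>r}\bigl(\sigma_j(H)-\sigma_j(X)\bigr) = \norm{X_r}_1 - \norm{X_r^c}_1 - \norm{H_r}_1 + \norm{H_r^c}_1$; together with the optimality relation $\norm{X+H}_1 \le \norm{X}_1 = \norm{X_r}_1 + \norm{X_r^c}_1$ this is precisely $\norm{H_r^c}_1 \le \norm{H_r}_1 + 2\norm{X_r^c}_1$. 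This is the route taken in \cite{Kabanava.2016}. With that substitution the remainder of your argument goes through verbatim and yields the constants $2(1+\rho)^2/(1-\rho)$ and $2\tau(3+\rho)/(1-\rho)$ exactly as stated.
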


Similar results involving weaker versions of the null space property were established for the recovery of the real symmetric positive semidefinite matrices \cite{Chen.2015}, recovery of general \cite{Liu.2011} and Hermitian matrices \cite{Flammia.2012} with Pauli measurements  and phase retrieval \cite{Candes.2013}. 

% \OM{COMMENTED OUT}
%{\color{cyan} I DON'T UNDERSTAND - THE HYPERPARAMETER DOES DEPEND ON THE NOISE, DOESN'T IT?}
%The requirement of the estimate $\eta$ is a crucial point for optimization problem $\Delta_{q,\eta}$ and when chosen too small, the results of  \Cref{thm: Kabanava3.1} no longer apply. On the contrary, when $\eta$ is too large, the set of feasible points explodes and solutions of $\Delta_{q,\eta}$ are no longer close to $X$ (see Section \ref{sec: numerics} for details).
%As a way to avoid noise estimation, another optimization problem is considered in \cite{Candes.2011,Liu.2011}, which instead minimizes the residual error and incorporates {\color{cyan} the} $S_1$-norm as a Lasso-style \cite{Tib96} penalty. Namely, it is given by
%\begin{equation}\label{eq: rec Lasso}
%\Delta^{\text{Lasso}}_{q,\mu}(b) := \argmin \left\{ \norm{\q{A} Z - b }_{\ell_q} + \mu \norm{Z}_{S_1}, Z \in \q H_n \right\},
%\end{equation}
%where $\mu \ge 0$ is a hyperparameter to be chosen. As we detail in \Cref{app:proof:noiseless:lasso}, it is possible to derive recovery guarantees for $\Delta^{\text{Lasso}}_{q,\mu}$ that are similar to ones of \Cref{thm: Kabanava3.1}.

Similarly to \Cref{thm: Kabanava3.1},  %are %\emph{not} formulated to hold for specific measurement operators $\mathcal{A}$, but rather hold for those fulfilling the 
 our noise-blind recovery guarantees also rely on the $S_2$-robust rank null space property. 
 %This shows that the $S_2$-robust rank null space property is a crucial property of the measurement operator to quantify the error of recovery methods for low-rank matrices. In view of this, we note that 
 The next theorem analyzes this property for rank-one Gaussian measurements and will hence be a crucial ingredient of our proof.
 %to expect if we want to robustly recover low-rank matrices via any method. 
 %For many other measurements scenarios, similar results can be found in \cite{Kabanava.2016}.

\begin{theorem}[{\cite[Section 6]{Kabanava.2016}}]\label{thm: NSP}
Let $\q A: \q H_n \to \bb R^{m}$ be a measurement operator defined by $m$ rank-one measurements of the form \eqref{eq:lowrankrecovery:prob} such that the entries of the vectors $a_j$ defining the measurement matrices $A_j = a_j a_j^*$ are i.i.d. standard complex Gaussian random variables. If
\[
m \ge c_1 \rho^{-2} r n,
\]
then the scaled measurement operator $\frac{1}{\sqrt m}\q A$ possesses the $S_2$-robust rank null space property with respect to the norm $\norm{\cdot}_{2}$ on $\bb{R}^m$ of order $r$ with constants $0 < \rho < 1$ and $\tau>0$ with probability at least $1 - e^{- \gamma_1 m}$, where $c_1,\gamma_1>0$ are absolute constants. 
\end{theorem}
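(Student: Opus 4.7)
The plan is to reduce the $S_2$-robust rank null space property to a uniform lower bound on $\|\mathcal{A}X\|_2/\sqrt{m}$ over a suitable cone, and then establish that bound via Mendelson's small ball method. More precisely, a short algebraic manipulation shows that the desired inequality
\[
\|X_r\|_2 \le \frac{\rho}{\sqrt{r}}\|X_r^c\|_1 + \tau \left\|\tfrac{1}{\sqrt{m}}\mathcal{A}X\right\|_2
\qquad\forall X\in\mathcal{H}_n
\]
is equivalent (up to constants) to the assertion that
\[
\inf_{X\in T_r}\left\|\tfrac{1}{\sqrt{m}}\mathcal{A}X\right\|_2 \;\ge\; c(\rho,\tau),
\qquad
T_r := \left\{X\in\mathcal{H}_n:\ \|X\|_2=1,\ \|X_r^c\|_1\le \kappa\sqrt{r}\,\|X_r\|_2\right\},
\]
for a suitable $\kappa$. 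So the task is to control $\inf_{X\in T_r}\|\mathcal{A}X\|_2$ from below uniformly.

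Next I would invoke Tropp's formulation of Mendelson's small ball method for rank-one measurements: with probability at least $1-\mathrm{e}^{-\gamma m}$,
\[
\inf_{X\in T_r}\left\|\tfrac{1}{\sqrt m}\mathcal{A}X\right\|_2
\;\ge\; \xi\sqrt{m}\,Q_{2\xi}(T_r)-\frac{2 W_m(T_r)}{\sqrt m}-\xi\, t,
\]
where $Q_\xi(T_r)=\inf_{X\in T_r}\mathbb{P}(|\langle aa^*,X\rangle_F|\ge \xi)$ is the marginal tail function and $W_m(T_r)=\mathbb{E}\sup_{X\in T_r}\langle H_m,X\rangle_F$ is the empirical width with $H_m=m^{-1/2}\sum_j \varepsilon_j a_j a_j^*$. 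It then suffices to lower bound $Q_{2\xi}(T_r)$ by a universal constant and upper bound $W_m(T_r)\lesssim \sqrt{rn}$.

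For the small ball function, since $a$ is standard complex Gaussian and $X\in\mathcal{H}_n$ is Hermitian with $\|X\|_2=1$, a Paley--Zygmund argument applied to the random variable $Y=a^*Xa-\mathrm{tr}(X)$ provides
\[
\mathbb{P}\bigl(|Y|\ge \xi\bigr)\ge c_0
\]
for absolute constants $\xi,c_0>0$, using that $\mathbb{E}|Y|^2=\|X\|_2^2$ and $\mathbb{E}|Y|^4\le C\|X\|_2^4$ by the Gaussian moment identities for quadratic forms (Wick/Isserlis). Since $\langle A_j,X\rangle_F=a_j^*Xa_j$, this yields $Q_{2\xi}(T_r)\ge c_0$ uniformly over $T_r$. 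For the width, one notes that every $X\in T_r$ satisfies $\|X\|_1\le (1+\kappa)\sqrt r$ by combining $\|X_r\|_1\le\sqrt{r}\|X_r\|_2\le \sqrt r$ and the defining inequality; hence
\[
W_m(T_r)\;\le\;(1+\kappa)\sqrt{r}\,\mathbb{E}\|H_m\|_\infty\;\lesssim\;\sqrt{rn},
\]
the last step following from the standard bound $\mathbb{E}\|m^{-1/2}\sum_j \varepsilon_j a_j a_j^*\|_\infty\lesssim\sqrt{n}$ for i.i.d.\ complex Gaussian $a_j$.

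Plugging these into Mendelson's bound, when $m\ge c_1\rho^{-2}rn$ with $c_1$ sufficiently large we obtain $\inf_{T_r}\|m^{-1/2}\mathcal{A}X\|_2 \ge c(\rho,\tau)$ on an event of probability at least $1-\mathrm{e}^{-\gamma_1 m}$, which translates back to the robust rank null space property. The main technical obstacle is the small ball lower bound: because $X$ need not be positive semidefinite, cancellations in $a^*Xa$ can in principle shrink $|Y|$, so one must carefully verify that the fourth-moment control survives for general Hermitian $X$ and that the mean-subtracted variable $Y$ (rather than $a^*Xa$ itself) still governs the inradius argument; once this is in hand, the remaining steps are routine concentration and width estimates.
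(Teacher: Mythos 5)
The paper does not prove this statement itself but imports it from \cite{Kabanava.2016}, and your sketch follows essentially the same route as that reference: reduce the robust rank null space property to a uniform lower bound of $\|\mathcal{A}X\|_2$ over the cone $T_r$, apply Mendelson's small ball method in Tropp's form, bound the marginal tail function via Paley--Zygmund, and bound the empirical width using $\|X\|_{S_1}\lesssim\sqrt{r}$ on $T_r$ together with $\E\|m^{-1/2}\sum_j\varepsilon_j a_ja_j^*\|_\infty\lesssim\sqrt{n}$. The one subtlety you flag is real but standard to resolve: rather than mean-subtracting, apply Paley--Zygmund directly to $|a^*Xa|^2$, using $\E|a^*Xa|^2=(\tr X)^2+\|X\|_{S_2}^2\ge\|X\|_{S_2}^2$ and the matching fourth-moment bound $\E|a^*Xa|^4\le C\bigl((\tr X)^2+\|X\|_{S_2}^2\bigr)^2$, which holds for general Hermitian $X$.
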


Another important component of the noise-blind recovery will be the following property.
\begin{definition}[$S_q$-quotient property \cite{Candes.2011,Liu.2011}] \label{def: qp}
Given $q \ge 1$, a measurement operator $\q{A}: \q{H}_{n} \to \bb{R}^{m}$ is said to possesses the \emph{$S_q$-quotient property} with constant $d$ and rank $r_*$ relative to norm $\norm{\cdot}$ on $\bb{R}^{m}$ if for all $w \in \bb{R}^{m}$ there exists $U \in \q{H}_{n}$ such that
\[
\q{A} U = w \text{ and } \norm{U}_{q} \le d r_{*}^{ \frac{1}{q} - \frac{1}{2}} \norm{w}.
\]
\end{definition}
Beyond an algebraic relationship on the elements of the quotient space $S_q / \ker \mathcal{A}$, the $S_q$-quotient property has also a geometric interpretation: It is easy to check that \Cref{def: qp} is equivalent to the measurement operator $\q{A}$ fulfilling
\begin{equation} \label{eq: QP: geometric}
\q{A}(B_{S_q}^{n \times n}) \supset \frac{1}{d} B_{2}^{m}, 
\end{equation}
where $B_{S_q}^{n \times n} = \{X \in \q{H}_n: \| X\|_{S_q} \leq 1\}$ and $B_{2}^{m} = \{y \in \R^m: \| y\|_{2} \leq 1\}$ are the nuclear norm unit ball and $\ell_2$-unit ball, respectively. In this sense, the quotient property means that the an $\ell_2$-ball of radius $1/d$ is contained in the image of the $S_q$-unit ball with respect to the measurement operator. In other words, the quotient property provides a lower bound on the inradius of the image of the $S_q$-unit ball.

The $S_q$-quotient property has been used in the literature on low-rank matrix recovery before: The $S_1$-quotient property coincides with the \emph{Nuclear Norm Quotient property (NNQ)} of \cite{Candes.2011}, which has been used to analyze the performance of the matrix Lasso $\Delta^{\text{Lasso}}_{q,\mu}$ and another related reconstruction map, the Matrix Dantzig selector \cite[(I.3)]{Candes.2011}, in the presence of the Gaussian noise with known variance. In that paper, the $S_1$-quotient property was shown for random measurement operators \cref{eq:lowrankrecovery:prob} that are defined by independent measurement matrices $A_j$ with i.i.d. Gaussian or subgaussian entries (which each are of full rank with high probability) if $m \le c n^2 / \log (m/n)$, where $c$ is a constant, $m$ is the number of measurements and $n$ the dimension of the domain of the measurement operator $\mathcal{A}$. For an analysis of the same algorithms, \cite{Liu.2011} showed the $S_1$-quotient property for Pauli measurements, which are of importance in quantum state tomography. To the best of our knowledge, the results on the $S_1$-quotient property of \cite{Candes.2011,Liu.2011} are the only ones available in the literature.

%\CK{Mention sparse recovery QP here?}
%For the related sparse recovery problem, the key to noise-blind results has been the $\ell_q$-quotient property, as it was observed in \cite{Wojtaszczyk.2010}, see also \cite[Chapter 11]{Foucart.2013} for an extensive discussion. When combined with a null space property and a reasonable reconstruction map based on convex optimization, the $\ell_q$-quotient property implies desired recovery guarantees for noise-blind sparse recovery. 

%In Subsection \ref{sec:noiseblind:guarantees}, we present results that on the one hand, extend the scope of \Cref{thm: Kabanava3.1} and \Cref{col: Lasso noiseless} to a \emph{noise-blind} setting. 

%Furthermore, our results show that equality-constrained nuclear norm minimization $\Delta$, which does not require any estimate of the magnitude of a noise vector $w$ and therefore is \emph{noise-blind}, exhibits similar noise-robustness properties as $\Delta_{q,\eta}$ and $\Delta_{q,\mu}^{\text{Lasso}}$ with properly tuned parameters $\eta$ and $\mu$ under a certain condition.

%Finally, in Subsection \ref{sec:rankone:results}, we show that this condition, the so-called \emph{$S_1$-quotient property}, is actually fulfilled with high probability for measurement operators defined by rank-one measurements derived from Gaussian vectors, which constitutes our main theoretical result. 

\section{Main results}\label{sec: results}
This section presents the main results of this paper. First, we present \Cref{thm: QP}, which shows that the $S_1$-quotient property holds for measurement operators defined from Gaussian rank-one measurements. \Cref{col: final} establishes then that noise-blind recovery is indeed possible from Gaussian rank-one measurements.

For the latter result, it is necessary to combine the $S_1$-quotient property with a $S_2$-robust rank null space property of \Cref{def:rNSP}. In \Cref{thm: quotient+NSP to optimality}, we provide an extensive statement characterizing the interplay of these two properties, noise-aware and noise-blind guarantees for convex reconstruction maps for low-rank matrix recovery.

We recall the measurement model of \cref{eq:lowrankrecovery:prob}: Let $\q{A}: \q H_n \to \bb R^{m}$ be an operator such that
\begin{equation} \label{eq:measop:rankone:mainresults} 
\q{A} (X)_j  = \langle a_j a_j^*, X \rangle_F
\end{equation}
for all $j= 1, \ldots, m$. %, where the $a_j$ are independent, complex standard Gaussian vectors. 
 With this, we state our main theorem about the $S_1$-quotient property.

%\subsection{Results for rank-one measurements} \label{sec:rankone:results}

%With the following theorem, we present the central result of this paper and establish the $S_1$-quotient property relative to the rank $r_*$ for Gaussian rank-ones measurements. Together with \Cref{thm: quotient+NSP to optimality}, this will lead to noise-blind guarantees for several algorithms, as we will see below.

\begin{theorem}[$S_1$-quotient property for Gaussian rank-one measurements] \label{thm: QP}
Let $\q A: \q H_n \to \bb R^{m}$ be a measurement operator defined by $m$ rank-one measurements of the form \cref{eq:measop:rankone:mainresults}. If the entries of the vectors $a_j$ are i.i.d. standard complex Gaussian random variables and if
\[
m \le ( n / c_2 \log(545 m)  )^{8/7}
\ \text{ and } m \text{ is sufficiently large,}
\]
for a constant $c_2>0$, then there exists $\gamma_2 > 0$, such that the scaled measurement operator $\frac{1}{\sqrt m}\q A$ possesses the $S_1$-quotient property with constant $d = \frac{128 \sqrt 2}{\kappa} \sqrt{\kappa m/n}$ and rank $\kappa m/n$ relative to the norm $\norm{\cdot}_{2}$ on $\bb{R}^m$ for all $\kappa >0$ with probability at least $ 1 - 14 e^{ -\gamma_2 m^{7/8} }$.
\end{theorem}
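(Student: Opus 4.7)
The plan is to reformulate the $S_1$-quotient property through its dual characterization and then verify the resulting spectral-norm lower bound uniformly in the dual variable. By convex duality between the minimum $S_1$-norm preimage and the Schatten-$\infty$ norm of the adjoint, the $S_1$-quotient property of $\tfrac{1}{\sqrt m}\mathcal{A}$ with the stated $d$ and $r_*$ is equivalent to
\[
\Bigl\|\mathcal{A}^*y\Bigr\|_{S_\infty}=\Bigl\|\sum_{j=1}^m y_j\, a_j a_j^*\Bigr\|_{S_\infty}\ge\frac{n}{128\sqrt{2}\,\sqrt m}\,\|y\|_2\qquad \text{for all } y\in\mathbb{R}^m,
\]
after substituting $d\,r_*^{1/2}=128\sqrt{2}\,m/n$. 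Geometrically, this is the statement that the image of the nuclear-norm unit ball under $\tfrac{1}{\sqrt m}\mathcal{A}$ has inradius at least $n/(128\sqrt{2}\,m)$.

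For a fixed $y$ on the unit sphere $S^{m-1}\subset\mathbb{R}^m$, a pointwise bound can be established by exhibiting an explicit unit test vector $u=u(y)\in\mathbb{C}^n$ for which $|u^*(\mathcal{A}^*y)u|$ matches the target scale. A natural candidate is $u=a_k/\|a_k\|$ with $k=\arg\max_j|y_j|$, which gives
\[
u^*(\mathcal{A}^*y)u \;=\; y_k\|a_k\|^2 \;+\; \sum_{j\ne k}y_j\,|a_j^*u|^2.
\]
The leading term $y_k\|a_k\|^2\asymp y_k\,n$ is a deterministic drift of size at least $n\|y\|_2/\sqrt m$, using $\|y\|_\infty\ge \|y\|_2/\sqrt m$ on $S^{m-1}$; conditional on $a_k$, the residual is a weighted sum of i.i.d.\ $\mathrm{Exp}(1)$ random variables (as $a_j^*u\sim\mathcal{CN}(0,1)$ for $j\ne k$), whose sub-exponential fluctuations can be absorbed via Bernstein's inequality.

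The main obstacle is upgrading this pointwise estimate to a uniform bound over $S^{m-1}$, which should also drive the sample-complexity condition $m\le(n/c_2\log(545m))^{8/7}$ and the failure probability $14\,e^{-\gamma_2 m^{7/8}}$. I would partition $S^{m-1}$ into cells $R_k=\{y:\arg\max_j|y_j|=k\}$ so that, within each cell, the probe $u_k=a_k/\|a_k\|$ is $y$-independent and the residual process $y\mapsto\sum_{j\ne k}y_j(|a_j^*u_k|^2-1)$ is linear in $y$ with (conditional on $a_k$) i.i.d.\ centered Exp$(1)$ coefficients. Its supremum over $R_k\cap S^{m-1}$ would then be controlled by a generic chaining or Dudley entropy-integral argument in the mixed sub-Gaussian/sub-exponential regime, balanced against the spectral Lipschitz constant $\max_k\|a_k\|^2\asymp n$ that sets the required discretization scale. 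A crude $\varepsilon$-net approach using only pointwise Bernstein concentration of the form $\exp(-c\min(t^2,t))$ against a net of cardinality $(Cn)^m$ only yields $m\lesssim n/\sqrt{\log n}$, so reaching the stated $8/7$ exponent appears to require exploiting the finer structure of the sub-exponential tails at the chaining level rather than by union bound. A secondary subtlety is the dependence between $u_k$ and the other vectors $(a_j)_{j\ne k}$, which is resolved by freezing $a_k$ first so that the remaining $a_j$ stay i.i.d.\ standard complex Gaussian; the union bound over the $m$ cells $R_k$ and over the well-conditioning events $\|a_k\|^2\in[n/2,2n]$ then contributes only a harmless overhead to the overall probability.
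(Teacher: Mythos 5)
Your dual reformulation, the reduction to the spectral-norm lower bound $\|\mathcal{A}^*y\|_{S_\infty}\ge \tfrac{n}{128\sqrt 2\sqrt m}\|y\|_2$, and the probe $u=a_k/\|a_k\|_2$ at the largest coordinate are all exactly right, and they match the paper's starting point (and its ``spiky'' case). You have also correctly diagnosed the bottleneck: the pointwise Bernstein bound at scale $t\asymp n/\sqrt m$ gives only $\exp(-c\,n^2/m)$, which cannot survive a union bound over a net of size $(Cm)^m$ unless $m\lesssim n/\sqrt{\log n}$. But the step you leave open --- ``exploiting the finer structure of the sub-exponential tails at the chaining level'' --- is precisely where the proof has to happen, and chaining is the wrong tool for it: the obstruction is not the oscillation of the process $y\mapsto\|\mathcal{A}^*y\|_{S_\infty}$ over the sphere (which a net at radius $\asymp 1/m$ already controls, since the Lipschitz constant is $\max_k\|a_k\|_2^2\asymp n$), but the \emph{single-point} failure probability, which no entropy argument improves. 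Restricting to the cells $R_k=\{\arg\max_j|y_j|=k\}$ does not help either, since each cell still requires a net of cardinality exponential in $m$.

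The paper closes this gap with an idea absent from your proposal: it splits the unit sphere into three regimes according to the structure of $w$ (large mean $|\sum_k w_k|$; large $\|w\|_\infty$; neither) and, in the two hard regimes, replaces the single probe by \emph{many conditionally independent probes}, so that the per-vector failure probability is the \emph{product} of the individual ones. For non-centered $w$ it uses the $n$ coordinate probes $e_\ell$, whose quadratic forms $\sum_k w_k|(a_k)_\ell|^2$ are i.i.d.\ across $\ell$, turning $\exp(-cn/\sqrt m)$ into $\exp(-cn^2/\sqrt m)$. For flat $w$ it selects $\asymp\sqrt m$ indices with $|w_k|>1/\sqrt{2m}$ of a common sign, Gram--Schmidt orthogonalizes the corresponding $a_j$ into $\tilde a_j$, and uses that the sums $\sum_{k\notin L}w_k|a_k^*\tilde a_j|^2$ are conditionally independent across $j$ (projections onto orthogonal directions), turning $\exp(-\tilde Cn/m^{1/4})$ into $\exp(-\tilde Cn\,m^{1/4})$. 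It is exactly this amplification that yields a per-vector bound $\exp(-\tilde Cn\,m^{1/8})$ dominating $(545m)^m$ under $c_2m^{7/8}\log(545m)\le n$, i.e.\ the $8/7$ exponent; an additional combinatorial union over the index sets $J$ and $L$ is needed to make the auxiliary events independent of $w$ before the net is applied. Without some version of this multi-probe independence argument (or a genuinely different mechanism for improving the pointwise probability), your outline does not reach the claimed range of $m$, so as written there is a real gap at the heart of the proof.
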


Compared to the related result involving dense, full-rank Gaussian measurement operators \cite{Candes.2011}, \Cref{thm: QP} contains a dimension dependence of the factor $d$ as it scales with $\sqrt{m/n}$.  However, when used in the analysis of noise-blind reconstruction maps, this will only lead to a slightly worse order in front of the noise term $\norm{w}_2$ in the theoretical guarantees, as can be seen below in \Cref{thm: quotient+NSP to optimality}.

%%%%%%%%%%%%%%%%%%%%%%%%%%%%%%%%%%%%%%%%%%%%%%%%%%%%%%%%%%%%%%%%%%%%%%%%%%%%%%%%%%%%%%%%%%%%%%%%%%%%%%%%%%%%%%%%%%%
% We wanted to remove this, didn't we? (C.K. Oct 13)
%\begin{remark}
%The upper bound on the number of measurements $m$ might seem as a restrictive requirement to the reader, since in \cite{Candes.2011}, which studies far less structured measurement operators constituted from unstructured measurement with i.i.d. entries, the requirement on $m$ is such that $m \le c n^2/\log(m/n)$, which is less restrictive (and near-optimal). However, the requirement of \Cref{thm: QP} is essential in the case of rank-one measurements as for growing $m$, it becomes harder to satisfy the existence condition of the quotient property. We consider this as an artifact due to the structural complexity of the analysis, which is illustrated by the observation that measurement operators $\mathcal{A}: \mathcal{H}_n \to \R^m$ defined by matrices with i.i.d. Gaussian entries (such as those used in \cite{Candes.2011}) have $m n^2$ independent random variables, while those defined by Gaussian rank-one measurements have only $m n$ independent variables. 
%
%We note that the actual lower bound on $m$ that results from our proof of \Cref{thm: QP} is extreme and impractical, but \Cref{thm: QP} still sheds light on the relationship between the problem dimensions $n,m$ and $r_*$ in terms of orders.
%\end{remark}
%
%%%%%%%%%%%%%%%%%%%%%%%%%%%%%%%%%%%%%%%%%%%%%%%%%%%%%%%%%%%%%%%%%%%%%%%%%%%%%%%%%%%%%%%%%%%%%%%%%%%%%%%%%%%%%%%%%%%

With the $S_1$-quotient property being established, we can go ahead and extend the scope of \Cref{thm: Kabanava3.1} to a \emph{noise-blind} setting. With the following theorem, we show that equality-constrained nuclear norm minimization $\Delta$, which does not require any estimate of the magnitude of a noise vector $w$ and therefore is noise-blind, exhibits similar noise-robustness properties as $\Delta_{q,\eta}$ with properly tuned parameters $\eta$ for the recovery of low-rank matrices.

\begin{theorem}[Noise-blind recovery guarantees for Gaussian rank-one measurements]\label{col: final}
Let $1 \le p \le 2$ and $0 < \rho < 1$. Let $\q A: \q H_n \to \bb R^{m}$ be a measurement operator defined by $m$ rank-one measurements of the form \eqref{eq:measop:rankone:mainresults} such that the entries of vectors $a_j$ are i.i.d. standard complex Gaussian random variables.
If 
\[
m \le \left( \frac{n}{c_2 \log(545 m)}  \right)^{8/7},
\ m \text{ sufficiently large}
\text{ and }
r \le \frac{ \rho^2 m}{c_1 n} = r_*
\]
for absolute constants $c_1,c_2>0$, then there exists constant $\gamma$ such that with probability at least $1 - 15 \exp \left\{ -\gamma m^{7/8} \right\}$, for all $X \in \q{H}_{n}$ and for all $w \in \bb{R}^{m}$
the error bound
\[
\norm{X - \Delta \left( \q{A} X + w \right) }_{p} \le \frac{ D_1 } { r^{ 1 - \frac{1}{p} } } \norm{X_{r}^{c} }_{1} + \frac{D_4 r_*^{ \frac{1}{p}} \norm{w}_2}{\sqrt m}
\]
holds, where  $D_1$ and $D_4$ are positive constants depending only on $\rho$, and $\Delta$ refers to the solution of the equality nuclear norm under constrained minimizer \cref{eq: rec noiseless}.

%a reconstruction map that can be chosen as either equality-constrained nuclear norm minimization $\Delta$
%\[
%\Delta(b) := \argmin \left\{ \norm{Z}_{1},\ Z \in \q H_n \text{ such that } \q{A} Z =  b  \right\}
%\] 
%or as the Lasso recovery program $\Delta_{2,\mu}^{\text{Lasso}}$
%\[
%\Delta_{2,\mu}^{\text{Lasso}}(b) := \argmin \left\{ \norm{\q{A} Z - b }_{2} + \mu \norm{Z}_{1}, Z \in \q H_n \right\}
%\]
%with parameter $\mu \leq \frac{C_\rho}{\sqrt{r_* m}}$, where $C_{\rho}$ is a constant that depends on $\rho$.\\
Furthermore, a minimizer $\Delta_{q,\eta}$ of inequality-constrained nuclear norm minimization \cref{eq: rec noisy} with noise estimate $\eta$ then fulfills
\[
\norm{X - \Delta_{q,\eta} \left( \q{A} X + w \right) }_{p} \le \frac{ D_1 } { r^{ 1 - \frac{1}{p} } } \norm{X_{r}^{c} }_{1} + \frac{D_5  r_*^{ \frac{1}{p}} \max\{ \norm{w}_2, \eta\}}{\sqrt m}
\]
for all $X \in \q H_n$ and all $w \in \R^m$, where  $D_5$ is a positive constant depending on $\rho$. 
\end{theorem}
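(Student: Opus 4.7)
The plan is to combine, by a union bound, the $S_1$-quotient property provided by \Cref{thm: QP} with the $S_2$-robust rank null space property provided by \Cref{thm: NSP}, and then to invoke \Cref{thm: quotient+NSP to optimality}, the deterministic result that converts these two structural properties into noise-blind and noise-aware error bounds for $\Delta$ and $\Delta_{q,\eta}$.

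To synchronise parameters, I would set $r_* = \rho^2 m/(c_1 n)$ (the hypothesis $r\le r_*$ places us in the admissible range of \Cref{thm: NSP}) and choose $\kappa = \rho^2/c_1$ in \Cref{thm: QP}, so that its rank parameter $\kappa m/n$ equals $r_*$. Then \Cref{thm: NSP} gives the $S_2$-robust rank null space property of order $r_*$ with probability at least $1 - e^{-\gamma_1 m}$, while \Cref{thm: QP} gives the $S_1$-quotient property with constant $d = \frac{128\sqrt{2}}{\kappa}\sqrt{r_*}$ with probability at least $1 - 14 e^{-\gamma_2 m^{7/8}}$. Since $m^{7/8}\le m$, the exponent $e^{-\gamma_1 m}$ is dominated by (a multiple of) $e^{-\gamma m^{7/8}}$, and a union bound produces both properties simultaneously with probability at least $1 - 15 e^{-\gamma m^{7/8}}$ for $\gamma = \min(\gamma_1,\gamma_2)$.

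On the intersection of these events, I would apply \Cref{thm: quotient+NSP to optimality}. Observe first that the quotient property applied to $w/\sqrt{m}$ (so that the scaled operator $\tfrac{1}{\sqrt m}\q{A}$ hits this vector) produces $U \in \q H_n$ with $\q{A}U = w$ and $\|U\|_1 \le d\sqrt{r_*}\|w\|_2/\sqrt{m} = \tfrac{128\sqrt{2}}{\kappa}\, r_*\,\|w\|_2/\sqrt{m}$, which is exactly the target noise scaling $r_*^{1/p}\|w\|_2/\sqrt{m}$ at $p=1$ (the general $p\in[1,2]$ case follows from $\|U\|_p \le \|U\|_1$ together with a short interpolation inside the abstract recovery result). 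For $\Delta$, the matrix $X+U$ is feasible for the equality-constrained problem, so $\|\Delta(b)\|_1 \le \|X\|_1 + \|U\|_1$; writing $H = \Delta(b) - X$, one has $\|\q{A}H\|_2 = \|w\|_2$, and the standard Kabanava/Foucart--Rauhut iteration along the tangent space at $X_r$ then treats $\|U\|_1$ as an artificial low-rank-approximation error and yields the stated bound. For $\Delta_{q,\eta}$, I would split into the regime $\|w\|_q \le \eta$ (where $X$ is feasible and \Cref{thm: Kabanava3.1} gives the $\eta$-term directly) and the regime $\|w\|_q > \eta$ (where the quotient argument applied to the residual $\q{A}\Delta_{q,\eta}(b) - \q{A}X$, whose $\ell_q$-norm is at most $\|w\|_q + \eta \le 2\|w\|_q$, produces a feasible perturbation and gives the $\|w\|_2$-term), and combine the two outcomes into the $\max\{\|w\|_2,\eta\}$ factor.

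The main obstacle is not in this corollary itself, which is essentially assembly, but in the deterministic \Cref{thm: quotient+NSP to optimality} it invokes: there one must carefully balance the quotient-induced term $\|U\|_1$ against the NSP error, propagate the resulting estimate from $S_1$ to $S_p$ via a H\"older-type interpolation, and arrange the bookkeeping so that the final constants $D_1$, $D_4$, $D_5$ depend only on $\rho$ and in particular do not swallow any implicit dependence on $m/n$ beyond the explicit $r_*^{1/p}/\sqrt m$ already displayed.
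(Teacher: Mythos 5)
Your proposal is correct and follows essentially the same route as the paper: instantiate \Cref{thm: NSP} and \Cref{thm: QP} with $r_* = \rho^2 m/(c_1 n)$ and $\kappa = \rho^2/c_1$, take a union bound to get both properties with probability $1-15e^{-\gamma m^{7/8}}$, and then feed them (together with the noise-aware bound of \Cref{thm: Kabanava3.1}) into \Cref{thm: quotient+NSP to optimality}, absorbing the factor $d\sim\sqrt{r_*}$ into $r_*^{1/p-1/2}$ to produce $r_*^{1/p}$ and handling the $1/\sqrt m$ scaling of the noise exactly as you describe. The mechanics you sketch for the deterministic conversion (feasibility of $X+U$, the tangent-space iteration, the $\eta$ case split) belong to the proof of \Cref{thm: quotient+NSP to optimality} rather than to this corollary, as you correctly note.
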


Besides providing a thorough theoretical understanding of the noise-blind reconstruction map $\Delta$, \Cref{col: final} also provides improved guarantees for inequality-constrained nuclear norm minimization $\Delta_{q,\eta}$ that go beyond the ones of \Cref{thm: Kabanava3.1}, very much in the spirit of \cite{BA18}. In particular, unlike inequality \cref{eq: rec guarantees eta} of \Cref{thm: Kabanava3.1}, which only applies if $\|w\|_q \leq \eta$, we have obtained a guarantee that depends on $ \max(\norm{w}_2, \eta)$, suggesting a good performance of $\Delta_{q,\eta}$ in the case of \emph{underestimated} noise level, too.

As discussed in \Cref{sec:intro:rankone}, a noise-blind guarantee similar to \Cref{col: final} has been shown for the recovery of positive semidefinite low-rank matrices from rank-one measurements via \cref{eq:least:ellq} \cite{Candes.2014, Kabanava.2016}. 
However, the scaling of the guarantees of \cite{Candes.2014, Kabanava.2016} is slightly better with respect to the rank $r_*$ in the second summand of the error bound, as their results have a factor $r_{*}^{1/p-1/2}$ instead of $r_{*}^{1/p}$. On the other hand, their method and theoretical analysis is explicitly designed for positive semidefinite matrices and cannot generalize beyond this case.

The proof of \Cref{col: final} relies on a relation between the quotient property and noise-blind recovery, which is independent of the specific measurement scenario. 
The next statement summarizes this relation, which is very much analogous to the results in \cite{Wojtaszczyk10} (see also \cite[Chapter 11]{Foucart.2013}) and, for underestimated noise levels, in \cite{BA18} that have been obtained in the context of sparse recovery.
%Although the results of \Cref{col: final} are bound to the case Gaussian rank-one measurements,  the concept of the quotient property can be used for the noise-blind reconstruction of the abstract measurement operators, similarly to the robust rank null space property. 
%The next statement establishes that the $S_1$-quotient property and the $S_2$-robust rank null space property combined with  reasonable reconstruction map are sufficient for a stable recovery in the noise-blind scenarios. In sparse recovery, its analogue in the case of the noise-blind recovery can be found in \cite{Wojtaszczyk10} (see also \cite[Chapter 11]{Foucart.2013}) and in case of underestimated noise level in inequality constrained convex optimization in \cite{BA18}.

%For the related sparse recovery problem, the key to noise-blind results is the $\ell_q$-quotient property, as it was observed in \cite{Wojtaszczyk.2010}, see also \cite[Chapter 11]{Foucart.2013} for an extensive discussion. When combined with a null space property and a reasonable reconstruction map based on convex optimization, the $\ell_q$-quotient property implies desired recovery guarantees for noise-blind sparse recovery. 

%Following this analogy, we introduce the $S_q$-quotient property.

\begin{theorem}[{Low-rank matrix recovery analogue of \cite[Theorem 11.12]{Foucart.2013} and \cite[Theorem 4]{BA18} }]
\label{thm: quotient+NSP to optimality}
Let $1 \le p \le 2$. Assume that a measurement operator $\q{A}: \q{H}_{n} \to \bb{R}^{m}$ satisfies:\\
- the $S_2$-robust rank null space property of order $r_*$ and constants $0 < \rho < 1$ and $\tau > 0$ relative to norm $\norm{\cdot}$, and \\
- the $S_1$-quotient property with constant $d$ and rank $r_*$ relative to the norm $\norm{\cdot}$.
%\begin{enumerate}
% \item[A)]
%\indent If the recovery map $\q R: \bb{R}^{m} \to \q{H}_{n}$ possesses recovery guarantees for the noiseless recovery of approximately low rank matrices such that
%\begin{equation}\label{eq: requested rec guar}
%\norm{X - \q R \left( \q{A} X \right) }_p \le \frac{ 2 (1 + \rho)^2 } { (1 - \rho) r_*^{1 - 1/p} } \norm{X_{r_*}^{c} }_1, \text{ for all } X \in \q{H}_{n},
%\end{equation}
%then, for $r \le r_*$ and for all $X \in \q{H}_{n}$ and all $w \in \bb{R}^{m}$, inequality
%\[
%\norm{X - \q R \left( \q{A} X + w \right) }_{p} \le \frac{ D_1} { r^{ 1 - \frac{1}{p} } } \norm{X_{r}^{c} }_{1} + (D_2 d + \tau )r_*^{ \frac{1}{p} - \frac{1}{2} } \norm{w},
%\]
%holds, where $D_1, D_2>0$ are constants depending on $\rho$. \\
%\item[B)] Let $\eta > 0$. If a $\eta$-dependent recovery map $\q R_{\eta}: \bb{R}^{m} \to \q{H}_{n}$  possesses \emph{robust} low-rank matrix recovery guarantees, that is for all $X \in \q{H}_{n}$ and $w \in \bb{R}^{m}$ such that $\norm{w} \le \eta$ inequality
%\begin{equation*}
%\norm{X - \q R_{\eta} \left( \q{A} X + w \right) }_p \le \frac{ 2 (1 + \rho)^2 } { (1 - \rho) r_*^{1 - 1/p} } \norm{X_{r_*}^{c} }_1 + \frac{2 \tau (3 + \rho)}{ 1 - \rho } r_*^{1/p - 1/2} \eta
%\end{equation*}
%holds,
%then for $r \le r_*$ and for all $w \in \bb{R}^{m}$ it holds that
%\[
%\norm{X - \q R_{\eta} \left( \q{A} X + w \right) }_{p} \le \frac{ D_1} { r^{ 1 - \frac{1}{p} } } \norm{X_{r}^{c} }_{1} + D_3 r_*^{ \frac{1}{p} - \frac{1}{2} } \max \{ \norm{w}, \eta \},
%\]
%where $D_3>0$ is a constant depending on $\rho$,$\tau$ and $d$.
%\end{enumerate}

Let $\eta > 0$. If a $\eta$-dependent recovery map $\q R_{\eta}: \bb{R}^{m} \to \q{H}_{n}$  possesses \emph{robust} low-rank matrix recovery guarantees, that is for all $X \in \q{H}_{n}$ and $w \in \bb{R}^{m}$ such that $\norm{w} \le \eta$ inequality
\begin{equation*}\label{eq: requested rec guar}
\norm{X - \q R_{\eta} \left( \q{A} X + w \right) }_p \le \frac{ 2 (1 + \rho)^2 } { (1 - \rho) r_*^{1 - 1/p} } \norm{X_{r_*}^{c} }_1 + \frac{2 \tau (3 + \rho)}{ 1 - \rho } r_*^{1/p - 1/2} \eta
\end{equation*}
holds,
then for $r \le r_*$ and for all $w \in \bb{R}^{m}$ it holds that
\[
\norm{X - \q R_{\eta} \left( \q{A} X + w \right) }_{p} \le \frac{ D_1} { r^{ 1 - \frac{1}{p} } } \norm{X_{r}^{c} }_{1} + (D_2 d + D_3) r_*^{ \frac{1}{p} - \frac{1}{2} } \max \{ \norm{w}, \eta \},
\]
where $D_1, D_2>0$ are constants depending on $\rho$ and $D_3>0$ is a constant depending on $\rho$ and $\tau$.
\end{theorem}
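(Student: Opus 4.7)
The plan is to mirror the strategy of the sparse-recovery analogues \cite[Theorem~11.12]{Foucart.2013} and \cite[Theorem~4]{BA18}, with $\ell_1$ machinery replaced by its Schatten-$1$ (nuclear norm) counterpart. The key intuition is that the quotient property lets us absorb any noise vector into a modest perturbation $U$ of the signal, converting the noisy recovery problem into a noise-free recovery problem for the shifted ground truth $X+U$; the NSP is then used to control $\norm{U}_p$. Accordingly, the proof splits into two cases depending on whether $\norm{w}$ is already covered by the noise-aware hypothesis.

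In Case~1, where $\norm{w} \le \eta$, the assumed bound applies directly. Since $r \le r_*$ gives both $\norm{X_{r_*}^c}_1 \le \norm{X_r^c}_1$ and $r_*^{-(1-1/p)} \le r^{-(1-1/p)}$, and since $\eta \le \max\{\norm{w},\eta\}$, the stated estimate follows with $D_1 = 2(1+\rho)^2/(1-\rho)$ and $D_3 \ge C_2 := 2\tau(3+\rho)/(1-\rho)$. In Case~2, where $\norm{w} > \eta$, the $S_1$-quotient property supplies some $U \in \q H_n$ with $\q A U = w$ and $\norm{U}_1 \le d\, r_*^{1/2}\norm{w}$. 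Writing $b = \q A X + w = \q A(X+U)$, I apply the hypothesis to the shifted ground truth $X+U$ with zero noise (trivially $\le \eta$):
\[
\norm{(X+U) - \q R_\eta(b)}_p \le \frac{D_1}{r_*^{1-1/p}}\norm{(X+U)_{r_*}^c}_1 + C_2\, r_*^{1/p-1/2}\,\eta.
\]
Since $X_{r_*}$ is a rank-$r_*$ comparator for $X+U$, the truncation error satisfies $\norm{(X+U)_{r_*}^c}_1 \le \norm{X_{r_*}^c}_1 + \norm{U}_1$. The triangle inequality $\norm{X - \q R_\eta(b)}_p \le \norm{U}_p + \norm{(X+U) - \q R_\eta(b)}_p$, together with $\norm{U}_1 \le d\, r_*^{1/2}\norm{w}$, $\eta \le \norm{w}$, and the monotonicity in $r \le r_*$ for the approximation term, reduces the full claim to controlling $\norm{U}_p$.

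The pivotal ingredient is then a matrix analogue of the standard ``robust NSP implies $\ell_p$-error'' lemma (\cite[Theorem~4.25]{Foucart.2013}): for $1 \le p \le 2$, the $S_2$-robust rank NSP of order $r_*$ should imply
\[
\norm{Z}_p \le \frac{A_1}{r_*^{1-1/p}}\norm{Z_{r_*}^c}_1 + A_2\, r_*^{1/p-1/2}\norm{\q A Z} \quad \text{for all } Z \in \q H_n,
\]
with $A_1, A_2$ depending only on $\rho, \tau$. I would prove this by the usual head/tail split: the head $\norm{Z_{r_*}}_p$ is controlled via the norm equivalence $\norm{Z_{r_*}}_p \le r_*^{1/p-1/2}\norm{Z_{r_*}}_2$ followed by the NSP, and the tail $\norm{Z_{r_*}^c}_p$ via an iterated size-$r_*$ block partition of the singular values. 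Applied with $Z = U$ together with $\norm{U_{r_*}^c}_1 \le d\, r_*^{1/2}\norm{w}$ and $\norm{\q A U} = \norm{w}$, this yields $\norm{U}_p \le (A_1 d + A_2)\, r_*^{1/p-1/2}\norm{w}$; substituting back and merging with Case~1 produces the stated estimate with $D_2 = A_1 + D_1$ and $D_3 = A_2 + C_2$.

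The main obstacle will be establishing the matrix NSP-to-$\ell_p$-error lemma in the form above. Unlike the sparse case, singular-value tails do not live on disjoint coordinate supports, so the iterated size-$r_*$ block decomposition must be implemented via Ky Fan/Mirsky-type unitary-invariant inequalities rather than a plain index partition. Nonetheless, analogous reasoning already underlies the proof of \Cref{thm: Kabanava3.1} in \cite{Kabanava.2016}, so the remaining work is essentially careful bookkeeping to track the dependence of $A_1, A_2$ on $\rho$ and $\tau$.
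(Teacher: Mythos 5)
Your proposal is correct and follows essentially the same route as the paper's proof: a case split on $\norm{w}$ versus $\eta$, absorption of the noise into a perturbation $U$ supplied by the $S_1$-quotient property, the best-approximation inequality $\norm{(X+U)_{r_*}^c}_1 \le \norm{X_{r_*}^c}_1 + \norm{U}_1$, and a head/tail bound on $\norm{U}_p$ via the $S_2$-robust rank null space property. The only deviations are cosmetic: the paper absorbs just the excess part $v = (1-\eta/\norm{w})\,w$ into $U$ and keeps a residual noise of norm exactly $\eta$ (you absorb all of $w$ and invoke the hypothesis with zero residual noise, which works equally well), and the tail estimate $\norm{U_{r_*}^c}_p \le \norm{U}_1/r_*^{1-1/p}$ that you flag as the main obstacle is precisely Lemma~3.1 of \cite{Kabanava.2016} -- a direct Stechkin-type computation on the ordered singular values of $U$ itself, needing no iterated block partition or Mirsky-type inequalities since $U_{r_*}$ and $U_{r_*}^c$ come from the same SVD (note only that the resulting bound is naturally stated in terms of $\norm{U}_1$ rather than $\norm{U_{r_*}^c}_1$, which is all your application requires).
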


We note that \Cref{thm: quotient+NSP to optimality} does \emph{not} apply to positive semidefinite-based reconstruction maps. The reason for this is the fact that the psd cone is a cone, but not a linear space and the quotient property will not hold. For instance, if the noise vector $w$ has negative entries, the corresponding matrix $U$ in the quotient property is Hermitian, but not a positive semidefinite and, hence, infeasible for the solvers.

\section{Numerical Experiments}\label{sec: numerics}

In this section, we explore the validity of the 	noise-blind recovery guarantees presented in \Cref{col: final} for low-rank matrix recovery problems with noisy rank-one measurements. We also compare reconstruction errors achieved by noise-blind reconstruction programs to those achieved by \emph{noise-aware} methods such as inequality-constrained nuclear norm minimization \cref{eq: rec noisy}.

We conduct the experiments for simple, small-sized random problem instances where the dimension $n$ of the matrix to be recovered  and the number of measurements $m$ relate in a way that would allow for exact recovery in the noiseless case. In all experiments, we construct the random measurement operator $\q{A}$ as in \eqref{eq:measop:rankone:mainresults}  by drawing i.i.d.~complex Gaussian rank-one measurements.
%\[
%X \mapsto \left(\q{A} (X)_j\right)_{j=1}^m
%\]
%with $\q{A} (X)_j =\langle a_j a_j^*,X \rangle_F$ for all $j=1,\ldots,m$, where the entries of $a_j$ are i.i.d. standard complex Gaussian.

As reconstruction methods, we use many of the optimization problems that have been discussed, in particular \emph{equality-constrained nuclear norm minimization} $\Delta$ (\texttt{NucNorm}), which is the method defined by \cref{eq: rec noiseless}, \emph{inequality-constrained nuclear norm minimization} $\Delta_{2,\eta}$ with parameter $\eta$ based on an $\ell_2$-error \cref{eq: rec noisy} (also called \emph{nuclear norm denoising}, \texttt{NucNormDN}), and the \texttt{MatrixLasso} $\Delta_{2,\mu}^{\text{Lasso}}$ with parameter data fit parameter $\mu$, cf. \cref{eq: rec Lasso}.

When the matrix to be recovered is positive semidefinite, we also compare these methods with approaches that make explicit use of the this property, such as \emph{inequality-constrained nuclear norm minimization} $\Delta_{2,\eta}^{\succeq 0}$ with parameter $\eta$, analogue of $\Delta_{2,\eta}$ constrained to 
%based on an $\ell_2$-error \emph{on the psd cone} \eqref{eq: rec noisy:psd}
\emph{the psd cone}  (\emph{phase lift denoising}, \texttt{PhaseLiftDN}) 
%\cref{eq: rec noisy:psd} 
%\begin{equation}\label{eq: rec noisy:psd}
%\Delta_{2,\eta}^{\succeq 0}(b) := \argmin \left\{ \norm{Z}_{S_1}, Z \succeq 0 \text{ such that } \norm{\q{A} Z - b }_{2} \le \eta \right\},
%\end{equation}
 and $\ell_2$- and $\ell_1$-minimization of the residual $\mathcal{A}(X)-b$ on the cone of positive semidefinite matrices, that is, the solution $\Delta_{q}^{\succeq 0}$ of \cref{eq:least:ellq} for $q=2$ and $q=1$, respectively (\texttt{PosDef-$\ell_2$-min} and \texttt{PosDef-$\ell_1$-min}).

The experiments were conducted on a Linux node with Intel Xeon E5-2690 v3 CPU with 28 cores and 64 GB RAM, using MATLAB R2019a. All optimization problem were modeled using the CVX package \cite{MichaelGrantandStephenBoyd.2020} and solved by SDPT3 \cite{SDPT3:Tutuncu:2003}.

\subsection{Robust Recovery of Rank-One Matrices} \label{sec:num:rankone}
%\begin{equation*}
%\Delta_{q,\eta}(b) := \argmin \left\{ \norm{Z}_{S_1}, Z \in \q H_n \text{ such that } \norm{\q{A} Z - b }_{\ell_q} \le \eta \right\}
%\end{equation*}
%\begin{equation*}
%\Delta_{q,\eta}^{\succeq 0}(b) := \argmin \left\{ \norm{Z}_{S_1}, Z \succeq 0 \text{ such that } \norm{\q{A} Z - b }_{\ell_q} \le \eta \right\},
%\end{equation*}
%\[
%\Delta(b) := \argmin \left\{ \norm{Z}_{1},\ Z \in \q H_n \text{ such that } \q{A} Z =  b  \right\}
%\] 
%or the Lasso recovery program
%\[
%\Delta_{2,\mu}^{\text{Lasso}}(b) := \argmin \left\{ \norm{\q{A} Z - b }_{2} + \mu \norm{Z}_{1}, Z \in \q H_n \right\}
%\]
In our first set of experiments, we study the noise robustness of different methods for the task of recovering rank-ones matrices $X_0 = x_0 x_0^*$ from rank-one measurements perturbed by different types of noise. This setting corresponds to (noisy) phase retrieval \cite{Candes.2011,Candes.2013,Candes.2014,Kueng.2017,Kabanava.2016}, as explained above in \cref{eq:phaseretrieval:measurements}.

In particular, we sample vectors $x_0 \in \bb C^n$ randomly with respect to the Haar measure on the complex unit sphere $S^{n-1} = \{x \in \bb C^n: \|x\|_2 = 1\}$ to define $(n \times n)$-ground truth matrices $X_0 = x_0 x_0^*$. Independently from $x_0$ and the complex Gaussian vectors $a_j$ defining $\q A$, we sample a random real vector $w \in \R^m$ from the sphere $ \eta S^{m-1} = \{w \in \R^m: \|w\|_2 = \eta \}$ with radius $\eta = 0.01$ and 
\begin{equation} \label{eq:noisymeas:num}
 b := |a_j^*x_0|^2 + w = \q A(x_0 x_0^*) + w,
\end{equation}
a measurement vector $b$ that is perturbed by \emph{spherical noise} $w$ such that $\|w\|_2 = \eta = 0.01$.

For $n=50$ and a range of parameters $m$ between $m=50$ and $m=500$, we compare the reconstructions $\widehat{X}$ of the recovery algorithms \texttt{NucNorm}, \texttt{NucNormDN}, \texttt{PhaseLiftDN}, \texttt{PosDef-$\ell_2$-min} and \texttt{PosDef-$\ell_1$-min}, which are provided with $b$ as an input, with $X_0$ and measure the relative Frobenius error $\|\widehat{X}-X_0\|_F/\|X_0\|_F$. For \texttt{NucNormDN} and \texttt{PhaseLiftDN}, we provide the \emph{oracle} noise level estimate $\eta$ as an input parameter.

In \Cref{fig:QP:experiment:1}, the resulting recovery errors are reported, averaged across $100$ independent realizations of experimental setup. 

\begin{figure}[!b]
%\centering
%\begin{subfigure}[b]{0.5\textwidth}
    \setlength\figureheight{65mm} 
    \setlength\figurewidth{65mm}
\input{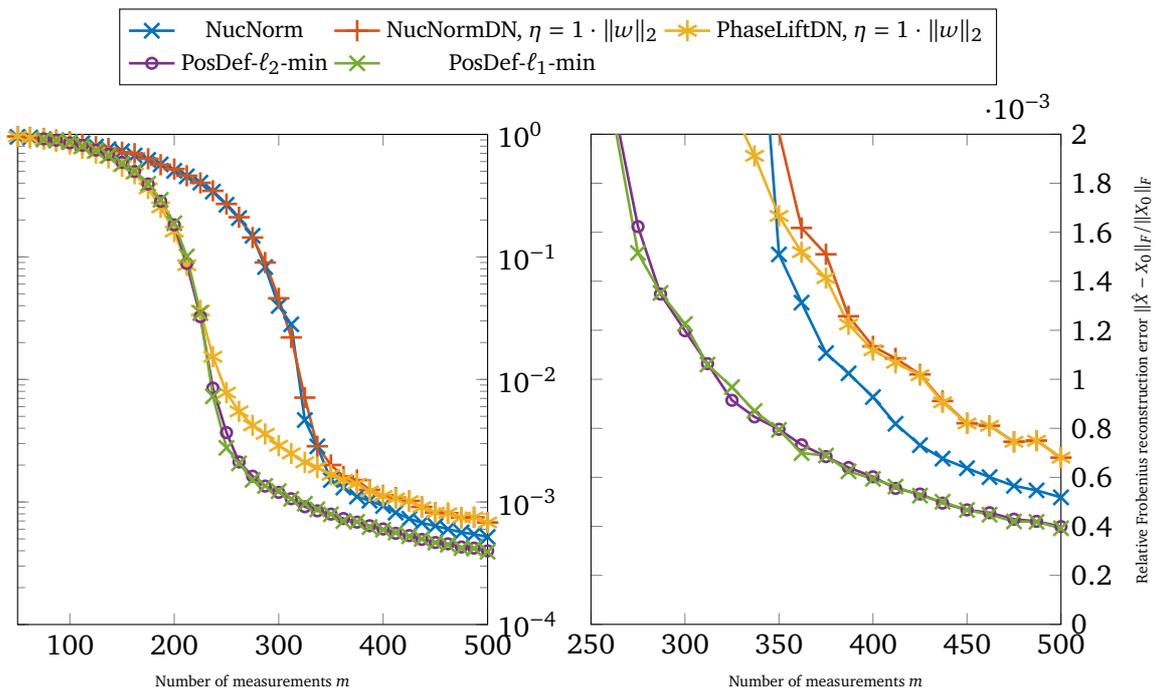}
%\end{subfigure}
%\begin{subfigure}[b]{0.5\textwidth}
%    \setlength\figureheight{50mm} 
%    \setlength\figurewidth{70mm}
%\input{img/experiment_rankQP_noisetype_gaussian_rank1_fig1b.tex}
%\end{subfigure}
%\vspace*{-8mm}
\caption{
\textbf{Comparison of reconstruction algorithms for phase retrieval with spherical noise}: Recovery of normalized rank-$1$ matrices $X_0=x_0 x_0^{*} \in \bb C^{n \times n}$, $n=50$, from noisy rank-one measurements $b = \mathcal{A}(X_0) + w \in \R^m$ perturbed by spherical noise $w$ such that $\|w\|_2= 0.01$. \emph{x-axis}: Number of measurements $m$, \emph{y-axis}: Relative Frobenius error $\|\widehat{X}-X_0\|_F/\|X_0\|_F$ of reconstruction $\widehat{X}$, averaged across $100$ experiments. \\
Left column: Logarithmic scaling of y-axis for $m \in \{50,\ldots,500\}$, right column: Linear scaling of y-axis for $m \in \{250,\ldots,500\}$.}
\label{fig:QP:experiment:1}
\end{figure}

We observe that for the algorithms \texttt{PhaseLiftDN}, \texttt{PosDef-$\ell_2$-min} and \texttt{PosDef-$\ell_1$-min}, which all optimize on the cone of positive semidefinite matrices, the relative error falls below $10^{-2}$ if the number of measurements surpasses a number between $m=200$ and $m=250$. On the other hand, \texttt{NucNorm}, \texttt{NucNormDN}, which do not use the positive definiteness, require at least $m=300$ measurements to pass the threshold of a relative Frobenius error of $10^{-2}$. This shows that for few measurements, including positive definiteness as a constraint helps to identify the desired matrix. As we can see in the right column of \Cref{fig:QP:experiment:1}, the behavior of \texttt{NucNormDN} starts to follow closely the one of \texttt{PhaseLiftDN} for $m \geq 350$. These two methods are fully noise-aware and use oracle knowledge of the $\ell_2$-norm of the noise $\eta$ as an input parameter. We observe that \texttt{NucNorm}, which is a noise-blind method, consistently exhibits a lower error than the noise-aware ones in the stable region between $m=350$ and $m = 500$ (for example, it is $24\%$ lower for $m=500$ with $5.18 \times 10^{-4}$ to $6.80 \times 10^{-4}$, respectively). This validates somewhat \Cref{col: final}, confirming that \texttt{NucNorm} returns estimates that are proportional to the noise magnitude $\|w\|_2$ for a reasonable set range of parameters $m$ and $n$. Furthermore, we observe that \texttt{PosDef-$\ell_2$-min} and \texttt{PosDef-$\ell_1$-min}, which are also noise-blind methods,  consistently return the reconstructions with the smallest error compared to the ones based on the  nuclear norm. The experiments suggest that for phase retrieval, at least for the considered noise model, incorporating the positive definiteness constraint indeed is beneficial if used in the formulations \texttt{PosDef-$\ell_2$-min} and \texttt{PosDef-$\ell_1$-min}.

\begin{figure}[t]
%\centering
\begin{subfigure}[b]{0.46\textwidth}
\vspace*{-5mm}
    \setlength\figureheight{45mm} 
    \setlength\figurewidth{60mm}
% This file was created by matlab2tikz.
%
%The latest updates can be retrieved from
%  http://www.mathworks.com/matlabcentral/fileexchange/22022-matlab2tikz-matlab2tikz
%where you can also make suggestions and rate matlab2tikz.
%
\definecolor{mycolor1}{rgb}{0.00000,0.44700,0.74100}%
\definecolor{mycolor2}{rgb}{0.85000,0.32500,0.09800}%
\definecolor{mycolor3}{rgb}{0.92900,0.69400,0.12500}%
\definecolor{mycolor4}{rgb}{0.49400,0.18400,0.55600}%
\definecolor{mycolor5}{rgb}{0.46600,0.67400,0.18800}%
\begin{tikzpicture}

\begin{axis}[%
width=0.951\figurewidth,
height=\figureheight,
at={(0\figurewidth,0\figureheight)},
scale only axis,
xmin=300,
xmax=2500,
xlabel style={font=\color{white!15!black}},
xlabel={Number of measurements $m$},
ymode=log,
ymin=5e-05,
ymax=0.01,
yminorticks=true,
ylabel style={font=\color{white!15!black}},
axis background/.style={fill=white},
yticklabel pos=right,
legend style={font=\fontsize{4}{30}\selectfont, anchor=south, legend columns = 2, at={(0.5,1.12)}},
xlabel style={font=\tiny},ylabel style={font=\tiny},
]
\addplot [color=mycolor1, line width=1.0pt, mark size=3.0pt, mark=x, mark options={solid, mycolor1}]
  table[row sep=crcr]{%
315.656616577619	0.0169864646463464\\
325	0.00466258098743298\\
337	0.00284271050033135\\
350	0.00150957174804032\\
362	0.00131293777801014\\
375	0.00110666258097337\\
387	0.00102404179154309\\
400	0.000927537964421178\\
412	0.000818529886631242\\
425	0.000731081924664235\\
437	0.000676305331058398\\
450	0.000637218854013397\\
462	0.000600861361371094\\
475	0.000565318756436521\\
487	0.00054672132060951\\
500	0.00051821970598209\\
550	0.000427741094682461\\
600	0.000366412727562629\\
650	0.000319308531837485\\
700	0.00028419393065258\\
750	0.000255925632868375\\
800	0.000237372929912933\\
850	0.000222280782194099\\
900	0.000199877120042443\\
950	0.000191391626716485\\
1000	0.000178744756345471\\
1050	0.000167151169104654\\
1100	0.000160466523574845\\
1150	0.000152383427001985\\
1200	0.000148822928679299\\
1250	0.000141538982563772\\
1300	0.000137133281068001\\
1350	0.000132920780492881\\
1400	0.000133961931861949\\
1450	0.000129186326352178\\
1500	0.000128300698470468\\
1550	0.000127739270498421\\
1600	0.000132884254562177\\
1650	0.000129115946498635\\
1700	0.000128876590207126\\
1750	0.000130552804082939\\
1800	0.000134342785249366\\
1850	0.000134507987832006\\
1900	0.00014116593038624\\
1950	0.000147082763953101\\
2000	0.000148373883620878\\
2050	0.000156008043353727\\
2100	0.000164386284818\\
2150	0.000160889146804289\\
2200	0.000175674343651467\\
2250	0.000192020492871909\\
2300	0.000206623167127772\\
2350	0.000249293604968907\\
2400	0.000304685626579701\\
2450	0.000439415499464469\\
2500	0.00814798048909337\\
};
\addlegendentry{NucNorm}

\addplot [color=mycolor2, line width=1.0pt, mark size=3.0pt, mark=+, mark options={solid, mycolor2}]
  table[row sep=crcr]{%
315.64830498281	0.0169864646463464\\
325	0.00465236797344677\\
337	0.00283747277499551\\
350	0.00150644417713225\\
362	0.0013063152868256\\
375	0.00110190361480999\\
387	0.0010167611385631\\
400	0.000921328699202183\\
412	0.000812010446232648\\
425	0.000725845436073308\\
437	0.000670087842896058\\
450	0.000630788619206331\\
462	0.000594731545767681\\
475	0.000559871244666711\\
487	0.000541229070908227\\
500	0.000511691536181678\\
550	0.000427071256297783\\
600	0.000363394437693406\\
650	0.000320959108945304\\
700	0.000278510493308043\\
750	0.00025367577119263\\
800	0.000231836192969633\\
850	0.000220578668004266\\
900	0.000189202909073345\\
950	0.000180116234521201\\
1000	0.000172680722550334\\
1050	0.000159850917406375\\
1100	0.000156736699279368\\
1150	0.000147445007347648\\
1200	0.000136385825252621\\
1250	0.000134842413397247\\
1300	0.000128327514467095\\
1350	0.000122939882693622\\
1400	0.000123937447052491\\
1450	0.00011920143732156\\
1500	0.000114169386380545\\
1550	0.000114661985384771\\
1600	0.000113539147760424\\
1650	0.000113339058715834\\
1700	0.000110674189212513\\
1750	0.00011081662977303\\
1800	0.000113795162560607\\
1850	0.000115785748852522\\
1900	0.000115659405364651\\
1950	0.00011538501893409\\
2000	0.000115675156056983\\
2050	0.000120906638103769\\
2100	0.00012530397146488\\
2150	0.000133616402522764\\
2200	0.000133152350042478\\
2250	0.000145966945643407\\
2300	0.000152754086237426\\
2350	0.000170065900449917\\
2400	0.000177830940863994\\
2450	0.000216473542075808\\
2500	0.0003244111312447\\
};
\addlegendentry{NNDN, $\eta=0.1\cdot\|w\|_2$}

\addplot [color=mycolor3, line width=1.0pt, mark size=3.0pt, mark=asterisk, mark options={solid, mycolor3}]
  table[row sep=crcr]{%
314.647870514332	0.0169864646463464\\
325	0.00666099755399848\\
337	0.00243019513578428\\
350	0.00167711646721522\\
362	0.00127948856078484\\
375	0.0012016941905784\\
387	0.000995454114777738\\
400	0.000899523553951144\\
412	0.00085327502536584\\
425	0.000756709625175632\\
437	0.000699379934502625\\
450	0.000622998842544948\\
462	0.000595447375609396\\
475	0.000556228092625624\\
487	0.000553883107122757\\
500	0.000505395121489744\\
550	0.000416593041862032\\
600	0.000350114533400402\\
650	0.000314771915330267\\
700	0.000273714744173765\\
750	0.000246089941865148\\
800	0.000222912599530358\\
850	0.000202319779802345\\
900	0.000184142368853323\\
950	0.000173549187880354\\
1000	0.000163513586706617\\
1050	0.000149887350878107\\
1100	0.000142561913449382\\
1150	0.000131380846066949\\
1200	0.000126707475148524\\
1250	0.000116801195865417\\
1300	0.000113020351323553\\
1350	0.000107848412230697\\
1400	0.000103040159184257\\
1450	9.73839278544863e-05\\
1500	9.3171120392144e-05\\
1550	9.00616100151365e-05\\
1600	8.5847400234269e-05\\
1650	8.33695656877767e-05\\
1700	7.98208362340396e-05\\
1750	7.79171674693405e-05\\
1800	7.32308454934656e-05\\
1850	7.42831034572251e-05\\
1900	7.02175773674068e-05\\
1950	6.78493077296241e-05\\
2000	6.71061243236234e-05\\
2050	6.57433220078738e-05\\
2100	6.42243268854808e-05\\
2150	6.2884494387453e-05\\
2200	6.09417950126326e-05\\
2250	6.15168329136191e-05\\
2300	6.01344742303931e-05\\
2350	5.90285516521282e-05\\
2400	5.9801114350929e-05\\
2450	5.94208476652057e-05\\
2500	5.76383974123906e-05\\
};
\addlegendentry{NNDN, $\eta=0.5\cdot\|w\|_2$}

\addplot [color=mycolor4, line width=1.0pt, mark size=3.0pt, mark=o, mark options={solid, mycolor4}]
  table[row sep=crcr]{%
314.989542382894	0.0169864646463464\\
325	0.00710642726863813\\
337	0.00285023567108106\\
350	0.0020042288540679\\
362	0.00161752447491539\\
375	0.00151002745668288\\
387	0.00125709911861021\\
400	0.00113447942649399\\
412	0.00108527071674993\\
425	0.00101999111659905\\
437	0.000911472550437422\\
450	0.000820842576313081\\
462	0.000810537591554103\\
475	0.000745144789798713\\
487	0.000749924550164586\\
500	0.000680294428711603\\
550	0.000560474373537836\\
600	0.000484258948507073\\
650	0.000429738241002964\\
700	0.00036654734467272\\
750	0.000328121569667185\\
800	0.000293195978295495\\
850	0.000273135033265735\\
900	0.000252212291975247\\
950	0.000227865450833098\\
1000	0.000211217018797265\\
1050	0.000198762992648363\\
1100	0.000187741530468471\\
1150	0.000174277870935386\\
1200	0.000169154059945913\\
1250	0.00015819978492535\\
1300	0.000149476322158501\\
1350	0.000140967037182966\\
1400	0.000136626667860077\\
1450	0.000132292032223592\\
1500	0.00012555980337737\\
1550	0.000118760797244995\\
1600	0.000115481937906505\\
1650	0.000114497463205662\\
1700	0.000106601475547407\\
1750	0.000101355642058235\\
1800	0.000101880174865011\\
1850	9.65849453220837e-05\\
1900	9.51138652385062e-05\\
1950	9.06513659109768e-05\\
2000	9.08649182494882e-05\\
2050	8.81263033792395e-05\\
2100	8.16643112061189e-05\\
2150	7.96436406964745e-05\\
2200	7.80097717975541e-05\\
2250	7.52155490049828e-05\\
2300	7.51496876581103e-05\\
2350	7.45158745629734e-05\\
2400	6.93450209752642e-05\\
2450	6.89606426656129e-05\\
2500	6.72242663731157e-05\\
};
\addlegendentry{NNDN, $\eta=1\cdot\|w\|_2$}

\addplot [color=mycolor5, line width=1.0pt, mark size=3.0pt, mark=x, mark options={solid, mycolor5}]
  table[row sep=crcr]{%
316.441789024305	0.0169864646463464\\
325	0.00890257470575533\\
337	0.00465169301204194\\
350	0.00365955837551787\\
362	0.00318019777072077\\
375	0.00299486068060646\\
387	0.00261516211602774\\
400	0.00240682164035953\\
412	0.00232164329458487\\
425	0.00214507068704389\\
437	0.0019707876052834\\
450	0.00177382582098467\\
462	0.00172333006370858\\
475	0.00163738498008311\\
487	0.00164556412153024\\
500	0.00149484349357893\\
550	0.00109651922763172\\
600	0.000949800849642947\\
650	0.000822512238657995\\
700	0.000744200082721983\\
750	0.000665299464760697\\
800	0.00061169025099843\\
850	0.000573291680077589\\
900	0.000529394138310705\\
950	0.000505526281809793\\
1000	0.000478274210489374\\
1050	0.000455083580397124\\
1100	0.000442273630761114\\
1150	0.000420771255199102\\
1200	0.000399379636382621\\
1250	0.000391147046572071\\
1300	0.000381346329999239\\
1350	0.000372683238698068\\
1400	0.000361633151626248\\
1450	0.000349945610806271\\
1500	0.000340583334036655\\
1550	0.000334764209766208\\
1600	0.000326779626696132\\
1650	0.000312036274030919\\
1700	0.00030744009154917\\
1750	0.000306081456051027\\
1800	0.000295083805950851\\
1850	0.00028830821268638\\
1900	0.000285430639203587\\
1950	0.000280976845768281\\
2000	0.000274201948445773\\
2050	0.000270809777302277\\
2100	0.000265045690859087\\
2150	0.000261256174392904\\
2200	0.000257340014096187\\
2250	0.000252733650337808\\
2300	0.000248747212894388\\
2350	0.000244276642080298\\
2400	0.000241988361320549\\
2450	0.000238208779743732\\
2500	0.000236052995302054\\
};
\addlegendentry{NNDN, $\eta=2\cdot\|w\|_2$}

\end{axis}
\end{tikzpicture}%
\end{subfigure}
%\vspace*{2mm}
\begin{subfigure}[b]{0.60\textwidth}
\hspace*{2mm}\vspace*{-3.0mm}
    \setlength\figureheight{45mm} 
    \setlength\figurewidth{60mm}
% This file was created by matlab2tikz.
%
%The latest updates can be retrieved from
%  http://www.mathworks.com/matlabcentral/fileexchange/22022-matlab2tikz-matlab2tikz
%where you can also make suggestions and rate matlab2tikz.
%
\definecolor{mycolor1}{rgb}{0.00000,0.44700,0.74100}%
\definecolor{mycolor2}{rgb}{0.85000,0.32500,0.09800}%
\definecolor{mycolor3}{rgb}{0.92900,0.69400,0.12500}%
\definecolor{mycolor4}{rgb}{0.49400,0.18400,0.55600}%
\definecolor{mycolor5}{rgb}{0.46600,0.67400,0.18800}%
\begin{tikzpicture}

\begin{axis}[%
width=0.951\figurewidth,
height=\figureheight,
at={(0\figurewidth,0\figureheight)},
scale only axis,
xmin=300,
xmax=2500,
xlabel style={font=\color{white!15!black}},
xlabel={Number of measurements $m$},
ymode=log,
ymin=5e-05,
ymax=0.01,
yminorticks=true,
ylabel style={font=\color{white!15!black}},
ylabel={Relative Frobenius reconstruction error $\|\hat{X}-X_0\|_{F}/\|X_0\|_{F}$},
axis background/.style={fill=white},
yticklabel pos=right,
legend style={font=\fontsize{4}{30}\selectfont, anchor=south, legend columns = 2, at={(0.5,1.12)}},
xlabel style={font=\tiny},ylabel style={font=\tiny},
]
\addplot [color=mycolor1, line width=1.0pt, mark size=3.0pt, mark=x, mark options={solid, mycolor1}]
  table[row sep=crcr]{%
315.656616577619	0.0169864646463464\\
325	0.00466258098743298\\
337	0.00284271050033135\\
350	0.00150957174804032\\
362	0.00131293777801014\\
375	0.00110666258097337\\
387	0.00102404179154309\\
400	0.000927537964421178\\
412	0.000818529886631242\\
425	0.000731081924664235\\
437	0.000676305331058398\\
450	0.000637218854013397\\
462	0.000600861361371094\\
475	0.000565318756436521\\
487	0.00054672132060951\\
500	0.00051821970598209\\
550	0.000427741094682461\\
600	0.000366412727562629\\
650	0.000319308531837485\\
700	0.00028419393065258\\
750	0.000255925632868375\\
800	0.000237372929912933\\
850	0.000222280782194099\\
900	0.000199877120042443\\
950	0.000191391626716485\\
1000	0.000178744756345471\\
1050	0.000167151169104654\\
1100	0.000160466523574845\\
1150	0.000152383427001985\\
1200	0.000148822928679299\\
1250	0.000141538982563772\\
1300	0.000137133281068001\\
1350	0.000132920780492881\\
1400	0.000133961931861949\\
1450	0.000129186326352178\\
1500	0.000128300698470468\\
1550	0.000127739270498421\\
1600	0.000132884254562177\\
1650	0.000129115946498635\\
1700	0.000128876590207126\\
1750	0.000130552804082939\\
1800	0.000134342785249366\\
1850	0.000134507987832006\\
1900	0.00014116593038624\\
1950	0.000147082763953101\\
2000	0.000148373883620878\\
2050	0.000156008043353727\\
2100	0.000164386284818\\
2150	0.000160889146804289\\
2200	0.000175674343651467\\
2250	0.000192020492871909\\
2300	0.000206623167127772\\
2350	0.000249293604968907\\
2400	0.000304685626579701\\
2450	0.000439415499464469\\
2500	0.00814798048909337\\
};
\addlegendentry{NucNorm}

\addplot [color=mycolor2, line width=1.0pt, mark size=3.0pt, mark=+, mark options={solid, mycolor2}]
  table[row sep=crcr]{%
312	0.0153618250682373\\
325	0.00711084642221174\\
337	0.00328022200324072\\
350	0.0033349774520125\\
362	0.00129606986943629\\
375	0.00110881269021092\\
387	0.000978660648859126\\
400	0.000875734407546196\\
412	0.000795529468238957\\
425	0.000753177571135449\\
437	0.000663426763389118\\
450	0.000628638932798695\\
462	0.000603029567444304\\
475	0.000565016945419073\\
487	0.000536846055247679\\
500	0.000512280542682399\\
550	0.000407298615856651\\
600	0.000352499074340924\\
650	0.000309046975291405\\
700	0.000277401907015603\\
750	0.000244741469362041\\
800	0.000220833028979761\\
850	0.000206460499631533\\
900	0.000185351347508018\\
950	0.000169251097319642\\
1000	0.000160020271647149\\
1050	0.000153286725743\\
1100	0.000142972984089511\\
1150	0.000133824964447069\\
1200	0.000124272585429127\\
1250	0.000119517073058568\\
1300	0.000113783670892474\\
1350	0.000107440524783703\\
1400	0.000103876717684939\\
1450	9.85929110853549e-05\\
1500	9.57819547868794e-05\\
1550	9.12744152905097e-05\\
1600	8.82463146731329e-05\\
1650	8.28817460899322e-05\\
1700	7.95625754120424e-05\\
1750	7.7740650962259e-05\\
1800	7.49457144129072e-05\\
1850	7.33284239579036e-05\\
1900	7.10893408915863e-05\\
1950	6.87391637252129e-05\\
2000	6.73733098091209e-05\\
2050	6.35901008009453e-05\\
2100	6.26713607855935e-05\\
2150	6.13548769331873e-05\\
2200	5.94013794398175e-05\\
2250	5.89383510415381e-05\\
2300	5.73009345219312e-05\\
2350	5.55515266564777e-05\\
2400	5.32428626800333e-05\\
2450	5.20303580342452e-05\\
2500	5.14825376960751e-05\\
};
\addlegendentry{MLasso, $\mu=0.025$}

\addplot [color=mycolor3, line width=1.0pt, mark size=3.0pt, mark=asterisk, mark options={solid, mycolor3}]
  table[row sep=crcr]{%
314.940075548473	0.0169864646463464\\
325	0.00702680661475925\\
337	0.00275905335408833\\
350	0.00191155266354831\\
362	0.00148680521854024\\
375	0.00139841811930184\\
387	0.00114465628850741\\
400	0.0010160521828757\\
412	0.000974840043163993\\
425	0.000875898935685297\\
437	0.000797855912126934\\
450	0.000702072787323034\\
462	0.000675235711313764\\
475	0.000629401971425779\\
487	0.000634165974029516\\
500	0.000570898454379686\\
550	0.000470185254848699\\
600	0.000403160292145051\\
650	0.000356544948123635\\
700	0.000312054927705331\\
750	0.000277087407937259\\
800	0.000246648863742112\\
850	0.000227171087977714\\
900	0.000208027852568223\\
950	0.000193233318881927\\
1000	0.000177682923408918\\
1050	0.000167016948258191\\
1100	0.000157840574574275\\
1150	0.000150324964855578\\
1200	0.000140776211156861\\
1250	0.000133755900105757\\
1300	0.000126681977078502\\
1350	0.00012270151512511\\
1400	0.000119773648892734\\
1450	0.000111367326415555\\
1500	0.000105520976424946\\
1550	0.000101726774897591\\
1600	9.83438229332704e-05\\
1650	9.47191724590961e-05\\
1700	9.24011889906013e-05\\
1750	8.83886978580448e-05\\
1800	8.55866666267683e-05\\
1850	8.47731263871389e-05\\
1900	7.93632459231693e-05\\
1950	7.87919503047564e-05\\
2000	7.51703905438208e-05\\
2050	7.43942950276602e-05\\
2100	7.21000605754204e-05\\
2150	6.96799227214113e-05\\
2200	6.80729205577018e-05\\
2250	6.62770045472414e-05\\
2300	6.48300011554885e-05\\
2350	6.2044181141334e-05\\
2400	6.03423450328025e-05\\
2450	6.00176087721257e-05\\
2500	5.8899850577977e-05\\
};
\addlegendentry{MLasso, $\mu=0.1$}

\addplot [color=mycolor4, line width=1.0pt, mark size=3.0pt, mark=o, mark options={solid, mycolor4}]
  table[row sep=crcr]{%
325	0.011271514236293\\
337	0.00414544415766048\\
350	0.00352335309291655\\
362	0.00294609440504968\\
375	0.00245711318282927\\
387	0.00219007493620935\\
400	0.00202409813282142\\
412	0.00173961557265812\\
425	0.00152579391571468\\
437	0.0014505609587962\\
450	0.00137255878822372\\
462	0.00120252153266335\\
475	0.00114450617544668\\
487	0.00105217657113917\\
500	0.00101819852928642\\
550	0.000759234555336828\\
600	0.000640471589784219\\
650	0.00055186345379722\\
700	0.000460000194825203\\
750	0.000420561166040956\\
800	0.0003676171651222\\
850	0.000336055972060586\\
900	0.000295996322943618\\
950	0.000283235485302548\\
1000	0.000257018144107318\\
1050	0.000238869592363354\\
1100	0.000224332315711282\\
1150	0.000207188256691994\\
1200	0.000200445339019693\\
1250	0.000185227509417758\\
1300	0.000176552170316938\\
1350	0.000165765709406313\\
1400	0.000158246312310622\\
1450	0.000148725378551299\\
1500	0.000144524545143698\\
1550	0.000139164052323255\\
1600	0.00013245347669538\\
1650	0.000129884075515999\\
1700	0.000122586772043585\\
1750	0.000119200791521652\\
1800	0.000112885321339084\\
1850	0.000110094932885172\\
1900	0.00010837181641556\\
1950	0.000100983883966028\\
2000	9.75609715463062e-05\\
2050	9.6554729358075e-05\\
2100	9.42475423969657e-05\\
2150	9.23935550896621e-05\\
2200	8.7755504564915e-05\\
2250	8.61471698488983e-05\\
2300	8.40381541781753e-05\\
2350	8.17923441808876e-05\\
2400	7.98672268475842e-05\\
2450	7.87886846852613e-05\\
2500	7.79459854429428e-05\\
};
\addlegendentry{MLasso, $\mu=0.25$}

\addplot [color=mycolor5, line width=1.0pt, mark size=3.0pt, mark=x, mark options={solid, mycolor5}]
  table[row sep=crcr]{%
325	0.0131699430825287\\
337	0.00843083862440618\\
350	0.00662342864675862\\
362	0.00546555204189234\\
375	0.00509993526419907\\
387	0.00415499070300463\\
400	0.00361807272274605\\
412	0.00345479855691498\\
425	0.00301397117372948\\
437	0.00270555686914759\\
450	0.00228596491959803\\
462	0.00216781858447504\\
475	0.00203798704003512\\
487	0.00206435966773473\\
500	0.00178498596634352\\
550	0.00138507153309241\\
600	0.00111607644859839\\
650	0.000929263073060644\\
700	0.000777085769224866\\
750	0.000705745129918051\\
800	0.000614627505519567\\
850	0.000553554336057251\\
900	0.000505133293831027\\
950	0.000467678392480752\\
1000	0.000423434926048339\\
1050	0.000398623746071207\\
1100	0.000369718230537398\\
1150	0.000342379825813632\\
1200	0.000325761591762357\\
1250	0.00030974757688167\\
1300	0.000293307301074079\\
1350	0.000278374000743963\\
1400	0.00027172374717804\\
1450	0.00025338354936554\\
1500	0.000241116705248307\\
1550	0.000232081269095971\\
1600	0.000227020613735685\\
1650	0.000214911809138109\\
1700	0.000207373992134722\\
1750	0.00020035856036941\\
1800	0.000196145964524206\\
1850	0.000184094262766738\\
1900	0.000180758758733167\\
1950	0.000174750789611592\\
2000	0.000172994634580631\\
2050	0.000164999884077273\\
2100	0.000157644414090058\\
2150	0.000155785277081957\\
2200	0.000151498880527016\\
2250	0.000145093538709188\\
2300	0.000142583976624641\\
2350	0.000141713687967292\\
2400	0.000135168844808156\\
2450	0.000132795489835819\\
2500	0.000131124149397064\\
};
\addlegendentry{MLasso, $\mu=0.5$}

\end{axis}
\end{tikzpicture}%
\end{subfigure}
%\vspace*{-8mm}
\caption{\textbf{Comparison of \texttt{NucNorm}, \texttt{NucNormDN} and \texttt{MatrixLasso} for phase retrieval from many noisy measurements}: Recovery of normalized rank-$1$ matrices $X_0=x_0 x_0^{*} \in \bb C^{50 \times 50}$, $n=50$, from noisy rank-one measurements $b = \mathcal{A}(X_0) + w \in \R^m$ with spherical noise $w$ such that $\|w\|_2= 0.01$. Left column: \texttt{NucNormDN (NNDN)} $\Delta_{2,\eta}$ for different parameters $\eta$, right column: \texttt{MatrixLasso (MLasso)} $\Delta_{2,\mu}^{\text{Lasso}}$  for different parameters $\mu$.}
\label{fig:QP:experiment:NNDNLasso:rank1}
\end{figure}
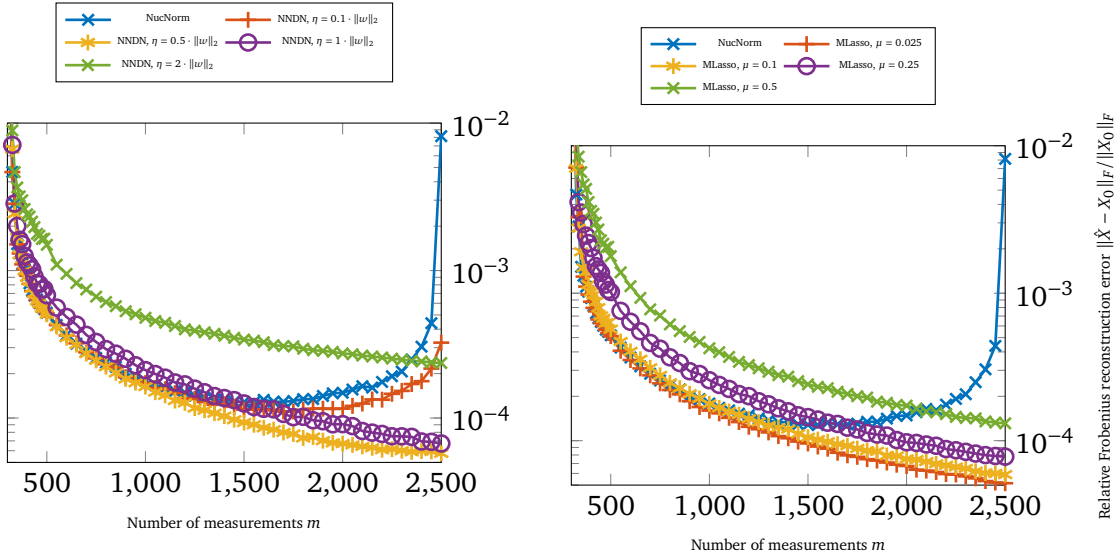

Next, we repeat the same experiment for the methods \texttt{NucNorm} as well as \texttt{NucNormDN (NNDN)} $\Delta_{2,\eta}$ and \texttt{MatrixLasso (MLasso)} $\Delta_{2,\mu}^{\text{Lasso}}$ with different choices of the noise balancing parameters $\eta$ and $\mu$, respectively, concentrating on a higher range of measurements with $m$ between $m=300$ and $m=2500$. For $m=2500$, given that we consider square matrices of with size $n=50$, we would be able to solve the reconstruction problem exactly in the absence of noise by simple linear algebra. We choose the parameter $\eta$ of \texttt{NucNormDN} in the range of $\eta \in  \{0.1,0.5,1,2\}$ and the parameter $\mu$ of \texttt{MatrixLasso} in the range of $\mu \in \{0.025,0.1,0.25,0.5\}$. We do not report results of this experiment for a number of measurements $m < 300$, as the results are all very similar in that range. We illustrate the results of this experiment in \Cref{fig:QP:experiment:NNDNLasso:rank1}. 

Our first observation is that the the choice of $\eta$ and $\mu$ is especially sensitive to \emph{overestimates}: We see in the first column of \Cref{fig:QP:experiment:NNDNLasso:rank1} that for a moderate number of measurements $m < 1000$, the smallest reconstruction errors are obtained by the noise-blind method \texttt{NucNorm} (which corresponds to \texttt{NucNormDN} with $\eta = 0$) and by the choices of $\eta=0.1 \|w\|_2$ and $\eta = 0.5 \|w\|_2$ which \emph{underestimate} the noise level. On the other hand, the ``oracle'' choice of $\eta = 1\|w\|_2$ is slightly worse in this range of $m$, and \emph{overestimate} of a factor $2$ by choosing $\eta= 2\|w\|_2$ deteriorates the relative Frobenius error of the reconstruction considerably. In the right column of \Cref{fig:QP:experiment:NNDNLasso:rank1}, we see that \texttt{MatrixLasso} the balancing parameter $\mu$ of the 
\texttt{MatrixLasso} $\Delta_{2,\mu}^{\text{Lasso}}$ exhibits similar behavior as $\eta$ for \texttt{NucNormDN}, up to a different scaling. Increasing the number of measurements $m$ reduces the relative Frobenius reconstruction error consistently for all methods as long as $m < 1000$.

For very large numbers of measurements $1500 \leq m \leq 2500$, however, we observe an additional phenomenon: The performance of \texttt{NucNorm} and \texttt{NucNormDN} with $\eta = 0.1 \|w\|_2$ actually \emph{deteriorates} as $m$ grows in that range. For \texttt{MatrixLasso}, this is not observed, at least not for the considered parameter choices of $\mu \geq 0.025$.

One explanation of this observation is that if $m = 2500$ or close to that, the linear system $\mathcal A X = b$ has only a unique solution (or a very low-dimensional set of solutions), so that the noise-blind method \texttt{NucNorm} has only a unique feasible point (or a very low-dimensional feasible set). If $b$ is quite noisy, it is reasonable to expect that the solution of \texttt{NucNorm} is farther away from the ground truth rank-one matrix $X_0$ than methods that do \emph{not} use a strict equality constraint, but allow for a trade-off between data-fit and complexity measure, such as \texttt{NucNormDN} or \texttt{MatrixLasso} for larger noise parameters $\eta$ and $\mu$.

These observations can be also interpreted in view of the $S_1$-quotient property. Evidently, for large $m$ and severely underestimated noise level, a reconstruction error that is proportional to the $\ell_2$-norm of the noise $\|w\|_2$, as in the statement of \Cref{col: final}, is not possible in general. We recall that the proof of \Cref{col: final} is based on the $S_1$-quotient property. Thus, it is clear that some sort of upper bound on the the number of measurements $m$ with respect to the dimension $n$ in the assumption of \Cref{col: final} is to be expected and not just an artifact of our proof.

\subsection{Robust Recovery of Hermitian Rank-Two Matrices} \label{sec:num:ranktwo:indefinite}
\begin{figure}[!b]
%\centering
%\begin{subfigure}[b]{0.5\textwidth}
    \setlength\figureheight{60mm} 
\setlength\figurewidth{136mm}
% This file was created by matlab2tikz.
%
%The latest updates can be retrieved from
%  http://www.mathworks.com/matlabcentral/fileexchange/22022-matlab2tikz-matlab2tikz
%where you can also make suggestions and rate matlab2tikz.
%
\definecolor{mycolor1}{rgb}{0.00000,0.44700,0.74100}%
\definecolor{mycolor2}{rgb}{0.92900,0.69400,0.12500}%
\definecolor{mycolor3}{rgb}{0.49400,0.18400,0.55600}%
\definecolor{mycolor4}{rgb}{0.46600,0.67400,0.18800}%
\definecolor{mycolor5}{rgb}{0.63500,0.07800,0.18400}%
\begin{tikzpicture}

\begin{axis}[%
width=0.951\figurewidth,
height=\figureheight,
at={(0\figurewidth,0\figureheight)},
scale only axis,
xmin=50,
xmax=625,
xlabel style={font=\color{white!15!black}},
xlabel={Number of measurements $m$},
ymode=log,
ymin=0.0001,
ymax=1,
yminorticks=true,
ylabel style={font=\color{white!15!black}},
ylabel={Relative Frobenius reconstruction error $\|\hat{X}-X_0\|_{F}/\|X_0\|_{F}$},
axis background/.style={fill=white},
yticklabel pos=right,
legend style={font=\fontsize{8}{30}\selectfont, legend cell align=left, align=left, draw=white!15!black},
xlabel style={font=\tiny},ylabel style={font=\tiny},
]
\addplot [color=mycolor1, line width=1.0pt, mark size=4.5pt, mark=x, mark options={solid, mycolor1}]
  table[row sep=crcr]{%
50	0.744558933086373\\
62	0.696558489655408\\
75	0.628152823986283\\
87	0.58246782682833\\
100	0.529222800984889\\
112	0.491806558192419\\
125	0.393760649835142\\
137	0.331730309650047\\
150	0.266891486317981\\
162	0.214990519799712\\
175	0.159816624534038\\
187	0.103063270480435\\
200	0.0599397175154906\\
212	0.0207706837049355\\
225	0.00592221248628261\\
237	0.00139180342382917\\
250	0.00119684037858719\\
262	0.000876341110566917\\
275	0.000791830770647284\\
287	0.000685391629690381\\
300	0.000595231613281585\\
312	0.000568200981736187\\
325	0.00049043922032269\\
337	0.000516907750828383\\
350	0.000443636188295329\\
362	0.000409813651759032\\
375	0.000394016589451916\\
387	0.00038939748058865\\
400	0.000379541482995713\\
412	0.000379672948782205\\
425	0.000380187597616557\\
437	0.000385907376414512\\
450	0.000336450408998027\\
462	0.000390580702599454\\
475	0.000392957216833118\\
487	0.000398580279454263\\
500	0.000382043040482965\\
525	0.000429328982162703\\
550	0.000498296370188164\\
575	0.000709920776933557\\
600	0.000668328562086573\\
625	0.0149600655350711\\
650	0.000636743161615789\\
};
\addlegendentry{NucNorm}

\addplot [color=mycolor2, line width=1.0pt, mark size=4.5pt, mark=asterisk, mark options={solid, mycolor2}]
  table[row sep=crcr]{%
50	0.744601660950051\\
62	0.696601673752524\\
75	0.628200562451027\\
87	0.582515433375694\\
100	0.529277133764327\\
112	0.491866443287587\\
125	0.393825086412863\\
137	0.3318027073558\\
150	0.266956414351485\\
162	0.215058804557282\\
175	0.159889024571484\\
187	0.103151746747758\\
200	0.0600601606763561\\
212	0.0209530911868248\\
225	0.00596557633012854\\
237	0.00151844998792346\\
250	0.00115075326824061\\
262	0.000909582730720645\\
275	0.000726832466255833\\
287	0.00066029188410391\\
300	0.000552587987332828\\
312	0.000506606733660192\\
325	0.000461305802498297\\
337	0.000453089675757339\\
350	0.000379160458480435\\
362	0.000369967888362122\\
375	0.000351941449610823\\
387	0.000321218661967556\\
400	0.000300859061629382\\
412	0.000295534060462412\\
425	0.00027719530368052\\
437	0.000270714081734091\\
450	0.000254533892242186\\
462	0.000241861606345105\\
475	0.000234799739897126\\
487	0.000221559864510248\\
500	0.000209847559075567\\
525	0.000201556108052536\\
550	0.000185117960072708\\
575	0.00017422866991967\\
600	0.000175963219871556\\
625	0.000176704784784502\\
650	0.000158547680354591\\
};
\addlegendentry{NNDN, $\eta=0.5\cdot\|w\|_2$}

\addplot [color=mycolor3, line width=1.0pt, mark size=4.5pt, mark=o, mark options={solid, mycolor3}]
  table[row sep=crcr]{%
50	0.744645143596799\\
62	0.696645062951588\\
75	0.628248759124299\\
87	0.582563154595297\\
100	0.529331834148988\\
112	0.491926004404626\\
125	0.393888464277258\\
137	0.331875682159639\\
150	0.267021518952151\\
162	0.215127793604363\\
175	0.159975187993247\\
187	0.103312889938175\\
200	0.0603529994267624\\
212	0.0214727303302781\\
225	0.00637556926909584\\
237	0.00193258150931762\\
250	0.00138900806900407\\
262	0.00116810603320864\\
275	0.000903744123898623\\
287	0.000864557477468958\\
300	0.000715634689274184\\
312	0.000657807115726813\\
325	0.000604309943314948\\
337	0.000584118343673384\\
350	0.000512774048838156\\
362	0.00050226163548182\\
375	0.000481477671794637\\
387	0.000416182107031357\\
400	0.000424729411295225\\
412	0.000383273191029966\\
425	0.000369516518842348\\
437	0.000374415153901127\\
450	0.000351021968851185\\
462	0.000312029097473732\\
475	0.000307957876634196\\
487	0.000302129027203449\\
500	0.000294960495638407\\
525	0.000266407042050576\\
550	0.000252179975661861\\
575	0.000237721769070706\\
600	0.000209528349966862\\
625	0.000200872332525897\\
650	0.000198979001358204\\
};
\addlegendentry{NNDN, $\eta=1\cdot\|w\|_2$}

\addplot [color=mycolor4, line width=1.0pt, mark size=4.5pt, mark=x, mark options={solid, mycolor4}]
  table[row sep=crcr]{%
50	0.744732747070037\\
62	0.696732121037755\\
75	0.628344641786331\\
87	0.582659307889464\\
100	0.529440802238133\\
112	0.492045161108082\\
125	0.394015563277534\\
137	0.332024004608034\\
150	0.267155630357862\\
162	0.215272909131541\\
175	0.16017876592339\\
187	0.103747022487528\\
200	0.0611816398785272\\
212	0.0229528204242746\\
225	0.0076751371721852\\
237	0.00312647081694673\\
250	0.00234756672373621\\
262	0.00201176409875718\\
275	0.00163201676947552\\
287	0.00160229184149503\\
300	0.00137092147565007\\
312	0.00128271083751678\\
325	0.00121022633959165\\
337	0.00118868633699717\\
350	0.00107695469125846\\
362	0.00106112464905571\\
375	0.00102880349840814\\
387	0.00094305033274913\\
400	0.000951161319640028\\
412	0.000900169703926485\\
425	0.000872825316138539\\
437	0.000887530246656077\\
450	0.000843197170879108\\
462	0.000807377751604838\\
475	0.000799625768053872\\
487	0.000778094579598133\\
500	0.00075888824773045\\
525	0.000730616547095216\\
550	0.000703208197755737\\
575	0.000673718178721649\\
600	0.00063612468828081\\
625	0.000623739876197023\\
650	0.000601140478433263\\
};
\addlegendentry{NNDN, $\eta=2\cdot\|w\|_2$}

\addplot [color=mycolor5, dashdotted, line width=1.0pt, mark size=4.5pt, mark=diamond, mark options={solid, mycolor5}]
  table[row sep=crcr]{%
50	0.744590598500047\\
62	0.696592212820235\\
75	0.628194577787721\\
87	0.582512343022435\\
100	0.529273723101967\\
112	0.491864032297974\\
125	0.393825586211017\\
137	0.331809675619626\\
150	0.266960964162926\\
162	0.215066220846701\\
175	0.159898963240254\\
187	0.103161937292863\\
200	0.060077614312296\\
212	0.0209743981564223\\
225	0.00597117877310139\\
237	0.00151187531648397\\
250	0.00115372710588381\\
262	0.000909965497646503\\
275	0.000729786908259564\\
287	0.000665985658436226\\
300	0.000556083491425201\\
312	0.000510806452238101\\
325	0.000464993756030524\\
337	0.000458207395358411\\
350	0.000382545288584843\\
362	0.00037396142160013\\
375	0.0003570853373622\\
387	0.000323463982755321\\
400	0.000306930810836514\\
412	0.000298852930956791\\
425	0.000280380835970869\\
437	0.000276444816900737\\
450	0.000258601055694153\\
462	0.000241214951049361\\
475	0.000236037046846237\\
487	0.000224775265973561\\
500	0.000214512629489481\\
525	0.00020124025730296\\
550	0.000182739436542939\\
575	0.000171430030290003\\
600	0.000161760948489931\\
625	0.000154883282944267\\
650	0.00014586833710825\\
};
\addlegendentry{MLasso, $\mu=0.025$}

\addplot [color=mycolor1, dashdotted, line width=1.0pt, mark size=4.5pt, mark=triangle, mark options={solid, rotate=180, mycolor1}]
  table[row sep=crcr]{%
50	0.742144050142206\\
62	0.70016454853509\\
75	0.641229372114157\\
87	0.581003014984322\\
100	0.518223674284233\\
112	0.466310680083782\\
125	0.390634013590138\\
137	0.335241697484722\\
150	0.279125055803603\\
162	0.218700992262659\\
175	0.152202203272614\\
187	0.1229746140878\\
200	0.0564751676106389\\
212	0.0345020325952294\\
225	0.0107835317951245\\
237	0.00565566446359283\\
250	0.00411314345034341\\
262	0.00343061438665022\\
275	0.0026121743044578\\
287	0.0025986517919122\\
300	0.00212697548752134\\
312	0.00190147898449098\\
325	0.00171128329883811\\
337	0.00164754443202202\\
350	0.0014989985424259\\
362	0.00140700847591889\\
375	0.00136213575328751\\
387	0.00122552177200958\\
400	0.00120175742457904\\
412	0.00109844321071106\\
425	0.00104499498873447\\
437	0.00105656945622722\\
450	0.000959194015200244\\
462	0.000925132081545231\\
475	0.000891630848842636\\
487	0.000852471532625673\\
500	0.000814646726456697\\
525	0.000780297900847694\\
550	0.000692144075199876\\
575	0.000686678155328135\\
600	0.000619596225882665\\
625	0.000579220682371027\\
650	0.000565018422446672\\
};
\addlegendentry{MLasso, $\mu=0.25$}

\end{axis}
\end{tikzpicture}%
\caption{\textbf{Comparison of \texttt{NucNorm}, \texttt{NucNormDN} and \texttt{MatrixLasso} for the recovery of indefinite rank-2 matrices recovery from noisy measurements}: Recovery of \mbox{rank-$2$} matrices $X_0 = 1\cdot v_1 v_1^*+ (-0.5) v_2 v_2^* \in \bb C^{n \times n}$, $n=25$, from noisy rank-one measurements $b = \mathcal{A}(X_0) + w \in \R^m$ with spherical noise $w$ such that $\|w\|_2= 0.01$.} %\emph{x-axis}: Number of measurements $m$, \emph{y-axis}: Relative Frobenius error $\|\widehat{X}-X_0\|_F/\|X_0\|_F$ of reconstruction $\widehat{X}$, averaged across $100$ independent experiments.
\label{fig:QP:experiment:3}
\end{figure}
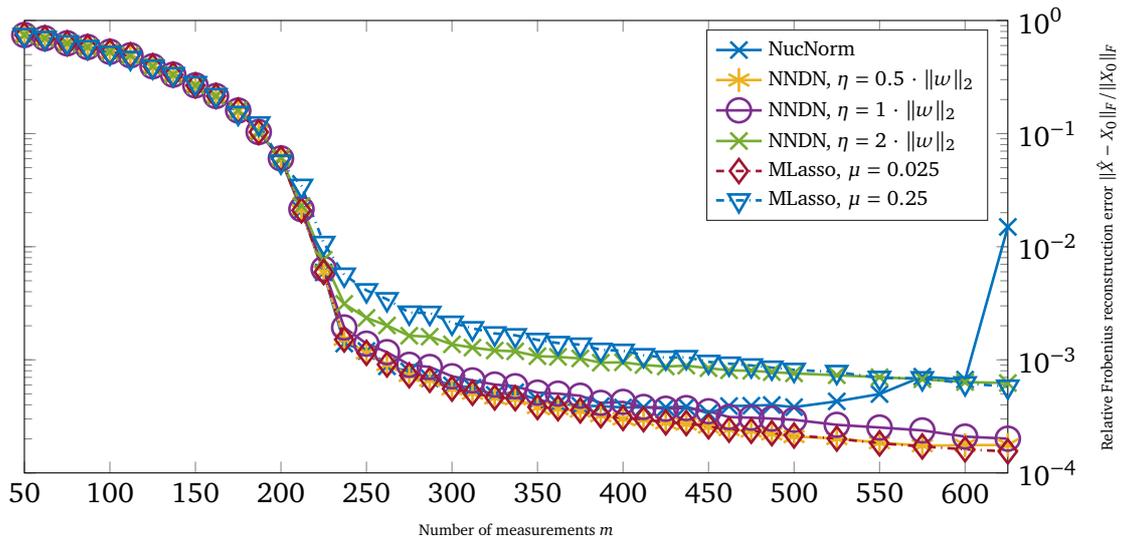

We recall that in the experiment of \Cref{fig:QP:experiment:1}, it was observed that methods incorporating the positive definiteness constraint into their optimization such as \texttt{PosDef-$\ell_2$-min} and \texttt{PosDef-$\ell_1$-min} resulted in the lowest reconstruction errors for the recovery of rank-one matrices.

In this subsection, we consider the recovery of Hermitian rank-$2$ matrices $X_0 = v_1 v_1^*+ (-0.5) v_2 v_2^*$ with orthonormal vectors $v_1,v_2 \in \R^n$ that are drawn uniformly from the Stiefel manifold. For this case, the only available reconstruction methods (among the ones considered in this paper) are \texttt{NucNorm}, \texttt{NucNormDN} and the \texttt{MatrixLasso} as $X_0$ is not positive semidefinite. Independently from $v_1$ and $v_2$ and the complex Gaussian vectors $a_j$ defining $\q A$, we sample a random real vector $w \in \R^m$ from the sphere $ \eta S^{m-1} = \{w \in \R^m: \|w\|_2 = \eta \}$ with radius $\eta = 0.01$ and define the measurement vector 
\begin{equation} \label{eq:noisymeas:num}
 b := \q A(X_0) + w,
\end{equation}
that is perturbed by \emph{spherical noise} $w$ such that $\|w\|_2 = \eta = 0.01$.

For $n=25$ and a range of parameters $m$ between $m=r n =50$ and $m= n^2= 625$, we compare the reconstructions $\widehat{X}$ of the recovery algorithms \texttt{NucNorm}, as well as \texttt{NucNormDN (NNDN)} $\Delta_{2,\eta}$ and \texttt{MatrixLasso (MLasso)} $\Delta_{2,\mu}^{\text{Lasso}}$, which are provided with $b$ as an input, with $X_0$ and measure the relative Frobenius error $\|\widehat{X}-X_0\|_F/\|X_0\|_F$. For \texttt{NucNormDN} and \texttt{MatrixLasso}, we provide different noise level parameters $\eta$ and $\mu$ as an input parameter.

\begin{figure}[t]
%\centering
%\begin{subfigure}[b]{0.5\textwidth}
    \setlength\figureheight{85mm} 
    \setlength\figurewidth{138mm}
\input{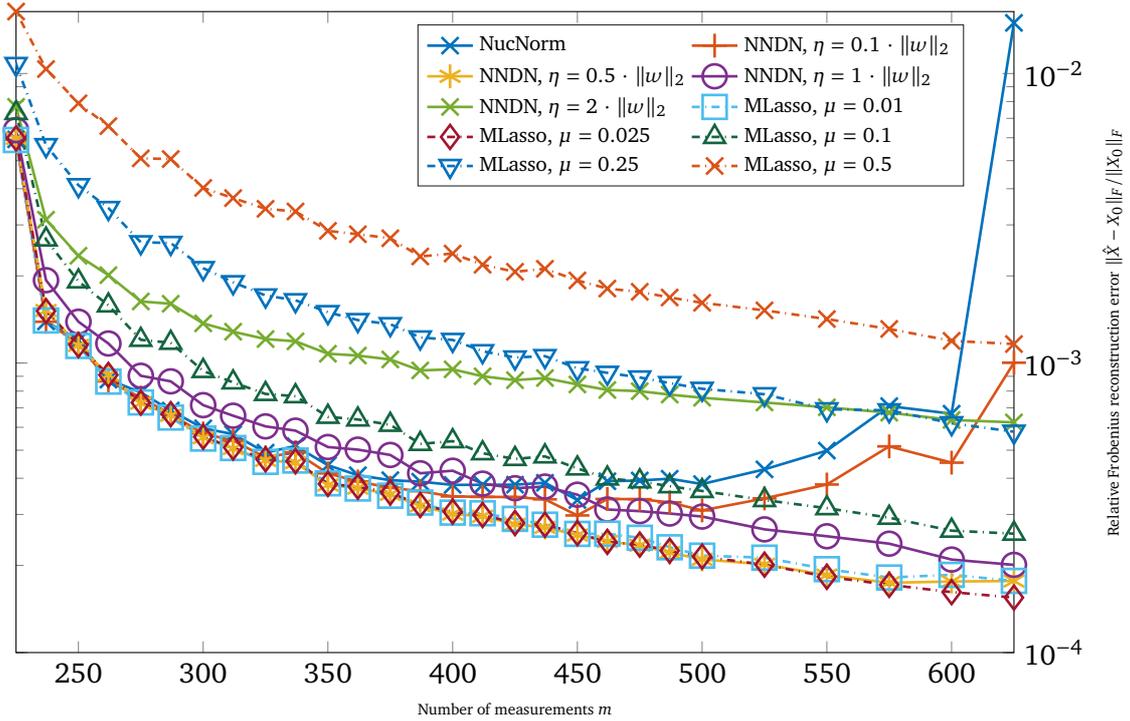}
\caption{\textbf{Comparison of \texttt{NucNorm}, \texttt{NucNormDN} and \texttt{MatrixLasso} for the recovery of indefinite rank-2 matrices recovery from many noisy measurements.} As \Cref{fig:QP:experiment:3}, but with a number of measurements in restricted range $m \in \{225,\ldots,625\}$.}
\label{fig:QP:experiment:4}
\end{figure}

In \Cref{fig:QP:experiment:3} we report the resulting recovery errors for $\eta \in \{0.5 \|w\|_2,1 \|w\|_2, 2 \|w\|_2\}$ and $\eta \in \{0.025, 0.25\}$, averaged across $100$ independent realizations of experimental setup. We observe that the relative Frobenius error of the reconstruction falls below $10^{-2}$ between $m=200$ and $m=250$ for all considered methods. This is interesting since compared to the experiments in \Cref{sec:num:rankone}, the dimension was halved from $n=50$ to $n=25$, but the rank was doubled from $r=1$ to $r=2$, and these methods needed more than $m=300$ measurements to obtain an error below $10^{-2}$. We note that among the considered methods, \texttt{NucNormDN} for $\eta = 0.5\|w\|_2$ and \texttt{MatrixLasso} for $\mu = 0.025$ result in the lowest errors for all considered $m$, with $\eta = 0.5\|w\|_2$ being a choice that \emph{underestimates} the noise level by $50\%$. \texttt{NucNorm} does almost equally well for a moderate number of measurements until $m \approx 350$, after which its performance deteriorates from around $3.8\cdot 10^{-4}$ to $1.5 \cdot 10^{-2}$ at $m=625$, when the system becomes square.

Finally, in \Cref{fig:QP:experiment:4}, we illustrate the results of the same experiment for more parameters $\eta$ and $\mu$ when restricted to $m \in \{225,\ldots, 625\}$ -- such a measurement complexity would result in exact recovery via \texttt{NucNormDN} in the case of noiseless measurements $w = 0$. We observe also here that an \emph{overestimate} of $\eta$ and $\mu$ (e.g., by $100\%$ with the choice of $\eta = 2 \|w\|_2$ or with $\mu = 0.5$ for \texttt{MatrixLasso}) has more negative consequences than underestimating their ``oracle'' choice. Choosing $\eta$ as small as $\eta = 0.1  \|w\|_2$ results in a qualitatively very similar behavior as \texttt{NucNorm} with a performance deterioration for large $m$.

As a summary, we note these experiments show that a noise-blind recovery of low-rank indefinite matrices is indeed possible via convex formulations such as \texttt{NucNorm} if the measurement matrices are random and rank-one. This is compatible with the theory of \Cref{thm: quotient+NSP to optimality,col: final,thm: QP} that are based on the $S_2$-robust null space property and the $S_1$-quotient property of the measurement operator $\q A$. We also note that noise-blind recovery works particularly well if the number of measurements $m$ is only \emph{moderate} such that $m$ is closer to $m \approx C r n$ than to $m \approx C n^2$.

\section{Proofs}\label{sec: proofs}
In this section, we detail the proofs of our main results \Cref{thm: QP} about the $S_1$-quotient property of random measurement operators with Gaussian rank-one matrices and \Cref{col: final} about noise-blind recovery guarantees for such measurements. The proof of a general result for arbitrary measurement operators, \Cref{thm: quotient+NSP to optimality} concludes this section.%,of \Cref{col: Lasso noiseless,col: final}.

\subsection{Proof of \Cref{thm: QP}} 
%\subsubsection{Auxiliary statements}
%
%Proving the $S_1$-quotient property directly by Definition \ref{def: qp} requires both proving the existence of matrix $U$ and an bound on the sum of its singular values, which is a cumbersome task. Likely, it is possible to provide an equivalent definition in terms of the adjoint operator and dual norms.

%The proof of the theorem again closely follows its $\ell_1$-quotient counterpart and can be found in the appendix. 

%\Cref{thm: equivalent qp} provides us a convenient  tool to prove the $S_1$-quotient property of the measurement operator $\q A$. However, it involves the adjoint operator $\q A^*$, which is yet unknown. Next lemma provides an explicit formula for $\q A^*$.

\subsubsection{Proof concept and structure}
The goal of this section is to establish that the scaled measurement operator $\frac{1}{\sqrt m}\q{A}$ possesses the \mbox{$S_1$-quotient} property with constant $\frac{128 \sqrt 2}{\kappa} \sqrt{\kappa m/n}$ and rank $\kappa m/n$ relative to the norm $\norm{\cdot}_2$ on $\bb{R}^m$ for all $\kappa > 0$ with high probability. We show this property by establishing an equivalent form, as given by the following proposition, which is exactly in line with the analogous result for sparse recovery (see, e.g., \cite[Lemma 11.17]{Foucart.2013}). For completeness, we provide a proof in the appendix.

\begin{proposition}\label{thm: equivalent qp}
	For $q \ge 1$, a measurement operator $\q{A}: \q{H}_{n} \to \bb{R}^{m}$ possess the $S_q$-quotient property with constant $d$ and rank $r_*$ relative to norm $\norm{\cdot}$ if and only if
	\begin{equation}\label{eq: qp in dual norms}
	\norm{w}_{*} \le d r_{*}^{\frac{1}{q}-\frac{1}{2}}\norm{\q{A}^* w}_{q^{*}}, \ \text{for all } w \in \bb{R}^{m},
	\end{equation}
	where $\q{A}^*$ is the adjoint of the measurement operator, $\norm{\cdot}_{*}$ is a dual norm associated with $\norm{\cdot}$ and $q^{*}$ is a H\"{o}lder dual of $q$.
\end{proposition}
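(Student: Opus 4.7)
My plan is to derive the equivalence via polar duality in $\bb{R}^m$. First, I observe that by positive homogeneity of both $\norm{\cdot}$ and $\norm{\cdot}_{S_q}$, the $S_q$-quotient property from \Cref{def: qp} is equivalent to the geometric inclusion
\begin{equation*}
\left\{w \in \bb{R}^m : \norm{w} \le 1\right\} \subset K := d\, r_*^{1/q - 1/2}\, \q{A}\bigl(B_{S_q}^{n \times n}\bigr),
\end{equation*}
entirely in the spirit of \cref{eq: QP: geometric}. Because $B_{S_q}^{n \times n}$ is compact, convex, and symmetric, and $\q{A}$ is linear, the set $K$ is a closed, convex, symmetric subset of $\bb{R}^m$ that contains the origin.

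Next, I would take polars in $\bb{R}^m$ (equipped with the standard Euclidean inner product $\langle\cdot,\cdot\rangle$). By definition, the polar of the unit ball $\{w : \norm{w}\le 1\}$ is the unit ball $\{y : \norm{y}_* \le 1\}$ of the dual norm $\norm{\cdot}_*$. The polar of $K$ can be computed using the identity $\langle y, \q{A} U\rangle = \langle \q{A}^* y, U\rangle_F$ together with the fact that $\norm{\cdot}_{S_{q^*}}$ is dual to $\norm{\cdot}_{S_q}$ via H\"older's inequality for Schatten norms, yielding
\begin{equation*}
K^{\circ} = \bigl\{y \in \bb{R}^m : d\, r_*^{1/q-1/2}\, \norm{\q{A}^* y}_{S_{q^*}} \le 1\bigr\}.
\end{equation*}
Since $K$ is closed, convex, symmetric, and contains $0$, the bipolar theorem applies and reverses inclusions under polarization; thus the inclusion from the previous paragraph is equivalent to $K^{\circ} \subset \{y : \norm{y}_* \le 1\}$.

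To finish, I will translate this last inclusion back into a norm inequality. For any nonzero $y \in \bb{R}^m$, the rescaled vector $y / \bigl(d\, r_*^{1/q-1/2}\, \norm{\q{A}^* y}_{S_{q^*}}\bigr)$ lies on the boundary of $K^{\circ}$, so the inclusion yields
\begin{equation*}
\norm{y}_* \le d\, r_*^{1/q - 1/2}\, \norm{\q{A}^* y}_{S_{q^*}},
\end{equation*}
which is precisely \cref{eq: qp in dual norms}; the same scaling argument runs in the reverse direction, giving the full equivalence. I do not expect real obstacles in executing this plan: the only technical step is the closedness of $\q{A}(B_{S_q}^{n\times n})$, which is immediate in finite dimensions, and the rest is a textbook application of the bipolar theorem.
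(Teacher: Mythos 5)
Your proposal is correct, but it takes a genuinely different route from the paper. The paper proves the two implications separately and rather concretely: the forward direction picks a norming functional $y$ with $\norm{y}=1$ and $\norm{w}_* = \langle y, w\rangle$, lifts it via the quotient property, and applies H\"older; the reverse direction is the delicate part, constructing the minimal-Schatten-norm preimage $U=\Delta_q(w)$, perturbing along $\ker\q A$ to build a dual certificate $W$ with $\norm{W}_{q^*}=1$, $\langle W,U\rangle_F=\norm{U}_q$ and $W\perp\ker\q A$, and then handling $q=1$ by a separate limiting argument $q_j\downarrow 1$ (since the certificate construction degenerates there). Your bipolar argument replaces all of this with a single duality step that treats both directions and all $q\ge 1$ uniformly, which is cleaner and shorter. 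Two points you gloss over are worth making explicit if you write it up: (i) in computing $K^\circ$ you need not just H\"older's inequality but also its \emph{achievability within the Hermitian subspace} $\q H_n$, i.e., $\sup\{\langle W,U\rangle_F : U\in\q H_n,\ \norm{U}_{S_q}\le 1\}=\norm{W}_{S_{q^*}}$ for Hermitian $W$ (true, since the extremal $U$ can be taken to share $W$'s eigenvectors, but it is the equality, not the inequality, that identifies $K^\circ$); and (ii) your final rescaling step tacitly assumes $\q A^* y\ne 0$ for $y\ne 0$ — this is fine because if $\q A^*$ had a nontrivial kernel then $K^\circ$ would contain a line and could not sit inside the bounded dual ball, so both sides of the equivalence force injectivity of $\q A^*$ (equivalently, surjectivity of $\q A$, which the paper establishes by a separate contradiction argument). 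With those two remarks added, your proof is complete and arguably preferable for its uniform treatment of $q=1$.
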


%By \Cref{thm: equivalent qp}, it is equivalent to showing that Inequality \eqref{eq: qp in dual norms} holds with high probability. 
%To establish \eqref{eq: qp in dual norms},
 Recalling that the dual of $\norm{\cdot}_2$ is $\norm{\cdot}_2$ itself and the H{\"o}lder dual of $1$ is $\infty$, we observe that after normalization of $w$, Inequality \eqref{eq: qp in dual norms} reads as
\begin{equation}\label{eq: desired final}
\norm{\frac{1}{\sqrt m} \q{A}^* w}_\infty \ge \frac{n}{128\sqrt{2}m}, \quad \text{for all } w \in \bb{R}^m \text{ satisfying } \norm{w}_2 = 1.
\end{equation}

%Further, by dropping the scaling factor, it suffices to show inequality
%\[
%\norm{\q{A}^* w}_\infty \ge \frac{n}{128\sqrt{2m}}.
%\]

We will establish this inequality via a covering argument (see Section \ref{sec: joining results}), for which we need
\begin{equation}\label{eq: desired}
\norm{\frac{1}{\sqrt m} \q{A}^* w}_\infty \ge \frac{n}{64\sqrt{2}m},
\end{equation}
to hold with high probability for any  fixed vector $w$ with $\norm{w}_2 = 1$. 

%In , these bounds are united to form a uniform bound, which holds for all $w\in \bb{R}^m$ such that  $\norm{w}_2 = 1$ simultaneously.

The left hand side can be made explicit via the following well-known formula for the dual of  $\q{A}$ (see, e.g., \cite[Lemma 3.1]{CLS15}). Again we include a proof in the appendix for completeness.

\begin{lemma}\label{l: dual operator}
	The adjoint operator $\q{A}^{*}: \bb{R}^{m} \to \q H_n$ is given by
	\begin{equation*} \label{eq: dual operator}
	\q{A}^{*} w := \sum_{k = 1}^{m} w_k a_k a_k^{*}.
	\end{equation*}
\end{lemma}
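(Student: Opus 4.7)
The plan is to verify the defining identity of the adjoint, namely
\[
\langle \q{A} X, w \rangle_{\ell_2} \;=\; \langle X, \q{A}^* w \rangle_F \qquad \text{for all } X \in \q{H}_n,\ w \in \bb{R}^m,
\]
since $\q A$ is a linear map between the real Hilbert spaces $(\q{H}_n, \langle \cdot, \cdot\rangle_F)$ and $(\bb{R}^m, \langle \cdot, \cdot\rangle_{\ell_2})$. Starting from the left-hand side and using the definition $(\q{A} X)_j = \langle a_j a_j^*, X\rangle_F$, I would compute
\[
\langle \q{A} X, w \rangle_{\ell_2} \;=\; \sum_{j=1}^m w_j \langle a_j a_j^*, X\rangle_F \;=\; \Bigl\langle \sum_{j=1}^m w_j\, a_j a_j^*,\ X \Bigr\rangle_F,
\]
where the second equality uses linearity of $\langle \cdot, \cdot\rangle_F$ in its first argument together with the fact that $w_j \in \bb{R}$ (so conjugation of the scalars is immaterial).

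The only point requiring a brief check is that the candidate matrix $U := \sum_{j=1}^m w_j\, a_j a_j^*$ actually lies in the prescribed codomain $\q{H}_n$, as opposed to just $\bb{C}^{n \times n}$. This is immediate: each $a_j a_j^*$ is Hermitian and $w_j \in \bb{R}$, hence
\[
U^* \;=\; \sum_{j=1}^m \overline{w_j}\, (a_j a_j^*)^* \;=\; \sum_{j=1}^m w_j\, a_j a_j^* \;=\; U.
\]
By uniqueness of the adjoint on a Hilbert space, the identity above forces $\q{A}^* w = U = \sum_{j=1}^m w_j\, a_j a_j^*$, which is the claim. No substantive obstacle is anticipated; this is a routine verification to be included for completeness, as the authors indicate.
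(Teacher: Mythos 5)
Your proof is correct and takes essentially the same approach as the paper: both verify the defining adjoint identity $\langle \q{A}X, w\rangle = \langle X, \q{A}^*w\rangle_F$, the only cosmetic difference being that the paper evaluates it on standard basis vectors and extends by linearity while you work with a general $w$ directly. Your added check that the candidate matrix is Hermitian (so that it lands in $\q{H}_n$) is a nice touch the paper omits.
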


%We start by recalling that by Lemma \ref{l: dual operator}, the adjoint operator $\q{A}^*:\R^m \to \q{H}_n$  is given by
%\[
%(w_k)_{k=1}^m \mapsto \q{A}^* w = \sum_{k=1}^m w_k a_k a_k^*.
%\] 
With this lemma and using that  $\q{A}^* w$ is Hermitian, we obtain the following estimate for the left hand side of \eqref{eq: desired}. % matrix $\q{A}^* w$ is Hermitian and hence it has real eigenvalues, so that $v^* (\q{A}^* w) v \in \bb{R}$ for all $v \in \bb{C}^n$. Moreover, the spectral norm $\norm{\q{A}^* w}_\infty$ is the eigenvalue with the largest magnitude. Then, it holds that 
\begin{equation}\label{eq: top eigen}
\norm{\q{A}^* w}_\infty  
= \max_{v \in \bb{C}^n, \norm{v}_2=1}  \left| v^* (\q{A}^* w) v \right|
= \max_{v \in \bb{C}^n, \norm{v}_2=1}  \left|  \sum_{k =1}^m w_k | a_k^*  v |^2 \right|
\end{equation}

Depending on the properties of the vector $w$, the remainder of the proof is split into three separate cases, each requiring a different approach. 

We will first consider vectors whose entries sum to a number significantly different from zero. As this leads to a non-zero expectation of the sum in \eqref{eq: top eigen}, the lower bound can be established via a concentration argument. 

Secondly, we will consider vectors with some large entries, as formalized by a lower bound on the supremum norm. In this case, we can bound the right hand side of \eqref{eq: top eigen} by choosing $v$ to be the normalized measurement vector $a_j$ corresponding to the largest entry.

In the last case of a vector with entries averaging to a small number, but without large entries, we select a subset of entries with sufficiently large magnitudes and construct a suitable $v$ from the associated measurement vectors.

 The common ingredient in all three cases is Bernstein's inequality for subexponential random variables. Recall that a random variable $X$ is subexponential if $\|X\|_{\psi_1}:= \sup_{p\geq 1} p^{-1} (\E |X|^p)^{1/p}$ is finite.  For more details about subexponential random variables we refer reader to Section 2.7 of \cite{Vershynin.2018}.

\begin{proposition}[{Bernstein's inequality, version of \cite[Theorem 2.8.2]{Vershynin.2018}}]\label{p: bernstein}
Let $K>0$ and let $\xi_1, \ldots, \xi_N$ be independent subexponential random variables with $\|\xi_1\|_{\psi_1}\leq K$ for all $i\in [N]$. Then, for every $w \in \bb{R}^N$ and every $t>0$ it holds that
\[
\bb{P} \left( \left| \sum_{j=1}^N w_j (\xi_j  - \E \xi_j) \right| \ge t \right) 
\le 2 \exp \left\{ -c \min \left\{  \frac{t^2}{K^2 \norm{w}_2^2}, \frac{t}{K \norm{w}_\infty} \right\} \right\},
\]
as well as 
\[
\bb{P} \left(  \sum_{j=1}^N w_j \xi_j  \le  \sum_{j=1}^N w_j \E \xi_j -t \right)
\le  \exp \left\{ -c \min \left\{  \frac{t^2}{K^2 \norm{w}_2^2}, \frac{t}{K \norm{w}_\infty} \right\} \right\}.
\]
\end{proposition}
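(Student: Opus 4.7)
The plan is to prove this two-tail Bernstein bound by the standard Chernoff/Cramér-Chernoff route, with the only real work being the MGF estimate for centered subexponential variables. Since the proposition is cited as a ``version of'' Vershynin's Theorem~2.8.2, I would aim not to reproduce the book proof verbatim but to outline the clean argument, tracking the constants just carefully enough that both displays (two-sided and one-sided) drop out of a single computation.

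The first step is to control the moment generating function of a single centered subexponential variable $Y_j := \xi_j - \E \xi_j$. From $\|\xi_j\|_{\psi_1}\le K$ one gets $\|Y_j\|_{\psi_1}\le 2K$, and a standard computation (Taylor expansion of the exponential, combined with $\E|Y_j|^p \le (2K)^p\, p!$) yields the bound
\[
\E\exp(\lambda Y_j) \;\le\; \exp\!\bigl(C \lambda^2 K^2\bigr) \qquad \text{for all } |\lambda| \le \tfrac{c_0}{K},
\]
for absolute constants $c_0,C>0$. I would spell this calculation out only briefly, since it is the one place where the $\psi_1$ assumption is actually used; everything downstream is formal.

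The second step is to assemble the weighted sum. By independence, for $S := \sum_j w_j Y_j$ we have
\[
\E\exp(\lambda S) \;=\; \prod_{j=1}^N \E\exp(\lambda w_j Y_j) \;\le\; \exp\!\bigl(C \lambda^2 K^2 \|w\|_2^2\bigr),
\]
provided $|\lambda w_j|\le c_0/K$ for every $j$, which is ensured by requiring $|\lambda| \le c_0/(K\|w\|_\infty)$. Now Markov's inequality applied to $e^{\lambda S}$ gives, for $\lambda>0$ in the admissible range,
\[
\Pro(S \ge t) \;\le\; \exp\!\bigl(-\lambda t + C \lambda^2 K^2 \|w\|_2^2\bigr).
\]
The third step is to optimize $\lambda$. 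Setting $\lambda^* := t/(2CK^2\|w\|_2^2)$ is optimal when it lies in the admissible interval; otherwise one picks $\lambda = c_0/(K\|w\|_\infty)$. A case distinction yields exactly the $\min\{t^2/(K^2\|w\|_2^2),\, t/(K\|w\|_\infty)\}$ behavior in the exponent, with some absolute constant $c>0$ (possibly different from $c_0$ or $C$). The one-sided bound in the second display is precisely this computation, and the two-sided bound follows by applying the same argument to $-S$ and taking a union bound, which accounts for the factor $2$.

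The only subtle point is tracking constants cleanly through the case split in the optimization of $\lambda$, and ensuring that the MGF estimate is valid uniformly in the $j$-index so that independence can be invoked. Neither is a genuine obstacle; the proof is essentially a templated application of the Cramér-Chernoff method, and its appearance here is merely to fix notation and constants for the three case analyses (concentration / large-entry / averaging) outlined above \cref{eq: top eigen}.
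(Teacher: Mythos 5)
Your proposal is correct: the Cramér--Chernoff route via an MGF bound for centered subexponential variables, followed by optimization of $\lambda$ with a case split, is exactly the canonical proof of this weighted Bernstein inequality. The paper itself does not prove this proposition --- it imports it verbatim from \cite[Theorem 2.8.2]{Vershynin.2018} --- and your argument coincides with the proof given there, so there is nothing substantive to compare. The only nit is your intermediate moment bound $\E|Y_j|^p \le (2K)^p\, p!$: from $\|Y_j\|_{\psi_1}\le 2K$ one gets $\E|Y_j|^p \le (2Kp)^p$, which exceeds $(2K)^p p!$; you need $p^p \le e^p p!$ to land at $\E|Y_j|^p \le (2eK)^p p!$, an adjustment that is absorbed into the absolute constants and does not affect the result.
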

%\textcolor{red}{Add here more argumentation, why Bernstein's inequality applies and why we have subexponentials.}
It will be  applied to random variables drawn from a $|\q{CN}(0,1)|^2$ distribution, which are subexponential. Indeed, by definition, standard complex Gaussian random variable $\xi$ is defined as
\[
\xi =(\alpha + i \beta)/\sqrt 2, 
\]
where $\alpha$ and $\beta$ are independent standard Gaussian random variables. Then, 
$\alpha^2 + \beta^2$ follows a chi-squared distribution with 2 degrees of freedom, which coincides with the exponential distribution with parameter $1/2$ and hence $|\xi|^2 = (\alpha^2 + \beta^2)/2$ is subexponential with norm $K =1$. %Hence $|\xi|^2 = (\alpha^2 + \beta^2)/2$ follows a $\frac{1}{2}\mathrm{Exp}(1/2)$ distribution and belongs to the class of subexponential random variables. 
It has expectation 1 and the sum of the expectations in Bernstein's inequality becomes
\[
\sum_{j=1}^N w_j \E \xi_j = \sum_{j=1}^N w_j.
\]

We note than in further statements we also denote constants by $c,C, \tilde C$, but their values may differ; even within a proof or a chain of inequalities.

 %Moreover, in the proofs we will often avoid changing the constant notation, when its value is changed, e.g., $2 c x^2= c x^2$. 

\subsubsection{Non-centered vectors}\label{sec: non-centered}
The first case only considers vectors $w$ with mean value far from zero. In this case, the following holds.
\begin{theorem}\label{thm: non-centered}
Let $w \in \bb{R}^m, \norm{w}_2= 1$ and assume that $w$ satisfies $\left| \sum_{k =1}^m w_k \right| \ge \frac{n}{32\sqrt{2 m}}$. 
Suppose that the number of measurements $m$ satisfies $m \le n^2$.
Then, inequality \eqref{eq: desired} holds on a random event $E_{1,1}(w)$ (depending on $w$), which occurs with a probability of at least $1 - 2\exp\left( - c n^2/\sqrt m \right)$. 
\end{theorem}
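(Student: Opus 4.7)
The hypothesis $\left|\sum_k w_k\right|\ge n/(32\sqrt{2m})$ suggests concentrating a scalar functional of $\q A^* w$ whose expectation already has the required magnitude. Since $\q A^* w = \sum_{k} w_k a_k a_k^*$ is Hermitian and every Hermitian $H \in \q H_n$ obeys $|\tr H| \le n\norm{H}_\infty$, the natural candidate is the trace, and the plan is to lower-bound $|\tr(\q A^*w)|$ by a single application of Bernstein's inequality (\Cref{p: bernstein}).

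\textbf{Main steps.} First I would expand, using \Cref{l: dual operator},
\[
\tr(\q A^* w) = \sum_{k=1}^m w_k \norm{a_k}_2^2 = \sum_{k=1}^{m}\sum_{j=1}^n w_k \xi_{k,j},
\]
where $\xi_{k,j}$ denotes the squared modulus of the $j$-th coordinate of $a_k$. The $mn$ variables $\xi_{k,j}$ are independent and $\operatorname{Exp}(1)$-distributed (as noted in the paper), hence each has mean $1$ and $\norm{\xi_{k,j}}_{\psi_1}=O(1)$, and $\E \tr(\q A^* w) = n\sum_k w_k$ has modulus at least $n^2/(32\sqrt{2m})$ by assumption. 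I would then apply \Cref{p: bernstein} to the coefficient vector $\tilde w \in \bb{R}^{mn}$ defined by $\tilde w_{(k,j)} := w_k$, for which $\norm{\tilde w}_2^2 = n\norm{w}_2^2 = n$ and $\norm{\tilde w}_\infty = \norm{w}_\infty \le 1$. Setting the deviation to $t := n^2/(64\sqrt{2m})$, the hypothesis gives $|\E\tr(\q A^*w)| - t \ge n^2/(64\sqrt{2m})$, and the event $\{|\tr(\q A^*w) - \E\tr(\q A^*w)| \le t\}$ therefore implies
\[
\norm{\tfrac{1}{\sqrt m}\q A^* w}_\infty \;\ge\; \frac{|\tr(\q A^* w)|}{n\sqrt m} \;\ge\; \frac{n}{64\sqrt{2}\,m},
\]
which is exactly \eqref{eq: desired}.

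\textbf{Probability estimate and main obstacle.} \Cref{p: bernstein} bounds the complementary probability by $2\exp\!\bigl(-c\min\{t^2/\norm{\tilde w}_2^2,\ t/\norm{\tilde w}_\infty\}\bigr) \le 2\exp\!\bigl(-c\min\{n^3/m,\ n^2/\sqrt m\}\bigr)$. Under the assumption $m\le n^2$ we have $n\ge\sqrt m$ and hence $n^3/m \ge n^2/\sqrt m$, so the minimum equals $n^2/\sqrt m$ and the stated probability $1 - 2\exp(-cn^2/\sqrt m)$ follows. The essential subtlety of the argument is the choice of the right scalar functional: concentrating instead against a single fixed unit vector $v$, i.e. studying $\sum_k w_k |a_k^* v|^2$, would give a coefficient vector of $\ell_2$-mass only $\norm{w}_2^2 = 1$ and would only permit a deviation of $n/(64\sqrt{2}\sqrt m)$, producing the far weaker exponent $n/\sqrt m$. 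Summing over all $n$ diagonal entries through the trace amplifies the required deviation by a factor $n$ while inflating the $\ell_2$-mass of the coefficient vector only by $\sqrt n$ and leaving its $\ell_\infty$-mass unchanged; this is precisely what upgrades the binding sub-exponential Bernstein term to $n^2/\sqrt m$.
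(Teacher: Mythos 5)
Your proof is correct, and it takes a genuinely different route from the paper's. The paper lower-bounds $\norm{\q A^* w}_\infty$ by $\max_{\ell\in[n]}|e_\ell^*(\q A^* w)e_\ell|$, applies Bernstein to each diagonal sum $S^1_\ell(w)=\sum_k w_k|(a_k)_\ell|^2$ separately (each with failure probability only $2e^{-cn/\sqrt m}$), and then recovers the exponent $n^2/\sqrt m$ by observing that the $n$ diagonal sums are i.i.d., so the probability that \emph{all} of them deviate is the product of the individual failure probabilities. You instead concentrate the single scalar $\tr(\q A^* w)=\sum_{k,j}w_k|(a_k)_j|^2$ and pass to $\norm{\q A^* w}_\infty$ via $|\tr H|\le n\norm{H}_\infty$; the amplification to $n^2/\sqrt m$ then comes for free from one application of Bernstein, because the target deviation $t$ scales like $n\cdot n/\sqrt m$ while the $\ell_2$-mass of the weight vector grows only like $\sqrt n$ and its $\ell_\infty$-mass not at all. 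Your deterministic step $\norm{H}_\infty\ge|\tr H|/n$ is weaker than the paper's $\norm{H}_\infty\ge\max_\ell|H_{\ell\ell}|$, but the larger expectation of the trace exactly compensates, and both arguments land on the same bound $n/(64\sqrt{2m})$ for $\norm{\q A^* w}_\infty$ with the same exponent $\min\{n^3/m,\,n^2/\sqrt m\}=n^2/\sqrt m$ under $m\le n^2$. What your version buys is economy: one invocation of Bernstein and no need for the independence-of-events product argument (you also avoid the $2^n$ prefactor that the paper's product bound generates and silently absorbs into the constant). What the paper's version buys is that the quantity it controls, $\max_\ell S^1_\ell(w)$, is structurally parallel to the quantities $\max_{j}S^4_j(w)$ used in the flat-vector case, so the same template is reused there; the trace functional would not adapt to that case.
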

\begin{proof}
We establish a lower bound on $\norm{\q{A}^* w}_\infty$ by using Equality \eqref{eq: top eigen} and restricting the maximum occurring in the bilinear representation of $\norm{\q{A}^* w}_\infty$ to standard basis vectors. More precisely, 
\[
\norm{\q{A}^* w}_\infty  
= \max_{v \in \bb{C}^n, \norm{v}_2=1}  \left| v^* (\q{A}^* w) v \right|
\ge \max_{\ell \in [n]}   \left| e_\ell^* (\q{A}^* w) e_\ell \right|
= \max_{\ell \in [n]}  \left| \sum_{k =1}^m w_k | (a_k)_\ell |^2 \right| =: \max_{\ell \in [n]}  |S^1_\ell(w)|.
\]
Thus, it suffices to establish a lower bound for a maximum of $|S^1_\ell(w)|$. The first step in this direction is to fix index $\ell$ and establish the lower bound for a single $|S^1_\ell(w)|$. In order to do so, we introduce a random event
\[
E_{S^1_\ell}(w) := \left\{ |S^1_\ell(w) - \Ex S^1_\ell(w)  | < \frac{n}{64\sqrt{2 m}}\right\}
%\ \text{ and } \
%E_{S^1_\ell,2}(w) := \left\{ |S^1_\ell(w)|  > \frac{n}{64\sqrt{2 m}} \right\}.
.
\]

Under event $E_{S^1_\ell}(w)$, by reverse triangle inequality, it holds that
\[
\left| |S^1_\ell(w)| - |\Ex S^1_\ell(w)|  \right| \le  |S^1_\ell(w) - \Ex S^1_\ell(w)  | < \frac{n}{64\sqrt{2 m}},
\]
and consequently
\[
|S^1_\ell(w)|  > |\Ex S^1_\ell(w)| - \frac{n}{64\sqrt{2 m}}.
\]
The expectation of $S^1_\ell(w)$ is given by
$
\Ex S^1_\ell(w)
%= \sum_{k =1}^m w_k \Ex | (a_k)_\ell |^2
= \sum_{k =1}^m w_k ,
$
and thus, using the assumption on $w$, we obtain
\[
|S^1_\ell(w)|  
> |\Ex S^1_\ell(w)| - \frac{n}{64\sqrt{2 m}}
= \left| \sum_{k =1}^m w_k \right| - \frac{n}{64\sqrt{2 m}}
\ge \frac{n}{32\sqrt{2 m}} - \frac{n}{64\sqrt{2 m}} 
= \frac{n}{64\sqrt{2 m}}.
\]
%so that $E_{S^1_\ell,1}(w) \subseteq E_{S^1_\ell,2}(w)$ and $\bb{P}(E_{S^1_\ell,1}(w)) \le \bb{P}(E_{S^1_\ell,2}(w))$. Thus, it suffices to bound the smaller probability from below, which
The tail probability of the event $E_{S^1_\ell}(w)$ can be bounded via Proposition \ref{p: bernstein}. We recall that $(a_k)_\ell$ are i.i.d standard complex Gaussian random variables, and hence the probability of the complement of $E_{S^1_\ell}(w)$ is bounded from above as 
\[
\bb{P} (E_{S^1_\ell}^C (w) ) \le 2\exp\left\{ - c \min \left\{ \frac{n^2}{64^2 \cdot 2 m  \norm{w}_2^2 }, \frac{n}{64 \sqrt{2m}  \norm{w}_\infty } \right\} \right\}.
\]
Since the number of measurements satisfies $m \le n^2$ and condition $\norm{w}_2 =1$ implies $\norm{w}_\infty \le 1$, this bound simplifies to 
\[
\bb{P} (E_{S^1_\ell}^C (w) ) \le 2\exp\left\{ - c n/\sqrt m \right\}.
% \text{ and hence }
%\bb{P} (E_{S^1_\ell,2}^C (w) ) \le 2\exp\left\{ - c n/\sqrt m \right\}.
\] 
This establishes the desired result for a single $\ell \in [n]$. Our next step is to establish a similar upper bound for the random event 
\[
E_{1,1}(w) := \bigcup_{\ell \in [n]} E_{S^1_\ell}(w)
\subseteq \left\{ \max_{\ell \in [n]} |S^1_\ell(w)|  > \frac{n}{64\sqrt{2 m}} \right\}.
\]
In order to extend it for $E_{1,1}(w)$ without losses in probability, we observe that $S^1_1(w), \ldots, S^1_n(w)$ are i.i.d random variables as a consequence of the fact that the entries of the measurement vectors $(a_k)_\ell, k \in [m], \ell \in [n]$ are independent and the vector are independent as well. This implies that random events $E_{S^1_\ell}^C (w)$ are independent and, thus, using the \mbox{De Morgan's} law, the probability of $E_{1}(w)$ is bounded from below as
\begin{align*}
& \bb{P} \left(E_{1,1}(w) \right) 
= 1 - \bb{P} \left(E_{1,1}^C(w) \right) 
= 1 - \bb{P} \left( \bigcap_{\ell \in [n]} E_{S^1_\ell}^C(w) \right) \\
& \quad = 1 - \prod_{\ell \in [n]} \bb{P} \left(  E_{S^1_\ell}^C (w)  \right)
\ge 1 - 2\exp\left\{ - c n^2/\sqrt m \right\},
\end{align*}
which concludes the proof of Inequality \eqref{eq: desired} in the first case.
\end{proof}

\subsubsection{Spiky vectors}\label{sec: spiky}
The second case considers those of the remaining vectors $w$ which have at least one entry with large magnitude.
\begin{theorem}\label{thm: spiky}
Let $w \in \bb{R}^m, \norm{w}_2= 1$ and assume that $w$ satisfies $\norm{w}_\infty \ge m^{-1/4}$. Suppose that the number of measurements $m$ satisfies 
\[
m \le \min\{ (n/16)^{4/3} , (n/8)^{8/7} \}
\ \text{ and } m \text{ is sufficiently large.}
%(m^{3/4}+1) \log m < c m^{7/8}
\] 
Then, there exists a random event $E_{2,1}(w)$ depending on $w$ and random event $E_{2,2}$ independent of $w$ with tail probabilities
\[
\bb{P}\left(  E_{2,1}^C (w)\right) \le 2 \exp \left\{ -\tilde C n m^{1/8} \right\} 
\ \text{and} \
\bb{P}\left(  E_{2,2}^C \right) 
\le 2m  \exp \left\{ - c n \right\} + 4 m^{m^{3/4}+1} \exp \left\{ -C m^{7/8} \right\}
\] 
such than on $E_{2,1}(w) \cap E_{2,2}$, Inequality \eqref{eq: desired} holds. 
\end{theorem}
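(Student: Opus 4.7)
My plan is to exploit the spike structure of $w$ by choosing the test vector in the quadratic form representation \eqref{eq: top eigen} of $\|\mathcal{A}^{*}w\|_\infty$ adaptively: let $j^{*}\in[m]$ satisfy $|w_{j^{*}}|=\|w\|_\infty$ and set $v := a_{j^{*}}/\|a_{j^{*}}\|_2$. Applying \Cref{l: dual operator}, this choice yields
\[
v^{*}(\mathcal{A}^{*}w)v = w_{j^{*}}\|a_{j^{*}}\|_2^2 + \frac{1}{\|a_{j^{*}}\|_2^2}\sum_{k\ne j^{*}}w_k\,|a_k^{*}a_{j^{*}}|^2.
\]
Heuristically, the first ``spike'' term has magnitude at least $|w_{j^{*}}|\cdot\|a_{j^{*}}\|_2^2 \gtrsim n\,m^{-1/4}$, and under $m\le (n/16)^{4/3}$ this comfortably exceeds the target lower bound $n/(64\sqrt{2m})$ with a large margin; the remaining task is to show that the ``cross'' sum cannot spoil this estimate.

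To analyze the cross sum, I would condition on the vector $a_{j^{*}}$ and invoke the rotational invariance of the complex Gaussian distribution: conditionally on $a_{j^{*}}$, the rescaled inner products $Y_k := |a_k^{*}a_{j^{*}}|^2/\|a_{j^{*}}\|_2^2$ for $k\ne j^{*}$ are i.i.d.\ $|\mathcal{CN}(0,1)|^2$ random variables, hence subexponential with norm $1$ and unit mean. Rewriting
\[
\frac{1}{\|a_{j^{*}}\|_2^2}\sum_{k\ne j^{*}}w_k\,|a_k^{*}a_{j^{*}}|^2 = \sum_{k\ne j^{*}}w_k + \sum_{k\ne j^{*}}w_k(Y_k-1),
\]
the deterministic bias is controlled by $|\sum_{k\ne j^{*}}w_k|\le\sqrt{m-1}$ via Cauchy--Schwarz, which is dominated by the spike term once $nm^{-1/4}\gg\sqrt m$; this is precisely what $m\le(n/16)^{4/3}$ provides. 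The random fluctuation I would bound by \Cref{p: bernstein} applied conditionally on $a_{j^{*}}$ to the i.i.d.\ subexponential summands $Y_k-1$ with coefficients $w_k$; choosing a deviation threshold of order $n\,m^{-1/4}/4$ and using $\|w\|_2=1$ together with $\|w\|_\infty\le 1$, the optimization between the sub-Gaussian and subexponential branches of Bernstein's exponent produces the tail $\exp(-\tilde C n\,m^{1/8})$. I would package this single conditional estimate as the $w$-dependent event $E_{2,1}(w)$.

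The $w$-independent event $E_{2,2}$ I would take as the intersection of the uniform norm concentration $\|a_k\|_2^2\in[n/2,3n/2]$ for all $k\in[m]$ (which, by standard chi-square concentration and a union bound over $m$ indices, contributes the $2m\exp(-cn)$ term), together with a uniform concentration estimate indexed by pairs $(j,S)$ with $j\in[m]$ and $S\subset[m]$ of cardinality $\lfloor m^{3/4}\rfloor$ that is also needed for the third (non-spiky, centered) case of the overall covering argument. The combinatorial count $m\binom{m}{\lfloor m^{3/4}\rfloor}\le m^{m^{3/4}+1}$ combined with a per-index Bernstein tail at deviation scale $m^{7/8}$ explains the stated bound on $\bb P(E_{2,2}^C)$; in the spiky case, only the norm-concentration part of $E_{2,2}$ is actually invoked, but stating $E_{2,2}$ in this uniform form allows it to be reused unchanged in \Cref{thm: non-centered}'s successor.

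The main obstacle I anticipate is the bookkeeping of exponents: tuning the Bernstein deviation in the conditional step so that the resulting tail matches exactly the claimed $\exp(-\tilde C n\,m^{1/8})$ rather than the cruder $\exp(-cn/\sqrt m)$ one would obtain from a naive choice. A subsidiary issue is the sign of the bias $\sum_{k\ne j^{*}}w_k$, which could in principle partially cancel the spike term $w_{j^{*}}\|a_{j^{*}}\|_2^2$; this is defused quantitatively by the hypothesis $m\le (n/16)^{4/3}$, which forces a fixed multiplicative gap of at least $16$ between the spike magnitude $nm^{-1/4}$ and the worst-case bias $\sqrt m$, so that the absolute value of the bilinear form $v^{*}(\mathcal A^{*}w)v$ remains safely above the target threshold.
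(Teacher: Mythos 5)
There is a genuine gap in your treatment of the cross sum. After writing it as $\sum_{k\ne j^{*}}w_k + \sum_{k\ne j^{*}}w_k(Y_k-1)$ you apply Bernstein's inequality (\Cref{p: bernstein}) directly to the fluctuation term with coefficient vector $w$ restricted to $[m]\setminus\{j^{*}\}$, using only $\norm{w}_2=1$ and $\norm{w}_\infty\le 1$, and claim the tail $\exp(-\tilde C n m^{1/8})$. This cannot work: Bernstein's bound is the \emph{minimum} of the two branches, and at the deviation scale $t\asymp n m^{-1/4}$ (which is forced on you, since any larger deviation would swamp the spike term $|w_{j^*}|\|a_{j^*}\|_2^2\gtrsim nm^{-1/4}$) the subexponential branch is $t/\norm{w}_\infty$. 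The second-largest entry of $w$ can be of order one (e.g.\ $w=(1/\sqrt2,1/\sqrt2,0,\dots,0)$), so this branch only yields $\exp(-c\, n m^{-1/4})$, which falls short of the claimed $\exp(-\tilde C n m^{1/8})$ by a factor $m^{3/8}$ in the exponent and would not survive the $(545m)^m$ union bound over the net in the range of $m$ the theorem targets. There is no choice of $t$ that fixes this.

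The missing idea is a second truncation of $w$ itself: the paper separates the index set $J=\{k:|w_k|>m^{-3/8}\}$, which has $|J|\le m^{3/4}$ by the $\ell_2$ constraint. The contribution of $J$ is bounded by $|w_{j}|\,S^3(J,j)$ with $S^3(J,j)=\sum_{k\in J,\,k\ne j}|a_k^{*}a_j/\norm{a_j}_2|^2$, a quantity that depends on $w$ only through the pair $(J,j)$; Bernstein at scale $m^{7/8}$ plus a union bound over the at most $2m^{m^{3/4}+1}$ such pairs makes this uniform and produces exactly the $4m^{m^{3/4}+1}\exp(-Cm^{7/8})$ term in $\bb P(E_{2,2}^C)$. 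Only the remaining flat part, whose coefficient vector $\tilde w$ satisfies $\norm{\tilde w}_\infty\le m^{-3/8}$, is handled by your conditional Bernstein argument; with that improved $\ell_\infty$ bound the subexponential branch becomes $n m^{-1/4}/m^{-3/8}=nm^{1/8}$, giving the stated tail for $E_{2,1}(w)$. Relatedly, your description of $E_{2,2}$ has the roles reversed: the combinatorial $(J,j)$-indexed part of $E_{2,2}$ is used precisely in this spiky case (it is the $S^3$ control), not borrowed from the flat case, which has its own separate uniform event. The rest of your outline --- the choice $v=a_{j^{*}}/\norm{a_{j^{*}}}_2$, the conditional rotational-invariance argument, the Cauchy--Schwarz bound $\sqrt{m}$ on the bias, and the norm-concentration event --- matches the paper's proof.
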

\begin{proof}
Let $w_j$ be the entry of $w$ with the largest magnitude so that $|w_j| = \norm{w}_\infty \ge m^{-1/4}$.
Then, in Equality \eqref{eq: top eigen}, we select single $v = a_j / \norm{a_j}_2$ and apply the reverse triangle inequality to obtain the lower bound
\begin{align*}
\norm{\q{A}^* w}_\infty  
& \ge \left| \sum_{k =1}^m w_k \left| a_k^* \frac{ a_j}{\norm{a_j}_2} \right|^2 \right| 
= \left| w_j \norm{a_j}_2^2  +  \sum_{\substack {k =1 \\ k \neq j } }^m w_k  \left| a_k^* \frac{ a_j}{\norm{a_j}_2} \right|^2 \right| 
 \ge |w_j| \norm{a_j}_2^2  
- \left| \sum_{\substack {k =1 \\ k \neq j } }^m w_k  \left| a_k^* \frac{ a_j}{\norm{a_j}_2} \right|^2 \right|.
\end{align*}
Next step is to further split the sum in two by separating the high and low magnitude entries. Let $J$ be an index set containing high magnitude entries, so that
\[
J:=J(w)= \left\{ k \in [m] \ \big| \ |w_k| > m^{-3/8} \right\}.
\]
The cardinality of $J$ is bounded by $\floor{m^{3/4}}$ since
\[
1
= \norm{w}_2^2
= \sum_{k=1}^{m} |w_k|^{2}
\ge \sum_{k \in J} |w_k|^{2}
> \sum_{k \in J} m^{-3/4}
= m^{-3/4} |J|.
\]
Then, we split the sum such that
\[
\left| \sum_{\substack {k =1 \\ k \neq j } }^m w_k  \left| a_k^* \frac{ a_j}{\norm{a_j}_2} \right|^2 \right|
\le
\left| \sum_{\substack {k \in J \\ k \neq j } } w_k  \left| a_k^* \frac{ a_j}{\norm{a_j}_2} \right|^2 \right|
+
\left| \sum_{k \in [m] \backslash J } w_k  \left| a_k^* \frac{ a_j}{\norm{a_j}_2} \right|^2 \right|
=:
\left| \sum_{\substack {k \in J \\ k \neq j } } w_k  \left| a_k^* \frac{ a_j}{\norm{a_j}_2} \right|^2 \right|
+
|S^2(w)|.
\]
The first sum is further bounded from above as 
\[
\left| \sum_{\substack {k \in J  \\ k \neq j } } w_k  \left| a_k^* \frac{ a_j}{\norm{a_j}_2} \right|^2 \right|
\le \sum_{\substack {k \in J   \\ k \neq j } } |w_k| \left| a_k^* \frac{ a_j}{\norm{a_j}_2} \right|^2 
\le |w_j|  \sum_{\substack {k \in J   \\ k \neq j } } \left| a_k^* \frac{ a_j}{\norm{a_j}_2} \right|^2 
:= |w_j| S^3(J,j),
\]
with $S^3(J,j)$ depending only on index set $J$ and $j$.
Combining these bounds, we obtain
\begin{align}
\norm{\q{A}^* w}_\infty  
& \ge |w_j| \left( \norm{a_j}_2^2  - S^3(J,j) \right) -  \left| S^2(w) \right|. \label{eq: spiky pre result}
\end{align}
%The next step is to obtain bound for each term separately. First bound for $\norm{a_j}_2^2$ is a corollary to the following lemma.
%\begin{lemma} \label{l: min norm}
%Let $a_1, a_2, \ldots, a_m$ be independent vectors distributed as $\mathcal{CN}\left(0,I\right)$ in $\bb{C}^{n}$.\\
%Then, a random event
%\[
%E_{\norm{\cdot}} := \left\{ \max_{j \in [m]} \norm{a_j}_{2}^2 < \frac{3n}{2} 
%\text{ and } 
%\min_{j \in [m]} \norm{a_j}_{2}^2 > \frac{n}{2}
%\right\}
%\] 
%occurs  with probability at least $1 - 2m\exp \left\{ -c n \right\}$.
%\end{lemma}
%\begin{proof}
Further, we proceed with separate bounds for each of the obtained terms.
For each $j \in [m]$, we can expand $\norm{a_j}_{2}^2$ as a sum of independent $|\q{CN}(0,1)|^2$ random variables with expectation 1, that is
\[
\norm{a_j}_{2}^2 = \sum_{\ell =1}^n |(a_j)_\ell|^2 %= \sum_{\ell =1}^n (|(a_j)_\ell|^2 -1) + n.
\]
Consider now the random events 
\[
E_{a_j} := \{|\norm{a_j}_{2}^2 - n| < n/2\},
\]
and
\[
E_{\norm{\cdot}} := \left\{ \max_{j \in [m]} \norm{a_j}_{2}^2 < \frac{3n}{2} 
\text{ and } 
\min_{j \in [m]} \norm{a_j}_{2}^2 > \frac{n}{2}
\right\}.
\] 
On $E_{a_j}$, it holds that $n/2 \le \norm{a_j}_{2}^2 \le 3n/2$ and observe that 
$E_{\norm{\cdot}} =\cap_{j=1}^{m} E_{a_j}$. 
By Proposition \ref{p: bernstein}, the probability of $E_{a_j}^C$ is bounded from above by
\[
\bb{P}\left( E_{a_j}^C \right) 
\le 2 \exp \left\{ -c \min \left\{ \frac{n^2}{4  n}, \frac{n}{2 } \right\} \right\}
= 2 \exp \left\{ -c n \right\}.
\]
Then, using union bound and De Morgan's law, we obtain
\[
\bb{P}(E_{\norm{\cdot}}) 
= 1- \bb{P}\left( E_{\norm{\cdot}}^C \right) 
\ge 1 - \bb{P}\left( \bigcup_{j=1}^{m} E_{a_j}^C \right) 
\ge 1 - \sum_{j=1}^{m} \bb{P}(E_{a_j}^C) 
\ge 1 - 2m  \exp \left\{ - c n \right\}.
\] 
We finally note that on $E_{\norm{\cdot}}$, it holds that
\begin{equation}\label{eq: spiky r1}
\norm{a_j}_{2}^2  \ge \min_{j \in [m]} \norm{a_j}_{2}^2 > \frac{n}{2}.
\end{equation}
For upper bound on $S^3(J,j)$, we follow similar steps. When $a_j$ is fixed, the 
$a_k^* \frac{ a_j}{\norm{a_j}_2}$, $k \in J, k \neq j$ are independent random variables distributed as $\q{CN}(0,1)$ distribution, since they are projections of complex Gaussian random vectors \cite{cochran_1934}. 
Hence, $S^3(J,j)$ is a sum of independent $|\q{CN}(0,1)|^2$ random variables. Consider a random event
\[
E_{S^3,J,j} := \{|S^3(J,j) - \Ex S^3(J,j) | < m^{7/8} \}.
\]
The expectation $\Ex S^3(J,j)$ is a number of summands in $S^3$, that is $|J|-1$.
Therefore, on the event $E_{S^3,J,j}$, it holds that 
\begin{equation}\label{eq: spiky r2}
S^3(J,j) < |J|-1 + m^{7/8} \le 2 m^{7/8}.
\end{equation} 
The probability of its complement can be again bounded by Proposition \ref{p: bernstein}. More precisely,
\begin{align*}
\bb{P}\left( E_{S^3,J,j}^C \ \big| \ a_j \right) 
& \le 2 \exp \left\{ -C \min \left\{ \frac{m^{7/4}}{4 (|J|-1)}, \frac{m^{7/8}}{2} \right\} \right\} \\
& \le 2 \exp \left\{ -C \min \left\{ \frac{m^{7/4}}{4 m^{3/4}}, \frac{m^{7/8}}{2} \right\} \right\}
= 2 \exp \left\{ -C m^{7/8} \right\}.
\end{align*}
Integrating out $a_j$ leads to the bound for the unconditional probability
\[
\bb{P}\left( E_{S^3,J,j}^C \right) = \int_{a_j} \bb{P}\left( E_{S^3,J,j}^C \ \big| \ a_j \right)  d \bb{P}(a_j) \le  2 \exp \left\{ -C m^{7/8} \right\}.
\]
The exponent of the obtained tail probability has order $m^{7/8}$ which is less than $m$ (dimension of $w$). It makes it impossible to apply covering argument (for details see Section \ref{sec: joining results}). However, the random event $E_{S^3,J,j}$ depends only on the choice of the index set $J$ and the index $j$. Therefore, we can consider all possible selections of $J$ and $j$ and resulting events $E_{S^3,J,j}$. Define a random event
\[
E_{S^3} := \bigcap_{ J \subset [m], |J| \le \floor{m^{3/4}}  } \bigcap_{j \in J} E_{S^3,J,j}.
\]   
Again, by union bound and De Morgan's law, the probability of $E_{S^3}$ is bounded from below as
\[
\bb{P}\left( E_{S^3} \right)
= 1 - \bb{P}\left( E_{S^3}^C \right)
= 1 - \bb{P}\left( \bigcup_{ J \subset [m], |J| \le \floor{m^{3/4}}  } \bigcup_{j \in J} E_{S^3,J,j}^C\right)
\ge 1 - \sum_{J \subset [m], |J| \le \floor{m^{3/4} } } \sum_{j \in J} \bb{P} ( E_{S^3,J,j}^C ).
\]
The total number of all non-empty subsets of $[m]$ with cardinality up to $\floor{m^{3/4}}$ is given by
\[
\sum_{r = 1}^{\floor{m^{3/4} } } \binom{m}{r} 
\le \sum_{r = 1}^{\floor{m^{3/4} } } \frac{m^r}{r!}
\le \sum_{r = 1}^{\floor{m^{3/4} } } m^r
\le m \frac{m^{m^{3/4}} - 1}{m-1} 
\le 2 m^{3/4}.
\] 
Thus, returning to probability we obtain
\[
\bb{P}\left( E_{S^3} \right) 
\ge 1 - 2 m^{m^{3/4}} |J| \cdot 2 \exp \left\{ -C m^{7/8} \right\}
\ge 1- 4 m^{m^{3/4}+1} \exp \left\{ -C m^{7/8} \right\},
\]
so the order in the exponent 
\[
-C m^{7/8} + (m^{3/4}+1) \log m
\] 
is negative for sufficiently large $m$. %, as assumptions require.
The proof for the last part yet again follows the same logic. For fixed $a_j$, sum $S^2(w)$ is a weighted sum of independent $|\q{CN}(0,1)|^2$ random variables. Its expectation is the sum of weights, that is 
\[
\Ex S^2(w) = \sum_{k \in [m] \backslash J} w_k 
\ \text{ and } \
|\Ex S^2(w)| \le \sum_{k \in [m] \backslash J} |w_k| \le \norm{w}_1 \le \sqrt m.
\]
Consider a random event
\[
E_{2,1}(w) :=  \{|S^2(w) - \Ex S^2(w) | < n/ 16 m^{1/4} \}.
\]
Under $E_{2,1}(w)$, by triangle inequality and assumptions on $m$, it holds that
\begin{equation}\label{eq: spiky r3}
|S^2(w)| \le |S^2(w) - \Ex S^2(w) |  + |\Ex S^2(w)| 
< \frac{n}{16 m^{1/4}} + \sqrt{m} 
%\le \frac{n}{16 m^{1/4}} + \frac{n}{16 m^{1/4}} 
\le \frac{n}{8 m^{1/4}} 
\end{equation}
In order to apply Proposition \ref{p: bernstein} consider a vector $\tilde w$ defined as 
\[
\tilde w_k =  
\begin{cases}
w_k, & k \in [m] \backslash J, \\
0, & k \in J.
\end{cases}
\]
Then, by definition of $J$, $\tilde w$ satisfies
\[
\norm{\tilde w}_2 \le 1 \ \text{ and } \norm{\tilde w}_\infty \le m^{-3/8}.
\]
Hence, the conditional probability of the complement of $E_{2,1}(w)$ is bounded from above as
\begin{align*}
\bb{P}\left( E_{2,1}^C(w) \ \big| \ a_j \right) 
& \le 2 \exp \left\{ -\tilde C \min \left\{ \frac{n^2}{128 m^{1/2} \norm{\tilde w}_2^2}, \frac{n}{16 m^{1/4} \norm{\tilde w}_\infty} \right\} \right\} \\
& \le 2 \exp \left\{ - \tilde C \min \left\{ \frac{n^2}{128 m^{1/2} }, \frac{n}{16 m^{1/4} m^{-3/8} } \right\} \right\} \\
& \le 2 \exp \left\{ - \tilde C n m^{1/8} \min \left\{ \frac{n}{ m^{5/8} }, 1 \right\} \right\} 
\le 2 \exp \left\{ - \tilde C n m^{1/8} \right\},
\end{align*}
where in the last inequality we used that $n \ge 8 m^{7/8} \ge 8 m^{5/8}$. Integrating out $a_j$ grants us
\[
\bb{P}\left(  E_{2,1}^C(w)\right) = \int_{a_j} \bb{P}\left( E_{2,1}^C(w) \ \big| \ a_j \right)  d \bb{P}(a_j) \le 2 \exp \left\{ - \tilde C n m^{1/8} \right\}.
\]
Finally, we define a random event $E_{2,2}: = E_{\norm{\cdot}} \cap E_{S^3}$ with tail probability
\[
\bb{P}\left(  E_{2,2}^C \right) 
= \bb{P}\left(  E_{\norm{\cdot}}^C \cup E_{S^3}^C \right) 
\le 2m  \exp \left\{ - c n \right\} + 4 m^{m^{3/4}+1} \exp \left\{ - C m^{7/8} \right\}. 
\] 
Then, under $E_{2,1}(w) \cap E_{2,2}$ all established bounds \eqref{eq: spiky r1}, \eqref{eq: spiky r2}, \eqref{eq: spiky r3} hold and we can return to the Inequality \eqref{eq: spiky pre result}. Hence, it holds that
\begin{align*}
\norm{\q{A}^* w}_\infty  
& \ge |w_j| \left( \norm{a_j}_2^2  - S^3(J,j) \right) -  \left| S^2(w) \right| 
> |w_j| \left( \frac{n}{2} - 2 m^{7/8} \right) -  \frac{n}{8 m^{1/4}} \\
& \ge |w_j| \frac{n}{4} -  \frac{n}{8 m^{1/4}} 
\ge \frac{n}{4 m^{1/4}} -  \frac{n}{8 m^{1/4}} 
= \frac{n}{8 m^{1/4}},
\end{align*}
where we used condition on $m$ in the second inequality and the choice of $w_j$ in the last inequality. Finally, note that
\[
\norm{\q{A}^* w}_\infty >  \frac{n}{8 m^{1/4}} \ge \frac{n}{64 \sqrt{ 2m} },
\]
which concludes the proof. 
\end{proof}
\subsubsection{Flat vectors} \label{sec: flat}
The last case is when the mean of $w$ is not big enough, so we cannot proceed and in Section \ref{sec: non-centered} and at the same time there is no significantly big entries to compensate the rest as in Section \ref{sec: spiky}. Therefore, the idea is to do something in between these two proof approaches, separate several relatively big entries which will have bigger impact than the rest. Our main result in this section is the following.
\begin{theorem}\label{thm: flat}
Let $w \in \bb{R}^m, \norm{w}_2= 1$ and assume that $w$ satisfies 
\[
\left| \sum_{k =1}^m w_k \right| < \frac{n}{32\sqrt{2 m}}
\ \text{ and }\ 
\norm{w}_\infty < m^{-1/4}.
\] 
Suppose that the number of measurements $m$ satisfies 
\[
m \le  (n/16)^{4/3} 
\ \text{ and } m \text{ is sufficiently large.}
\] 
Then, there exists a random event $E_{3,1}(w)$ depending on $w$ and random event $E_{3,2}$ independent of $w$ with tail probabilities
\[
\bb{P}\left(  E_{3,1}^C (w)\right) \le \exp \left\{ - \tilde C n m^{1/4} \right\}
\ \text{and} \
\bb{P}\left(  E_{3,2}^C \right) 
\le 4 m^{\sqrt{2m}/8}  \exp \left\{ - C n \right\}
\] 
such than on $E_{3,1}(w) \cap E_{3,2}$, Inequality \eqref{eq: desired} holds. 
\end{theorem}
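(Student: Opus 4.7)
I plan to exhibit a unit vector $v \in \bb{C}^n$ achieving $|v^{*}(\q A^{*} w) v| \ge n/(64\sqrt{2m})$. Unlike the spiky case, no single entry of $w$ suffices to dominate via the choice $v = a_j/\|a_j\|_2$, so I will build $v$ from a whole collection of measurement vectors indexed by a moderately-sized subset $J = J(w) \subset [m]$. Set $K := \lfloor \sqrt{2m}/8 \rfloor$. Without loss of generality (replacing $w$ by $-w$ if needed) assume $\|w^+\|_2^2 \ge 1/2$, and take $J$ to be the $K$ indices of the largest positive entries of $w$. I would then let $v_0$ be the minimum-norm solution to the system $a_k^{*} v_0 = \sqrt{w_k}$ for all $k \in J$, constructed via the Moore--Penrose pseudoinverse of the matrix $A_J = [a_k]_{k \in J} \in \bb{C}^{n \times K}$, and set $v := v_0/\|v_0\|_2$. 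This yields $|a_k^{*} v|^2 = w_k / \|v_0\|_2^2$ for every $k \in J$ and hence the decomposition
\[
v^{*}(\q A^{*} w) v = \frac{\|w_J\|_2^2}{\|v_0\|_2^2} + \sum_{k \notin J} w_k |a_k^{*} v|^2.
\]

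The uniform-in-$w$ event $E_{3,2}$ should encode good conditioning of all relevant submatrices: for every $J' \subset [m]$ with $|J'| \le K$, $s_{\min}(A_{J'}) \ge c_0 \sqrt{n}$ for a fixed constant $c_0 > 0$. For each such $J'$ this holds with probability at least $1 - \exp(-cn)$ by Davidson--Szarek Gaussian concentration (since $K \ll n$), and a union bound over the $\binom{m}{K} \le m^K$ subsets yields $\bb{P}(E_{3,2}^C) \le 4 m^{\sqrt{2m}/8} \exp(-Cn)$ as required. On $E_{3,2}$ one has $\|v_0\|_2^2 \le S_J^+ / (c_0^2 n)$ where $S_J^+ := \sum_{k \in J} w_k$, so the signal part is bounded below by $c_0^2 n \|w_J\|_2^2 / S_J^+$. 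Combining the Cauchy--Schwarz bound $S_J^+ \le \sqrt{K}\|w_J\|_2$ with the averaging inequality $\|w_J\|_2^2 \ge (K/N^+)\|w^+\|_2^2 \ge K/(2m)$ (using $\|w^+\|_2^2 \ge 1/2$ and the trivial bound $N^+ \le m$ on the number of positive entries) yields a signal contribution of order $n/\sqrt{m}$, exceeding the target $n/(64\sqrt{2m})$ by a constant factor.

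The noise term $\sum_{k \notin J} w_k |a_k^{*} v|^2$ is controlled via the $w$-dependent event $E_{3,1}(w)$. Since $v$ is determined by $\{a_k\}_{k \in J}$ and is of unit norm, the inner products $a_k^{*} v$ for $k \notin J$ are i.i.d.\ $\q{CN}(0,1)$ conditional on $\{a_k\}_{k \in J}$, so the $|a_k^{*} v|^2$ are i.i.d.\ unit-parameter subexponential variables. The conditional mean $\sum_{k \notin J} w_k$ is bounded by $|\sum_k w_k| + \sum_{k \in J} |w_k| \lesssim n/\sqrt{m}$ via the flat conditions and Cauchy--Schwarz. Applying Bernstein's inequality (Proposition~\ref{p: bernstein}) with $\|w_{J^c}\|_2 \le 1$ and $\|w_{J^c}\|_\infty < m^{-1/4}$ bounds the fluctuation, and integrating out the conditioning delivers the unconditional tail $\exp(-\tilde C n m^{1/4})$.

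The main obstacle is calibrating the Bernstein deviation parameter so as to simultaneously keep the noise fluctuation below the signal margin of order $n/\sqrt{m}$ and deliver the full exponent $\tilde C n m^{1/4}$: the assumption $\|w_{J^c}\|_\infty < m^{-1/4}$ activates the subexponential regime of Bernstein at scale $m^{1/4}$, and the condition $m \le (n/16)^{4/3}$ (which forces $n \gtrsim m^{3/4}$) is essential for aligning these two constraints. A secondary difficulty is establishing the combinatorial lower bound $\|w_J\|_2 \gtrsim m^{-1/4}$; this relies on restricting $J$ to a single sign component (using $\|w^+\|_2 \ge 1/\sqrt{2}$) and invoking the elementary fact that the top-$K$ values of a nonnegative sequence have average at least as large as the overall average.
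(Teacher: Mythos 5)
There is a genuine gap in your control of the $w$-dependent event: a single Bernstein application on the noise term $\sum_{k \notin J} w_k |a_k^{*} v|^2$ cannot deliver the claimed tail $\exp(-\tilde C n m^{1/4})$. The admissible deviation $t$ is capped by the signal margin, which in your construction (and in any construction of this type) is of order $n/\sqrt{m}$. Plugging $t \asymp n/\sqrt{m}$, $\|w_{J^c}\|_2 \le 1$ and $\|w_{J^c}\|_\infty < m^{-1/4}$ into Proposition~\ref{p: bernstein} gives an exponent $c\min\{n^2/m,\; n m^{-1/4}\} = c\, n m^{-1/4}$ (the second term is the minimum because $n \gtrsim m^{3/4}$). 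That is smaller than the claimed exponent $\tilde C n m^{1/4}$ by a factor of $\sqrt{m}$, and this loss is fatal downstream: the covering argument in \Cref{sec: joining results} union-bounds over a net of cardinality $(545m)^m$, so the per-vector failure probability must beat $\exp(-c\,m\log m)$; with exponent $n m^{-1/4}$ this forces roughly $n \gtrsim m^{5/4}\log m$, i.e.\ $m \lesssim n^{4/5}$, far more restrictive than the regime $m \lesssim n^{8/7}$ the theorem is meant to cover. Your own remark that calibrating Bernstein to "deliver the full exponent $\tilde C n m^{1/4}$" is "the main obstacle" is exactly right --- and your single test vector $v$ cannot overcome it.

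The missing idea is probability amplification through many conditionally independent trials. The paper's proof selects $|L| = \lceil \sqrt{2m}/16\rceil$ indices of large \emph{positive} entries and Gram--Schmidt orthogonalizes the corresponding measurement vectors into $\{\tilde a_j\}_{j \in L}$; because the directions $\tilde a_j$ are orthonormal, the sums $S^4_j(w) = \sum_{k \notin L} w_k |a_k^{*}\tilde a_j|^2$ are conditionally independent across $j$, each failing its Bernstein bound with probability at most $\exp(-\tilde C n m^{-1/4})$. Taking the maximum over $j \in L$ and multiplying the independent failure probabilities yields $\bigl(\exp(-\tilde C n m^{-1/4})\bigr)^{\lceil\sqrt{2m}/16\rceil} \le \exp(-\tilde C n m^{1/4})$, which is precisely the stated bound. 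The positivity of $w_j$ for $j \in L$ and the Pythagorean estimate $|a_j^{*}\tilde a_j|^2 \ge n/8$ then supply a signal term $\ge \frac{1}{\sqrt{2m}}\cdot\frac{n}{8}$ for every $j$ simultaneously, so the max over $j$ only needs \emph{one} good trial for the signal while the amplification happens on the noise side. Your pseudoinverse construction, by committing to one aggregated direction, forfeits this repetition; the remaining ingredients of your argument (the singular-value event for $E_{3,2}$, the averaging bound on $\|w_J\|_2$, the conditional Gaussianity of $a_k^{*}v$ for $k \notin J$) are sound but do not compensate for the lost factor of $\sqrt{m}$ in the exponent.
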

\begin{proof}
As in the previous two cases, we start with the representation \eqref{eq: top eigen}. Similarly to the spiky case, we want to select vector providing high magnitudes $w_j$. Since, $\norm{w}_\infty < m^{-1/4}$ its effect is not strong enough to compensate the rest of the entries with sufficiently high probability. However, it is possible to apply independence argument similar to the one used in the proof of the first case. 
Let us first introduce the set from which we will select $w_j$'s by defining the index set $I$ as
\[
I = I(w):= \left\{ k \in [m] \ \big| \ |w_k| > 1/\sqrt{2m} \right\}.
\]  
Using the bound on infinity norm, we obtain the following bound
\[
1 
= \sum_{k \in [m]} |w_k|^2 
= \sum_{k \in I} |w_k|^2 + \sum_{k \in [m] \backslash I} |w_k|^2
\le |I| \frac{1}{\sqrt{m}} + m \frac{1}{2m}
= \frac{1}{2} + \frac{|I|}{\sqrt{m}}.
\]
Hence, the cardinality of $I$ is bounded from below as $|I| \ge \sqrt m / 2$.
Since entries of $w$ are real, they either satisfy $w_j>0$ or $w_j\le0$. Consider $I_+$ and $I_-$, subsets of $I$ with all positive and negative entries, so that
\[
I_+ = I_+(w) := \{ j \in I, w_j >0 \} 
\ \text{ and } \
I_- = I_-(w) := \{ j \in I, w_j < 0 \} .
\]
Since $|I_+| + |I_-| = |I|$, one of the sets has at least half of the entries of $I$. Without loss of generality, let $ |I_+| \ge \ceil{ \ceil{\sqrt m/2} /2} \ge \ceil{\sqrt{m/4}}$. Otherwise, for the rest of the proof consider $v = -w$ satisfying $\norm{\q{A}^* v}_\infty = \norm{\q{A}^* w}_\infty$ with $|I_+(v)| \ge |I_-(v)|$. Finally, we select an subset $L$ of $I_+$ with cardinality $|L| = \ceil{\sqrt{2m}/16} \le \ceil{\sqrt{m/4}}$ with indices sorted in increasing order. \\
Now, we introduce a lower bound for representation \eqref{eq: top eigen} as 
\begin{align*}
\norm{\q{A}^* w}_\infty  
& = \max_{v \in \bb{C}^n, \norm{v}_2=1}  \left|  \sum_{k =1}^m w_k | a_k^*  v |^2 \right|
\ge \max_{j \in L}  \left|  \sum_{k =1}^m w_k | a_k^*  \tilde a_j |^2 \right| 
%& \ge \max_{j \in L}   \sum_{k =1}^m w_k | a_k^*  \tilde a_j |^2
\end{align*}
where $\q{GS} := \{ \tilde a_j \}_{j \in L}$ is the Gram-Schmidt orthogonalization of the vectors $\{ a_j \}_{j \in L}$ according to the order in $L$. With notation 
\[
S^4_j(w) := \sum_{k \in [m] \backslash L}^m w_k | a_k^*  \tilde a_j |^2,
\]
the bound can be further elaborated as
\begin{align}
\norm{\q{A}^* w}_\infty  
& \ge \max_{j \in L}   \sum_{k =1}^m w_k | a_k^*  \tilde a_j |^2
= \max_{j \in L} \left[  \sum_{k \in [m] \backslash L} w_k | a_k^*  \tilde a_j |^2 
+  \sum_{k \in L} w_k | a_k^*  \tilde a_j |^2 \right] \nonumber \\
& \ge  \max_{j \in L}  S^4_j(w) 
+ \min_{j \in L} \sum_{k \in L} w_k | a_k^*  \tilde a_j |^2
\ge  \max_{j \in L} S^4_j(w) 
+ \min_{j \in L}  w_j | a_j^*  \tilde a_j |^2  \nonumber \\
& \ge \max_{j \in L} S^4_j(w) + \frac{1}{\sqrt{2m}} \min_{j \in L} | a_j^*  \tilde a_j |^2, \label{eq: flat proof}
\end{align}
where we used properties of $L$ in the last two inequalities. \par 
We note the first term can be treated similarly to the $\max_{\ell \in [n]} S^1_\ell(w)$ in the proof of \Cref{thm: non-centered}. Let vectors in $\q{GS}$ to be fixed. Then, consider two sums $S^4_{j}$ and $S^4_{\ell}$ for $j,\ell \in L, j \neq \ell$.  The sums are independent of each other, since $ a_k^*  \tilde a_j$ is independent of $ a_r^*  \tilde a_\ell$ for $k,r \in [m] \backslash L, k \neq r$ and $ a_k^*  \tilde a_j$ is independent of $ a_k^*  \tilde a_\ell$ as a projections on orthogonal directions for all $k\in [m] \backslash L$ \cite{cochran_1934}. 
The summands within the single sum are independent due to the independence of the $a_k$'s and follow a $|\q{CN}(0,1)|^2$ distribution. Hence, we introduce a random event
\[
%E_{S^4_j,1}(w) := \left\{ |S^4_j(w) - \Ex S^4_j(w) | < \frac{n}{16\sqrt{2m}} \right\},
%\text{ and }
E_{S^4_j}(w) := \left\{ S^4_j(w) >  \E S^4_j(w) - \frac{n}{16\sqrt{2m}} \right\} 
= \left\{ S^4_j(w) >  \sum_{k \in [m] \backslash L} w_k - \frac{n}{16\sqrt{2m}} \right\}.
\]
The tail probability of $E_{S^4_j}(w)$ bounded from above by Proposition \ref{p: bernstein} and assumptions on $w$ and $m$ as
\begin{align*}
\bb{P} \left( E_{S^4_j}^C(w) \ \big| \ \q{GS} \right) 
& \le \exp \left\{ - \tilde C \min \left\{ \frac{n^2}{512 m \norm{w}_2^2}, \frac{n}{16  \sqrt{2m} \norm{w}_\infty} \right\} \right\} \\
& \le \exp \left\{ - \tilde C \min \left\{ \frac{n^2}{512 m }, \frac{n}{16 \sqrt{2} m^{1/4}} \right\} \right\} \\
& \le \exp \left\{ - \frac{\tilde C n}{m^{1/4}} \min \left\{ \frac{n}{16 m^{3/4} }, 1 \right\} \right\}
\le \exp \left\{ - \frac{\tilde C n}{m^{1/4}} \right\}.
\end{align*}
By integrating out random variables $\q{GS}$, we obtain bound for the unconditional probability
\[
\bb{P} \left( E_{S^4_j}^C(w) \right)
= \int_{\Omega_{\q{GS} } } \bb{P} \left( E_{S^4_j}^C(w) \ \big| \ \q{GS} \right)  d \mu_{\q{GS}} 
\le \exp \left\{ - \frac{\tilde C n}{m^{1/4}} \right\},  
\]
where $(\Omega_{\q{GS} },\mu_{\q{GS}})$ denotes probability space generated by $\q{GS}$. %Note that $\Ex S^4_j(w) = \sum_{k \in [m] \backslash L} w_k$, and hence under event $E_{S^4_j,1}(w)$ it holds that
%\[
%S^4_j(w) 
%>  \Ex S^4_j(w) - \frac{n}{16\sqrt{2m}}
%=  \sum_{k \in [m] \backslash L} w_k - \frac{n}{16\sqrt{2m}},
%\]
%so that $E_{S^4_j,1}(w) \subseteq E_{S^4_j,2}(w)$ and 
%\[
%\bb{P} \left( E_{S^4_j,2}^C(w) \right) \le \bb{P} \left( E_{S^4_j,1}^C(w) \right) \le 2 \exp \left\{ - \frac{\tilde C n}{m^{1/4}} \right\}.
%\]
Finally, we define a random event
\begin{equation}\label{eq: flat proof in1}
E_{3,1}(w) := \left\{ \max_{j \in L} S^4_j(w) >  \sum_{k \in [m] \backslash L} w_k - \frac{n}{16\sqrt{2m}} \right\} = \bigcup_{j \in k} E_{S^4_j}(w)
\end{equation}
with tail probability
\begin{align*}
\bb{P} \left( E_{3,1}^C(w) \right) 
& = \bb{P} \left( \bigcap_{j \in L} E_{S^4_j}^C(w) \right)
= \prod_{j \in L} \bb{P} \left( E_{S^4_j}^C(w) \right)
\le \left(   \exp \left\{ - \frac{\tilde C n}{m^{1/4}} \right\} \right)^{|L|} \\
& =  \left(  \exp \left\{ - \frac{\tilde C n}{m^{1/4}} \right\} \right)^{\ceil{\sqrt{2m} /16}}
\le  \exp \left\{ - \tilde C n m^{1/4} \right\}.
\end{align*}
Turning to the second term in Inequality \eqref{eq: flat proof}, $\min_{j \in L} |a_j^* \tilde a_j|$, we recall that $\{ \tilde a_k \}_{k \in L}$ is an orthonormal system obtained by Gram-Schmidt orthogonalization. Hence, $|a_j^* \tilde a_j|^2$ can be expressed by the Pythagorean theorem as
\[
|a_j^* \tilde a_j|^2 
= \norm{a_j}_2^2 - \sum_{k \in L, k \neq j} |a_j^* \tilde a_k|^2
= \norm{a_j}_2^2 - \sum_{k \in L, k < j} |a_j^* \tilde a_k|^2
=: \norm{a_j}_2^2 - S^5_j(L),
\]
where in the second equality we used that $a_j \in \spn\{ \tilde a_1, \ldots \tilde a_j\}$ by construction. In the proof of \Cref{thm: spiky}, we showed that for a random event
\[
E_{a_j} := \{|\norm{a_j}_{2}^2 - n| < n/2\} 
\ \text{ it holds that } \
\bb{P}\left( E_{a_j}^C \right) 
\le 2 \exp \left\{ - c n \right\}.
\]
The sum $S^5_j(L)$ is again a sum of independent $|\q{CN}(0,1)|^2$ distributed random variables when $\q{S}_j := \{ \tilde a_1, \ldots \tilde a_{j-1}\}$ are fixed. Hence, a random event
\[
E_{S^5_j}(L) := \left\{ |S^5_j(L) - \Ex S^5_j(L)| < n/4 \right\}
\] 
has tail probability 
\begin{align*}
\bb{P} \left( E_{S^5_j}^C(L) \ \big| \ \q{S}_j \right) 
& \le 2 \exp \left\{ - C \min \left\{ \frac{n^2}{16 |L| }, \frac{n}{4 } \right\} \right\} 
\le 2 \exp \left\{ - C \min \left\{ \frac{n^2}{2 \sqrt{2m} }, \frac{n}{4 } \right\} \right\} \\
& \le 2 \exp \left\{ - C n \min \left\{ \frac{n}{\sqrt{m} }, 1 \right\} \right\}
\le 2 \exp \left\{ - C n \right\},
\end{align*}
where in the last inequality we used that $n \ge 16 m^{3/4} \ge 16 \sqrt{m}$.
Again, by integrating out $\q{S}_j$, unconditional tail probability is
\[
\bb{P} \left( E_{S^5_j}^C(L) \right) \le 2 \exp \left\{ - C n \right\}
\]
Under $E_{a_j} \cap E_{S^5_j(L)}$ it holds that 
\begin{align*}
|a_j^* \tilde a_j|^2 
& =  \norm{a_j}_2^2 - S^5_j(L)
> \frac{n}{2} - \frac{n}{4} - \Ex S^5_j(L) 
= \frac{n}{4} - |L|  \\
&\ge \frac{n}{4} - \ceil{\sqrt{m}/4}
\ge \frac{n}{4} - \frac{\sqrt{m}}{2}
\ge \frac{n}{4} - \frac{n^{2/3}}{16^{2/3}2}
\ge \frac{n}{4} - \frac{n}{16^{1/2} 2}
= \frac{n}{8}.
\end{align*}
Then, a random event
\[
E_{\tilde a_j}(L) := \left\{ |a_j^* \tilde a_j|^2 > \frac{n}{8} \right\} \supseteq E_{a_j} \cap E_{S^5_j(L)}
\]
has tail probability bounded from above by a union bound as
\[
\bb{P}\left( E_{\tilde a_j}^C (L) \right) 
\le \bb{P}\left(  E_{a_j}^C \cup E_{S^5_j}^C (L) \right) 
\le \bb{P}\left(  E_{a_j}^C\right)  + \bb{P}\left( E_{S^5_j}^C (L) \right)  
\le 4 \exp \left\{ - C n \right\}.
\]
Consequently, a random event
\begin{equation}\label{eq: flat proof in2}
E_{\min}(L) := \left\{ \min_{j \in L} |a_j^* \tilde a_j|^2 > \frac{n}{8} \right\} = \bigcap_{j \in L} E_{\tilde a_j}(L) 
\end{equation}
has tail probability bounded from above by a union bound as
\[
\bb{P}\left( E_{\min}^C (L) \right) 
= \bb{P}\left(  \bigcup_{j \in L} E_{\tilde a_j}^C (L) \right) 
\le \sum_{j \in L} \bb{P}\left(  E_{\tilde a_j}^C \right) 
\le 4 |L| \exp \left\{ - C n \right\}.
\]
Since order in the exponent is $n \le m$, it is insufficient to apply the covering argument. However, $ \min_{j \in L} |a_j^* \tilde a_j|^2$ is independent of entries of $w$ and only depends on a choice of a index set $L$. The number of all possible choices of the subset $L$ is given by
\[
\binom{m}{|L|} 
\le \frac{m^{|L|}}{(|L|)!}
\le \frac{m^{\sqrt{2m}/8}}{(|L|)!}
\]
Thus, a random event
\[
E_{3,2} := \bigcap_{L \subset [m], |L|= \ceil{\sqrt{2m}/16}} E_{\tilde a_j}(L) 
\]
has tail probability
\begin{align*}
\bb{P}\left( E_{3,2}^C \right) 
& = \bb{P}\left(  \bigcup_{  L \subset [m], |L|= \ceil{\sqrt{2m}/16}  } E_{\min}^C (L) \right) 
\le \sum_{ \substack{ L \subset [m],\\ |L|= \ceil{\sqrt{2m}/16} } } \bb{P}\left(  E_{\min}^C(L) \right) \\
& \le 4 \frac{|L| m^{\sqrt{2m}/8} }{(|L|)!} \exp \left\{ - C n \right\}
\le 4 m^{\sqrt{2m}/8}  \exp \left\{ - C n \right\}
\end{align*}
The order of the exponent can be bounded from above as
\[
- C n + \frac{\sqrt{2m}}{8} \log m \le -16 C m^{3/4} + \frac{\sqrt{2m}}{8} \log m < 0
\]
for sufficiently large $m$. \par
Last step is to observe that due to inequalities $m^{3/4} \ge n/16$ and $\norm{w}_\infty \le m^{-1/4}$ it holds that
\begin{equation}\label{eq: flat proof in3}
\sum_{k \in L} w_k 
\le |L| m^{-1/4} 
= \left \lceil \frac{\sqrt{2m}}{16} \right \rceil m^{-1/4}
\le \frac{\sqrt{2m}}{8 m^{1/4}}
= \frac{\sqrt 2 m^{1/4}}{8}
=\frac{2 m^{3/4}}{8 \sqrt{2m}}
\le \frac{n}{ 64 \sqrt{2m}}.
\end{equation}
Returning to the Inequality \eqref{eq: flat proof}, under $E_{3,1}(w) \cap E_{3,2}$ in the view of Inequalities \eqref{eq: flat proof in1}, \eqref{eq: flat proof in2}, \eqref{eq: flat proof in3} and assumption on $w$, we obtain
\begin{align*}
\norm{\q{A}^* w}_\infty   
& \ge \max_{j \in L} S^4_j(w) + \frac{1}{\sqrt{2m}} \min_{j \in L} | a_j^*  \tilde a_j |^2 
> \sum_{k \in [m] \backslash L} w_k - \frac{n}{16\sqrt{2m}} + \frac{n}{8\sqrt{2m}} \\
& = \sum_{k =1}^m w_k - \sum_{k \in L} w_k + \frac{n}{16\sqrt{2m}}
> -\frac{n}{32 \sqrt{2m}}  - \frac{n}{ 64 \sqrt{2m}} + \frac{n}{16\sqrt{2m}} 
= \frac{n}{64 \sqrt{2m}},
\end{align*}
which concludes the proof.
\end{proof}

\subsubsection{Combining the results}\label{sec: joining results}
All three cases can be united as a single statement.
%\textcolor{red}{Here I use results for flat case when single side Bernstein is applied instead of two-sided and annoying $2^{\sqrt{2m}/8}$ is avoided. Proof to be fixed after discussion.}

\begin{corollary}\label{col: joined}
Let $w \in \bb{R}^m, \norm{w}_2= 1$. Suppose that the number of measurements $m$ satisfies 
\[
m \le \min\{ (n/16)^{4/3} , (n/8)^{8/7}, n^2 \}
\ \text{ and } m \text{ is sufficiently large.}
%(m^{3/4}+1) \log m < c m^{7/8}
\] 
Then, there exists a random event $E_{1}(w)$ depending on $w$ and random event $E_{2}$ independent of $w$ with tail probabilities
\[
\bb{P}\left(  E_{1}^C (w)\right) \le 2 \exp \left\{ -\tilde C n m^{1/8} \right\}, 
\] 
and
\[
\bb{P}\left(  E_{2}^C \right) 
\le 8 m^{\sqrt{2m}/8} \exp \left\{ - c n \right\} + 4 m^{m^{3/4}+1} \exp \left\{ -C m^{7/8} \right\}
\] 
such than on $E_{1}(w) \cap E_{2}$, Inequality \eqref{eq: desired} holds. 
\end{corollary}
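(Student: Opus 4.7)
\bigskip

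\noindent\textbf{Proof proposal.} The plan is a straightforward case analysis leveraging the three preceding theorems, whose hypotheses together partition the unit sphere $\{w\in\bb R^m : \|w\|_2 = 1\}$. Indeed, any such $w$ either (i) has $|\sum_k w_k| \ge n/(32\sqrt{2m})$, in which case \Cref{thm: non-centered} applies; or (ii) has $|\sum_k w_k| < n/(32\sqrt{2m})$ and $\|w\|_\infty \ge m^{-1/4}$, in which case \Cref{thm: spiky} applies; or (iii) has both $|\sum_k w_k| < n/(32\sqrt{2m})$ and $\|w\|_\infty < m^{-1/4}$, in which case \Cref{thm: flat} applies. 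Moreover, the condition $m \le \min\{(n/16)^{4/3},(n/8)^{8/7}, n^2\}$ ensures that the dimensional hypotheses of all three theorems are simultaneously satisfied.

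The construction is then to separate the $w$-dependent from the $w$-independent randomness that appears in the three theorems. Define
\[
E_{1}(w) := E_{1,1}(w)\cap E_{2,1}(w)\cap E_{3,1}(w), \qquad E_{2} := E_{2,2}\cap E_{3,2},
\]
where the events on the right are those provided by \Cref{thm: non-centered,thm: spiky,thm: flat}. On $E_{1}(w)\cap E_{2}$, I verify \eqref{eq: desired} by inspecting which case $w$ falls in: in case (i), the conclusion of \Cref{thm: non-centered} applies via $E_{1,1}(w)$ alone; in case (ii), \Cref{thm: spiky} gives the conclusion from $E_{2,1}(w)\cap E_{2,2}$; in case (iii), \Cref{thm: flat} gives it from $E_{3,1}(w)\cap E_{3,2}$. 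Since $E_{1}(w)\cap E_{2}$ is contained in each of these intersections, \eqref{eq: desired} holds in every case.

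The tail bounds follow by a union bound. For $E_{1}(w)$,
\[
\bb P\bigl(E_{1}^{C}(w)\bigr) \le 2\exp\!\bigl(-c n^{2}/\sqrt m\bigr) + 2\exp\!\bigl(-\tilde C n m^{1/8}\bigr) + \exp\!\bigl(-\tilde C n m^{1/4}\bigr).
\]
Under the assumption $m \le (n/8)^{8/7}$ we have $n \ge 8 m^{7/8}$, so $n^{2}/\sqrt m \ge 64\, m^{5/4} \ge n m^{1/8}$ and $n m^{1/4} \ge n m^{1/8}$; thus for $m$ sufficiently large the first and third terms are dominated by the middle one (up to constants), yielding the stated bound $2\exp(-\tilde C n m^{1/8})$ after relabeling $\tilde C$. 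For $E_{2}$,
\[
\bb P\bigl(E_{2}^{C}\bigr) \le 2m\exp(-cn) + 4m^{m^{3/4}+1}\exp\!\bigl(-C m^{7/8}\bigr) + 4m^{\sqrt{2m}/8}\exp(-C n),
\]
and since $m^{\sqrt{2m}/8}\ge m$ for large $m$, the first and third terms combine to at most $8 m^{\sqrt{2m}/8}\exp(-cn)$, matching the stated bound.

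The only nontrivial point in this argument is the bookkeeping in the probability estimates. In particular, one must be careful that the $w$-independent events $E_{2,2}$ and $E_{3,2}$ supply the correct uniformity: they are constructed in \Cref{thm: spiky,thm: flat} by taking intersections over all admissible index sets $J$ and $L$, which is precisely what allows a single event $E_{2}$ to be paired with the $w$-dependent event $E_{1}(w)$ for \emph{every} unit vector $w$ simultaneously. This separation is exactly what the subsequent covering argument in \Cref{sec: joining results} will require, and so verifying that the constructed $E_{1}(w)$ and $E_{2}$ retain this property is the main (albeit conceptual rather than computational) obstacle.
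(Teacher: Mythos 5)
Your proposal is correct and follows essentially the same route as the paper: a case analysis over \Cref{thm: non-centered,thm: spiky,thm: flat}, the definition $E_2 := E_{2,2}\cap E_{3,2}$, and a comparison of exponents using $16m^{3/4}\le n$ to show $n^2/\sqrt m$ and $nm^{1/4}$ both dominate $nm^{1/8}$. The only difference is that the paper sets $E_1(w)$ to be the \emph{single} event $E_{1,1}(w)$, $E_{2,1}(w)$ or $E_{3,1}(w)$ corresponding to the case $w$ falls in (so its tail is the maximum, not the sum, of the three bounds and the constant $2$ is immediate), whereas your intersection of all three forces you to relabel $\tilde C$ to absorb the factor $3$ and, more importantly, requires checking that $E_{3,1}(w)$ is even well-defined for spiky $w$ (its construction needs $|I_+(w)|\ge\lceil\sqrt{2m}/16\rceil$, which is only guaranteed under the flat-case hypothesis $\|w\|_\infty<m^{-1/4}$); this is easily patched by declaring the irrelevant events to be the full probability space, but should be said.
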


\begin{proof}
The proof of the corollary is obtained by the comparison of the tail of the corresponding events in \Cref{thm: non-centered,thm: spiky,thm: flat}.
The random event $E_1(w)$ is either $E_{1,1}(w)$, $E_{2,1}(w)$ or $E_{3,1}(w)$ depending on $w$. Let $\tilde C$ be the smallest of the constants in the tail probabilities of these random events. Due to the condition $16 m^{3/4} \le n$, the following line of inequalities holds
\[
n m^{1/8} 
\le n m^{1/4} 
= n m^{3/4} / \sqrt{m}
\le n^2 / (16 \sqrt{m}) 
\le n^2 /\sqrt{m},
\]
and hence tail probability of the event $E_1(w)$ is bounded from above as
\[
\bb{P}\left(  E_{1}^C (w)\right) 
\le \max \left\{ \bb{P}\left(  E_{1,1}^C (w)\right), \bb{P}\left(  E_{2,1}^C (w)\right),\bb{P}\left(  E_{3,1}^C (w)\right)\right\}
\le 2 \exp \left\{ -\tilde C n m^{1/8} \right\}.
\] 
We set random event $E_2$ as an intersection of events $E_{2,2}$ and $E_{3,2}$ and its tail probability is bounded from above by De Morgan's law and union bound, that is
\begin{align*}
\bb{P}\left(  E_{2}^C \right) 
& = \bb{P}\left(  E_{2,2}^C \cup E_{3,2}^C \right) 
\le  \bb{P}\left(  E_{2,2}^C \right) +  \bb{P}\left(  E_{3,2}^C \right) \\
& \le 2\exp \left\{ - c n \right\} + 4 m^{m^{3/4}+1} \exp \left\{ -C m^{7/8} \right\}
+ 4 m^{\sqrt{2m}/8} \exp \left\{ - c n \right\} \\
& \le 8 m^{\sqrt{2m}/8} \exp \left\{ - c n \right\} + 4 m^{m^{3/4}+1} \exp \left\{ -C m^{7/8} \right\},
\end{align*}
where constant $c$ is chosen to be smaller of two constants in the corresponding exponentials. 
\end{proof}

Finally, we extend our results for a single vector $w$ to a uniform bound, which hold for all $w$ simultaneously and grant us the $S_1$-quotient property of the scaled measurement operator $\frac{1}{\sqrt m}\q A$. 

\begin{proof}[Proof of \Cref{thm: QP}]
In order to extend the obtained results for all $w, \norm{w}_2 =1$, we will consider an $\delta$-covering (net) $\q C$ of the unit ball $\q B := \{w \in \bb R^m, \norm{w}_2 =1 \}$, that is for each $w \in \q B$ exist vector $\tilde w$ in $\q C$ such that 
\[
\norm{w - \tilde w}_2 \le \delta.
\]
The radius $\delta$ is chosen to be $\frac{1}{192 \sqrt{2}m}$. % and this exact value is explained later in the proof. 
The covering $\q C$ is a finite subset of $\q B$ and its cardinality is bounded from above by Corollary 4.2.13 in \cite{Vershynin.2018} as 
\[
|\q C| \le (1 + 2/\delta)^m = (1 + 384\sqrt 2 m)^m \le (1 + 544.1 m)^m \le (545m)^m.  %= \exp\{ m \log (1 + 2 ... m)\}.
\] 
First, we note than under conditions $c_2  m^{7/8} \log 545 m \le n$ and $m$ being sufficiently large all inequalities $m \le (n/8)^{8/7}$, $m \le (n/16)^{4/3}$ and $m < n^2$ are satisfied and thus \Cref{col: joined} can be applied for $\tilde w \in \q C$. Further, Inequality \eqref{eq: desired} can be extended to hold for all $\tilde w \in \q C$ simultaneously. That is the tail probability of the random event $E_{\q C}$ defined as 
\[
E_{\q C} := \left\{ \norm{\frac{1}{\sqrt m} \q{A}^* \tilde w}_\infty \ge \frac{n}{64\sqrt{2}m} \text { for all } \tilde w \in \q C \right\} = \left( \bigcap_{\tilde w \in \q C} E_1(\tilde w) \right) \cap E_2,
\] 
is bounded from above by De Morgan's laws and union bound
\begin{align*}
\bb{P}\left(  E_{\q C}^C \right) 
& = \bb{P}\left(  \left( \bigcup_{\tilde w \in \q C} E_1(\tilde w)^C \right) \cup E_2^C \right) 
\le  \sum_{\tilde w \in \q C} \bb{P}\left( E_1(\tilde w)^C \right) +  \bb{P}\left(  E_{2}^C \right) \\
& \le 2 (545m)^m \exp \left\{ -\tilde C n m^{1/8} \right\} + \bb{P}\left(  E_{2}^C \right) \\
& \le 2 (545m)^m \exp \left\{ -\tilde C n m^{1/8} \right\} + 8 m^{\sqrt{2m}/8} \exp \left\{ - c n \right\} + 4 m^{m^{3/4}+1} \exp \left\{ -C m^{7/8} \right\},
\end{align*}
Note that if $c_2$ is selected to satisfy $c_2 > \tilde C^{-1}$, then there exists $\gamma_2 >0$ such that $c_2 = (1 + \gamma_2) \tilde C^{-1}$ and due to condition $c_2  m^{7/8} \log 545 m \le n$ the first term transforms as 
\[
%(545m)^m \exp \left\{ -\tilde C n m^{1/8} \right\} = 
\exp \left\{ m \log (545m) -\tilde C n m^{1/8} \right\}
\le \exp \left\{ - \gamma_2 m \log (545m) \right\}
\le \exp \left\{ -\gamma_2 m^{7/8} \right\}.
\]
Similarly, the second and the third terms are bounded from above as
\[
%m^{\sqrt{2m}/8} \exp \left\{ - c n \right\} =  
\exp \left\{ (\sqrt{2m}/8) \log m - c n \right\}
\le  \exp \left\{ (\sqrt{2m}/8) \log m - c_2 m^{7/8} \log 545m \right\}
\le \exp \left\{ -\gamma_2 m^{7/8} \right\},
\]
and
\[
%m^{m^{3/4}+1} \exp \left\{ -C m^{7/8} \right\} = 
\exp \left\{ m^{3/4} \log m +\log m -C m^{7/8} \right\}
\le \exp \left\{ -\gamma_2 m^{7/8} \right\},
\]
respectively, for a sufficiently large $m$. Thus, the tail probability of $E_{\q C}$ is bounded from above by
\[
\bb{P}\left(  E_{\q C}^C \right)  \le 14 \exp \left\{ -\gamma_2 m^{7/8} \right\}.
\]

Final step is to use the properties of the covering under event $E_{\q C}$ to derive the uniform bound. Let $w \in \q B$ be arbitrary vector. Then, there exists $\tilde w \in \q C$ such that $\norm{w - \tilde w}_2 \le \delta$ and we bound the spectral norm $\norm{\frac{1}{\sqrt m} \q{A}^* w}_\infty$ as
\[
\norm{\frac{1}{\sqrt m} \q{A}^* w}_\infty 
= \norm{\frac{1}{\sqrt m} \q{A}^* (w - \tilde w + \tilde w)}_\infty
\ge \norm{\frac{1}{\sqrt m} \q{A}^* \tilde w}_\infty - \norm{\frac{1}{\sqrt m} \q{A}^* (w - \tilde w)}_\infty
\]
Under $E_{\q C}$ it holds that $\norm{\frac{1}{\sqrt m} \q{A}^* \tilde w}_\infty \ge \frac{n}{64\sqrt{2}m}$ and hence what remains is an upper bound for the second term. Using Lemma \ref{l: dual operator}, we bound it as
\begin{align*}
\norm{\frac{1}{\sqrt m} \q{A}^* (w - \tilde w)}_\infty 
& = \norm{ \frac{1}{\sqrt m} \sum_{k=1} (w - \tilde w)_k a_k a_k^*}_\infty
\le \frac{1}{\sqrt m} \sum_{k=1} |(w - \tilde w)_k | \norm{a_k a_k^*}_\infty \\
& = \frac{1}{\sqrt m} \sum_{k=1} |(w - \tilde w)_k | \norm{a_k}_2^2
\le \frac{1}{\sqrt m} \norm{w - \tilde w}_1  \max_{k \in [m]} \norm{a_k}_2^2 \\
& \le \norm{w - \tilde w}_2 \frac{3n}{2} 
\le \frac{3n \delta}{2}
= \frac{n}{128\sqrt{2}m},
\end{align*}
where we used that by construction
\[
E_{\q C} \subseteq E_2 \subseteq E_{2,2} \subseteq E_{\norm{\cdot}} = \left\{ \max_{j \in [m]} \norm{a_j}_{2}^2 < \frac{3n}{2} 
\text{ and } 
\min_{j \in [m]} \norm{a_j}_{2}^2 > \frac{n}{2}
\right\}.
\]
Thus, we obtain that under $E_{\q C}$ for all $w$ such that $\norm{w}_2=1$  it holds that
\[
\norm{\frac{1}{\sqrt m} \q{A}^* w}_\infty  
\ge \norm{\frac{1}{\sqrt m} \q{A}^* \tilde w}_\infty - \norm{\frac{1}{\sqrt m} \q{A}^* (w - \tilde w)}_\infty
\ge \frac{n}{64\sqrt{2}m} - \frac{n}{128\sqrt{2}m} 
= \frac{n}{128\sqrt{2}m}.
\]
The final term can be split as
\[
\frac{n}{128\sqrt{2}m} = \left(\frac{\kappa}{128\sqrt{2}}\sqrt{\frac{n}{\kappa m}} \right) \sqrt{\frac{n}{\kappa m}}.
\]
Then, by \Cref{thm: equivalent qp}, the scaled measurement operator $\frac{1}{\sqrt m}\q A$ possess the $S_1$-quotient property with constant $\frac{128 \sqrt {2}}{\kappa} \sqrt{\kappa m/n}$ and rank $\kappa m / n$ relative to the norm $\norm{\cdot}_2$.
\end{proof}

\subsection{Proof of \Cref{col: final}}
\Cref{col: final} is a consequence of applying \Cref{thm: quotient+NSP to optimality}. 

\begin{proof}[Proof of \Cref{col: final}]

	This proof is essentially joins the results. By \Cref{thm: NSP} and the choice of rank $r_* = c_1^{-1}\rho^2 m/n$, the scaled measurement operator $\frac{1}{\sqrt m}\q A$ satisfies the $S_2$-rank robust null space property of order $r_*$ with constants $0 < \rho < 1$ and $\tau>0$ relative to the $\norm{\cdot}_2$ on $\bb R^{m}$ with probability at least $1 - e^{-\gamma_1 m}$. By \Cref{thm: QP} with $\kappa = c_1^{-1}\rho^2$, $\frac{1}{\sqrt m}\q A$ posses the $S_1$-quotient property with constant $\frac{128\sqrt {2} c_1 }{\rho^2} \sqrt{r_*}$ and rank $r_*$ relative to the norm  $\norm{\cdot}_2$ on $\bb R^{m}$ with probability at least $1 - 14e^{-\gamma_2 m^{7/8}}$. Thus, by union bound, it satisfies both properties simultaneously with probability at least
	\[
	1 - e^{-\gamma_1 m} - 14e^{-\gamma_2 m^{7/8}}
	\ge 1 - e^{-\gamma m} - 14e^{-\gamma m^{7/8}}
	= 1  - 15e^{-\gamma m^{7/8}},
	\]
	with $\gamma := \min\{\gamma_1, \gamma_2\}$.
	By \Cref{thm: Kabanava3.1} the reconstruction program $\Delta$ admits the error bound \eqref{eq: requested rec guar} in the noiseless case ($\eta =0$).
	Hence, by \Cref{thm: quotient+NSP to optimality} we obtain
	\[
	\norm{X - \Delta \left( \frac{1}{\sqrt m}\q{A} X + \frac{1}{\sqrt m}w \right) }_{p} \le \frac{ D_1} { r^{ 1 - \frac{1}{p} } } \norm{X_{r}^{c} }_{1} + (D_2 \frac{128 \sqrt {2} c_1}{\rho^2} \sqrt{r_*} + D_3 )r_*^{ \frac{1}{p} - \frac{1}{2} } \frac{\norm{w}_2}{\sqrt m}.
	\]
	The last term can be bounded from above as
	\[
	(D_2 \frac{128 \sqrt {2} c_1}{\rho^2} \sqrt{r_*} + D_3 )r_*^{ \frac{1}{p} - \frac{1}{2} }
	\le (D_2 \frac{128 \sqrt {2}c_1}{\rho^2} \sqrt{r_*} + D_3 \sqrt{r_*})r_*^{ \frac{1}{p} - \frac{1}{2} }
	=: D_4 r_*^{ \frac{1}{p} },
	\]
	since $r_* \ge 1$. Note that constraint of reconstruction program $\Delta$ rescales error in the following way 
	\[
	\norm{\frac{1}{\sqrt m} \q A Z -  \frac{1}{\sqrt m}\q{A} X + \frac{1}{\sqrt m}w} = 0
	\text{ is equivalent to}
	\norm{\q A Z - \q{A} X +  w } = 0,
	\] 
	and therefore 
	\[
	\norm{X - \Delta \left( \q{A} X + w \right) }_{p} \le \frac{ D_1} { r^{ 1 - \frac{1}{p} } } \norm{X_{r}^{c} }_{1} + \frac{D_4 r_*^{ \frac{1}{p} } \norm{w}_2}{\sqrt m}.
	\]
	This concludes the proof for $\Delta$. The proof for $\Delta_{2,\eta}$ is analogous.
%The proof for $\Delta_{2,\nu}$ is obtained by applying part B) of \Cref{thm: quotient+NSP to optimality}.
%	The same statement applies to reconstruction map $\Delta_{2,\mu}^{\text{Lasso}}$ if usage of \Cref{thm: Kabanava3.1} is replaced by \Cref{col: Lasso noiseless}. The scaling in this case affects the choice of parameter $\mu$, that is
%	\[
%	\argmin_{Z \in \q H_n} \norm{\frac{1}{\sqrt m}\q{A} Z - \frac{1}{\sqrt m}\q A X - \frac{1}{\sqrt m} w }_{q} + \mu \norm{Z}_{1}
%	=
%	\argmin_{Z \in \q H_n} \norm{\q{A} Z - \q A X - w}_{q} + \sqrt m \mu \norm{Z}_{1}.
%	\]
%	Since $\mu \le \frac{(1 +\rho)^2}{2 \tau (3 + \rho)\sqrt{r_* m}}$, then $\sqrt m \mu \le \frac{(1 +\rho)^2}{2 \tau (3 + \rho)\sqrt{r_*}}$ and, thus, \Cref{col: Lasso noiseless} applies.
\end{proof}

\subsection{Proof of \Cref{thm: quotient+NSP to optimality}}
The proof of the \Cref{thm: quotient+NSP to optimality} closely follows the steps of its sparse recovery analogue Theorem 11.12 in \cite{Foucart.2013}.
\begin{proof}[Proof of \Cref{thm: quotient+NSP to optimality}]
Our goal is to bound $\norm{X - \q R_\eta (\q A X + w)}_q$ from above in terms of $\norm{X_r^c}$ and $\norm{w}$ for all $X \in \q{H}_{n}$ and $w \in \bb{R}^{m}$.
In the first case, when $\norm{w} \le \eta$, by assumption we have
\[
\norm{X - \q R_{\eta} \left( \q{A} X + w \right) }_p \le \frac{ D_1 } { r_*^{1 - 1/p} } \norm{X_{r_*}^{c} }_1 + \frac{2 \tau (3 + \rho)}{ 1 - \rho } r_*^{1/p - 1/2} \max\{ \norm{w},\eta \},
\]
with $D_1:=\frac{ 2 (1 + \rho)^2 } { (1 - \rho) }$. 

In the second case, let $\norm{w} > \eta$ and consider decomposition
\[
w = \frac{\eta}{\norm{w}} w + \left(1 - \frac{\eta}{\norm{w}}\right) w =: w_\eta + v. 
\]
Note that $\norm{w_\eta} = \eta$ and $\norm{v} = \norm{w} - \eta$. Furthermore, by the $S_1$-quotient property there exists matrix $U$ such that $\q A U = v$. Thus, by triangle inequality, it holds that
\begin{align*}
\norm{X - \q R_{\eta} \left( \q{A} X + w \right) }_p
& = \norm{X - \q R_{\eta} \left( \q{A} X + w_\eta + \q A U \right) + U - U}_p \\
& \le \norm{X + U - \q R_{\eta} \left( \q{A} (X + U) + w_\eta \right) }_p + \norm{U}_p.
\end{align*}
For matrix $X+U$ and noise $w_\eta$ the first case applies and one obtains the bound
\[
\norm{X - U \q  + \q R_{\eta} \left( \q{A}(X + U) + w_\eta  \right) }_p \le \frac{D_1}{r_*^{1 - 1/p}} \norm{ (X +U)_{r_*}^{c}}_1 + \frac{2 \tau (3 + \rho)}{ 1 - \rho } r_*^{1/p - 1/2} \eta.
\]
Further the first norm bounded using the best rank approximation properties \eqref{eq: rank proj} of \mbox{$(X + U)_{ r_*}^c$} and triangle inequality, that is
\[
\norm{(X + U)_{ r_*}^c}_1 
=\min_{\substack{ Z \in \bb{C}^{n \times n}, \\ \rank(Z) \le r_*} } \norm{X + U- Z}_1
\le \norm{U}_1 + \min_{\substack{ Z \in \bb{C}^{n \times n}, \\ \rank(Z) \le r_*}}  \norm{X- Z}_1
= \norm{U}_1 + \norm{X_{ r_*}^c}_1.
\]
Consequently, we obtain
\begin{equation}\label{eq: pre optimality}
\norm{X - \q R_\eta(\q A X + w)}_p
\le \frac{D_1}{r_*^{1 - 1/p}} \norm{X_{ r_*}^c}_1 
+ \frac{D_1}{r_*^{1 - 1/p}} \norm{U}_1  
+ \norm{U}_p
+ \frac{2 \tau (3 + \rho)}{ 1 - \rho } r_*^{1/p - 1/2} \eta.
\end{equation}
The first term is already a finalized. The second term is bounded by the $S_1$-quotient property as
\[
\frac{D_1}{  r_*^{1 - 1/p}} \norm{U}_1  
\le D_1 d  r_*^{1/p - 1/2}\norm{v},
\] 
and hence, only the third term remains.  
By Lemma 3.1 in \cite{Kabanava.2016}, it holds that 
\[
\norm{U_{ r_*}^c}_p \le \frac{\norm{U}_1}{ r_*^{1 -1/p} }.
\]
On the other hand, by H{\"o}lder's inequality and $S_2$-robust rank null space property, it follows that
\begin{align*}
\norm{U_{ r_*}}_p 
& \le  r_*^{1/p - 1/2} \norm{U_{ r_*}}_2
\le \frac{\rho  r_*^{1/p - 1/2} }{  r_*^{1 - 1/2}} \norm{U_{ r_*}^c}_1 + r_*^{1/p - 1/2} \tau \norm{\q A U} \\
& \le \frac{\rho}{  r_*^{1 - 1/p}} \norm{U}_1 + r_*^{1/p - 1/2} \tau \norm{v}.
\end{align*}
Thus, by triangle inequality and the $S_1$-quotient property we obtain
\begin{align*}
\norm{U}_p 
& \le \norm{U_{r_*}}_p  + \norm{U_{r_*}^c}_p
\le  \frac{\norm{U}_1}{ r_*^{1 -1/p} } + \frac{\rho}{ r_*^{1 - 1/p}} \norm{U}_1 + r_*^{1/p - 1/2} \tau \norm{v} \\
& \le \frac{(1 +\rho) d \sqrt{r_*} }{ r_*^{1 - 1/p}} \norm{v} + r_*^{1/p - 1/2} \tau \norm{v}
= ((1 + \rho)d  + \tau)  r_*^{1/p-1/2} \norm{v}.
\end{align*}
Next, we combine established bounds for the second and third terms with the inequality \eqref{eq: pre optimality}, so it holds that
\[
\norm{X - \q R_\eta(\q A X + w)}_p
\le \frac{D_1}{r_*^{1 - 1/p}} \norm{X_{ r_*}^c}_1  + (D_2 d + \tau) r_*^{1/p-1/2} \norm{v} + \frac{2 \tau (3 + \rho)}{ 1 - \rho } r_*^{1/p - 1/2} \eta,
\]
where $D_2 = D_1 + (1 + \rho)$.
Recall that $\norm{v} = \norm{w} - \eta$. Then, we can unite coefficients by $\eta$ as
$d_1 := \frac{2 \tau (3 + \rho)}{ 1 - \rho } - (D_2 d + \tau)$. If $d_1 \le 0 $ then the $\eta$-term can be neglected. Otherwise we bound $\eta < \norm{w} = \max\{\norm{w}, \eta\}$ and obtain 
\[
\norm{X - \q R_{\eta} \left( \q{A} X + w \right) }_p \le \frac{D_1}{r_*^{1 - 1/p}}\norm{X_{r_*}^{c} }_1 + d_2 r_*^{1/p - 1/2} \max\{\norm{w}, \eta\},
\]
where $d_2:=  (D_2 d + \tau) + \max\left\{d_1, 0 \right\}$. 
Finally, we join two cases and select coefficient by $\max\{\norm{w}, \eta\}$ as maximum of $d_2$ and  $\frac{2 \tau (3 + \rho)}{ 1 - \rho }$. Depending on the sign of $d_1$, $d_2$ is either $D_2 d + \tau$ or $\frac{2 \tau (3 + \rho)}{ 1 - \rho }$. Hence, it holds that
\[
\max\left\{d_2, \frac{2 \tau (3 + \rho)}{ 1 - \rho } \right\}
\le \max\left\{D_2 d + \tau, \frac{2 \tau (3 + \rho)}{ 1 - \rho } \right\}
\le D_2 d + \tau + \frac{2 \tau (3 + \rho)}{ 1 - \rho }
=: D_2 d + D_3,
\]
and desired inequality
\[
\norm{X - \q R_\eta(\q A X + w)}_p
\le \frac{D_1}{r_*^{1 - 1/p}} \norm{X_{ r_*}^c}_1  + (D_2 d + D_3) r_*^{1/p-1/2} \max\{\norm{w}, \eta\}
\]
holds.
At last we note that for all  $r \le  r_*$ it holds that $1/ r_* \le 1/r$ and $\norm{X_{ r_*}^c}_1 \le \norm{X_{r}^c}_1$, and thus
\[
\norm{X - \q R_\eta (\q A X + w)}_p
\le \frac{D_1}{r^{1 - 1/p}} \norm{X_r^c}_1  + (D_2 d + D_3) r_*^{1/p-1/2} \norm{w},
\]
which concludes the proof.
\end{proof}

\section{Conclusion and Future work}
In this paper, we established the $S_1$-quotient property for Gaussian rank-one measurements, allowing us to derive noise-blind guarantees for the recovery low-rank matrices from such measurements. We expect that the proof technique we used can help with the development of similar results for other applications with more structure such as randomized blind deconvolution. An addition difficulty of this scenario is that even for noise-aware nuclear norm minimization recovery guarantees analogous to those discussed in \Cref{sec:previousresults} are only possible with additional dimensional scaling factors \cite{KS19}.
%A significant part of our contribution is the new proof technique, 
%\CK{which we expect to help in the development of similar results for other applications such as random blind deconvolution, and beyond that, to even more structured measurements}.
  
For follow-up work, it will be of interest to study to which extent noise-blind recovery guarantees from rank-one measurements via a square-root Lasso as studied in \cite{GaiffasKlopp17} for random noise can also be established under a robust null space property in analogy to the results of \cite{PetersenJung20} that have been achieved for sparse recovery.

Another interesting line of research concerns the refinement of our proofs to get an improvement on the maximum number of measurements admissible in \Cref{thm: QP}. Our numerical experiments in \Cref{sec: numerics} and contributions regarding the quotient property for a different measurement scenario \cite{Candes.2011} suggest that the optimal bound of a form comparable to $m \le c n^2 / \log^\alpha (m/n)$, potentially leaving room for further improvement of \Cref{thm: QP}.

\section*{Acknowledgements}
OM was partially supported by the Helmholtz Association within the project \mbox{Ptychography 4.0}. FK acknowledges support by the German Science Foundation DFG in the context of an Emmy Noether junior research group (\mbox{project KR 4512/1-1}). The authors gratefully acknowledge the Leibniz Supercomputing Centre (LRZ) for providing computing time on their Linux cluster.

\appendix 
\section{Appendix}

\subsection{Proof of \Cref{thm: equivalent qp}} \label{app:proof:aux:dual}

Let $\q{A}$ satisfy the $S_q$-quotient property with constant $d$ and rank $r_*$ relative to $\norm{\cdot}$. For $w \in \bb{R}^{m}$ by definition of dual norm we have that 
\[
\norm{w}_{*} = \sup_{\norm{v} = 1 }\langle v ,w \rangle.
\]
Since $\bb{R}^m$ is finite-dimensional space and the unit ball is a compact set, there exists element $y \in \bb{R}^{m}, \norm{y} = 1$ such that supremum is attained $\norm{w}_{*} =  y^* w$.
By definition of the quotient property, $y = \q{A}U$ for some $U \in \q H_n$ with 
\[
\norm{U}_{q} \le d r_{*}^{1/q - 1/2}\norm{y} = d r_{*}^{1/q - 1/2}.
\] 
Next, we apply definition of the adjoint operator and H\"{o}lder's inequality for Schatten norms and obtain
\[
\norm{w}_{*}
= \langle y ,w \rangle
= \langle \q{A}U,w \rangle
= \langle U,\q{A}^{*}w \rangle_F
\le \norm{U}_{q} \norm{\q{A}^{*}w}_{q^{*}}
\le d r_{*}^{1/q - 1/2}\norm{\q{A}^{*}w}_{q^{*}},
\]
where $q^{*}$ is H\"{o}lder dual of $q$ and equals to $q/(q-1)$. It finishes the proof in one direction.

To show that reverse statement is true, let us assume that inequality \eqref{eq: qp in dual norms} holds. We start with case $q>1$. Recall that $q^{*} = q/(q-1)$.
If $w = 0$, then $U$ from the definition of the $S_q$-quotient property can be set as zero matrix. Let $w \in \bb R^{m} \backslash \{\bar{0}\}$ and set $U \in \q H_n$ as 
\[
U = \Delta_q(w) := \argmin_{Z \in \q H_n} \{~ \norm{Z}_{q} ~s.t. ~\q{A}Z = w ~\}
\]
The existence of the feasible point can be justified with proof by contradiction.

Suppose that there exists $v \in \bb R^{m}$ such that for all $Z \in \q H_n$ it holds that $\q{A}Z \neq v$.
This means that the measurement operator $\q{A}: \q H_n \to \bb R^{m}$ is not surjective, i.e., $\dim (\Ran \q{A}) < m$ and therefore $\dim \left((\Ran \q{A})^{\perp}\right)> 0$.
Consider now the adjoint operator $\q{A}^*: \bb R^{m} \to \q H_n$ of $\q{A}$.
Since $\q A$ is linear operator between finite-dimensional spaces, it holds that $(\Ran \q{A})^{\perp} = \ker (\q{A}^*)$ and this means that 
\[
\dim\left(\ker (\q{A}^*) \right) \ge 1
\]
and, therefore, there exists $\beta \in \bb R^{m}$, $\beta \neq 0$ such that $\q A^* \beta = 0$.
So, $\q{A}^{*} \beta$ is a zero matrix. Using Inequality \eqref{eq: qp in dual norms} we obtain a contradiction
\[
0 < \norm{\beta}_{*} \le d r_{*}^{1/q - 1/2}\norm{\q{A}^{*} \beta }_{q^{*}} = d r_{*}^{1/q - 1/2}\norm{0}_{q^{*}} = 0.
\]
Thus, for all $w \in \bb R^{m}$ optimization problem $\Delta_q(w)$ is feasible and $U = \Delta_q(w)$ is well-defined.

Fix $V \in \ker{\q{A}}$. Define $\tau := t e^{i \theta}$, where $\theta \in [0,2 \pi)$ and $t>0$ small enough to have $U + \tau V \neq 0$. We can present $U + \tau V$ using the singular value decomposition.
\[
U + \tau V = \sum_{j=1}^{n} \sigma_j(U + \tau V) u_j v_j^{*},
\]
Using the same vectors $u_j$ and $v_j$, $j \in [n]$, we define
\[
W^{\tau} := \frac{1}{\norm{U + \tau V}_{q}^{q-1}} \sum_{j=1}^{n} \sigma_j(U + \tau V)^{q-1} u_j v_j^{*}.
\]
Note that
\begin{align*}
\norm{W^{\tau}}_{q^{*}}
& = \left( \sum_{j=1}^{n} \left( \frac{\sigma_j(U + \tau V)}{\norm{U + \tau V}_{q}} \right)^{(q-1)q^{*}} \right)^{1/q^{*}}
= \left( \frac{1}{\norm{U + \tau V}_{q}^{q}} \sum_{j=1}^{n} \left( \sigma_j(U + \tau V) \right)^{q} \right)^{\frac{q-1}{q}} \\
& = \left( \frac{\norm{U + \tau V}_{q}^{q}}{\norm{U + \tau V}_{q}^{q}} \right)^{\frac{q-1}{q}}
= 1.
\end{align*}
Using that both $u_j$ and $v_j$, $j \in [n]$ form a basis in $\bb C^{n}$ we obtain 
\begin{align*}
\langle W^{\tau}, U + \tau V \rangle_{F}
&= \frac{1}{\norm{U + \tau V}_{q}^{q-1}} \sum_{j=1}^{n} \left( \sigma_j (U + \tau V) \right)^{q-1} \langle u_j v_j^{*}, U + \tau V \rangle_{F} \\
& = \frac{1}{\norm{U + \tau V}_{q}^{q-1}} \sum_{j=1}^{n} \left( \sigma_j (U + \tau V) \right)^{q-1} \sum_{k=1}^{n} \sigma_k (U + \tau V) \langle u_j v_j^{*}, u_k v_k^{*} \rangle_{F} \\
& = \frac{1}{\norm{U + \tau V}_{q}^{q-1}} \sum_{j=1}^{n} \sum_{k=1}^{n} \left( \sigma_j (U + \tau V) \right)^{q-1} \sigma_k (U + \tau V)  \langle u_j v_j^{*}, u_k v_k^{*} \rangle_{F} \\
& = \frac{1}{\norm{U + \tau V}_{q}^{q-1}} \sum_{j=1}^{n} \sum_{k=1}^{n} \left( \sigma_j (U + \tau V) \right)^{q} \langle u_j, u_k \rangle \langle v_k, v_j \rangle \\
&= \frac{1}{\norm{U + \tau V}_{q}^{q-1}} \sum_{j=1}^{n} \left( \sigma_j (U + \tau V) \right)^{q} 
= \frac{\norm{U + \tau V}_{q}^{q}}{\norm{U + \tau V}_{q}^{q-1}} 
= \norm{U + \tau V}_{q}.
\end{align*}
%where the Frobenius inner product disappears due to properties of the singular value decomposition, as can be seen from
%$$
%\langle u_j v_j^{*}, u_i v_i^{*} \rangle_{F}
%= \tr( (u_j v_j^{*})^{*} u_i v_i^{*} )
%= \tr( v_j u_j^{*} u_i v_i^{*} )
%= \langle u_j , u_i \rangle \langle v_j , v_i \rangle
%= \delta_{ij}.
%$$
Out next goal is to take a limit $\tau \to 0$, but first we need to justify its existence. By Weil's Theorem \cite[Theorem III.2.1]{Bhatia.1997}, we have that for all $j \in [m]$ 
\[
0 \le | \sigma_j(U + \tau V) - \sigma_j(U) |
\le \sigma_j(\tau V)
= \norm{\tau V}_{\infty}
= |\tau| \norm{V}_{\infty}
\to 0, \text{ as } \tau \to 0.
\]
Therefore, $\sigma_j(U + \tau V)$ converges to $\sigma_j(U)$, for $\tau \to 0$ and, consequently, 
\[
W^{\tau} \to \frac{1}{\norm{U}_{q}^{q-1}} \sum_{j=1}^{n} \sigma_j(U)^{q-1} u_j v_j^{*} := W, \quad \tau \to 0.
\]
Firstly, due to assumption $q>1$, $W$ is independent of $V$.
Secondly, it preserves properties of $W^{\tau}$, namely
\begin{equation}\label{eq: equivalent qp proof 1}
\norm{W}_{q^{*}} = 1 \text{ and } \langle W,U \rangle_F = \norm{U}_{q}.
\end{equation}
Recall that $U$ is solution of $\Delta_q(w)$ and $U + \tau V$ is a feasible point since $V \in \ker \q A$ and hence $\norm{U}_{q} \le \norm{U + \tau V}_{q}$. Together with H\"{o}lder's inequality it yields that
\[
\RE \langle W^{\tau}, U \rangle_F
\le |\langle W^{\tau}, U \rangle_F|
\le \norm{U}_{q} \norm{ W^{\tau} }_ {q^{*}}
= \norm{U}_{q}
\le \norm{U + \tau V}_{q}
= \RE \langle W^{\tau}, U + \tau V \rangle_F.
\]
Consequently, it holds that
\[
0 \le \frac{1}{t} \RE \langle W^{\tau}, \tau V \rangle_F 
= \frac{1}{t}\RE \langle W^{\tau}, t e^{i \theta} V \rangle_F 
= \RE \left( e^{-i \theta} \langle W^{\tau},  V \rangle_F \right). 
\]
Taking limit $t \to 0$ implies $\tau \to 0$ and we obtain that for all $\theta \in [0,2\pi]$ inequality 
\[
\RE \left( e^{-i \theta} \langle W,  V \rangle_F \right) \ge 0
\] 
is true. It is only possible if $\langle W,  V \rangle_F = 0$.
Again, recall that $V$ was an arbitrary matrix from $\ker \q{A}$ and we proved that $\langle W,  V \rangle_F = 0$. It implies that $W \in \left( \ker \q{A} \right)^{\bot} = \Ran \q{A}^{*}$. Thus, there exists $y \in \bb R^{m}$ so that $W = \q{A}^{*} y$. 
By Inequality \eqref{eq: qp in dual norms} and Equalities \eqref{eq: equivalent qp proof 1},  it holds that
\[
\norm{y}_{*}
\le d r_{*}^{1/q - 1/2}\norm{\q{A}^{*}y}_{q^{*}}
= d r_{*}^{1/q - 1/2}\norm{W}_{q^{*}}
= d r_{*}^{1/q - 1/2}.
\]
Therefore, by H\"{o}lder's inequality and Equalities \eqref{eq: equivalent qp proof 1} we obtain
\[
\norm{U}_{q}
= \langle W, U \rangle
= \langle \q{A}^{*}y, U \rangle
= \langle y, \q{A} U \rangle
= \langle y, w \rangle
\le \norm{y}_{*} \norm{w}
\le d r_{*}^{1/q - 1/2} \norm{w},
\]
what establishes the quotient property for $q>1$.

For the case $q=1$, consider a sequence $(q_j)_{j \in \mathbb{N}}$ such that $q_j>1$ for all $j \in \bb{N}$ and $ q_j \to 1$ for $j \to \infty$. 
We first note that 
%for two parameter $\tilde{p}$ and $\tilde{q}$ satisfying $1 \le \tilde{q} \le \tilde{p} \le \infty$ and for any $Z \in \q H_n$ it holds, that 
%Application of this inequality for $\tilde{q} = \infty$ and $\tilde{p} = q_j*$ grants us  
\begin{equation}\label{eq: equivalent qp proof 2}
%\norm{\q{A}^{*}w}_{\infty} \le \norm{\q{A}^{*}w}_{q_j^*} \text{ for all } j \in \bb{N}.
\text{for all } 1 \le \tilde{q} \le \tilde{p} \le \infty, \text{ for all } Z \in \q H_n \text{ it holds that } \norm{Z}_{\tilde{q}} \le \norm{Z}_{\tilde{p}}.
\end{equation}
Combining Inequality \eqref{eq: qp in dual norms} with \eqref{eq: equivalent qp proof 2} results in 
\[
\norm{w}_{*} 
\le d r_{*}^{1/2}\norm{\q{A}^{*}w}_{\infty} 
= (d r_{*}^{1- 1/q_j}) r_{*}^{1/q_j - 1/2}\norm{\q{A}^{*}w}_{\infty}
\le (d r_{*}^{1- 1/q_j}) r_{*}^{1/q_j - 1/2}\norm{\q{A}^{*}w}_{q_j^*}.
\]
Since for all $j \in \bb N$ it holds that $q_j>1$, above inequality is equivalent to the $S_{q_j}$-quotient property of $\q A$ and hence there exist matrix $U_j$ such that 
\[
\q A U_j = w  
\text{ and }  
\norm{U_j}_{q_j} \le (d r_{*}^{1- 1/q_j}) r_{*}^{1/q_j - 1/2} \norm{w}
= d r_{*}^{1/2} \norm{w}
\] 
By \eqref{eq: equivalent qp proof 2}, sequence $\{U_j\}_{j \in \bb N}$ is bounded in the $S_\infty$-Schatten norm, that is
\[
\norm{U_j}_{\infty} \le \norm{U_j}_{q_j} \le  d r_{*}^{1/2} \norm{w}, \quad \text{ for all } j \in \bb N.
\]
Space $\q H_n$ is a finite-dimensional space and thus, there exists a convergent in $S_\infty$-norm subsequence $(U_{j_k})_{k \in \mathbb{N} }$, such that $U_{j_k} \to U$ when $k \to \infty$.  Consequently, we obtain that $\q{A} U = w$.
%and the last step is to prove inequality $\norm{U}_{q} \le d r_{*}^{1/q - 1/2} \norm{w}$.
Moreover, since all norms in finite-dimensional spaces are equivalent $U_{j_k}$ converges to $U$ in $S_1$ norm as well. 

Last step is to prove that $\norm{U}_{1} = \lim_{k \to \infty} \norm{U_{j(k)}}_{q_{j(k)}}$.
Using continuity of the $S_p$ norms in parameter $p$, that is $\norm{U}_{1} = \lim_{k \to \infty} \norm{U}_{q_{j_k}}$ it follows that
\begin{align*}
\norm{U}_{1}
& = \lim_{k \to \infty} \norm{U}_{q_{j(k)}}
\le \lim_{k \to \infty} \norm{U - U_{j(k)} }_{q_{j(k)}}
+ \lim_{k \to \infty} \norm{U_{j(k)}}_{q_{j(k)}} \\
& \le \lim_{k \to \infty} \norm{U - U_{j(k)} }_{1}
+ \lim_{k \to \infty} \norm{U_{j(k)}}_{q_{j(k)}}
= \lim_{k \to \infty} \norm{U_{j(k)}}_{q_{j(k)}},
\end{align*}
where we used triangle inequality and \eqref{eq: equivalent qp proof 2}.
%By dropping everything in between, we have that
%$$
%\Norm{U}{1} \le \lim_{k \to \infty} \Norm{U_{j(k)}}{q_{j(k)}}.
%$$
On the other hand,
\[
\lim_{k \to \infty} \norm{U_{j(k)}}_{q_{j(k)}}
\le \lim_{k \to \infty} \norm{U_{j(k)}}_{1}
\le \lim_{k \to \infty} \norm{U_{j(k)} - U}_{1}
+ \lim_{k \to \infty} \norm{U}_{1}
= \norm{U}_{1}.
\]
Thus,
\[
\norm{U}_{1} = \lim_{k \to \infty} \norm{U_{j(k)}}_{q_{j(k)}} \le d r_{*}^{1/2} \norm{w},
\]
which concludes the proof.

\subsection{Proof of \Cref{l: dual operator}} \label{app:proof:aux:dualop}

Let $\{e^{(1)}, \dots, e^{(m)} \}$ be a standard basis vectors in $\bb{R}^{m}$. By the definition of adjoint linear operators in Hilbert spaces, for all $U \in \q H_n$ and for all $w \in \bb{R}^{m}$ it holds that
\[
\left \langle \q{A} U, w \right \rangle
= \left \langle U, \q{A}^* w \right \rangle_F.
\]
Applying this equality for basis vector $e^{(k)}, k \in [m]$, we obtain 
\[
\left \langle U, \q{A}^* e^{(j)} \right \rangle_F
= \left \langle \q{A}U, e^{(j)} \right \rangle
= \sum_{j=1}^{m} \tr(U (a_j a_j^{*})^{*} ) (e^{(k)})_j
= \tr(U (a_k a_k^{*})^{*} )
= \left \langle U, a_k a_k^{*} \right \rangle_F.
\]
Since it holds for all $U \in \q H_n$, we have that the action of $\q A^*$ on basis vectors is $\q{A}^* e^{(k)} = a_k a_k^{*}$. Therefore, by the linearity of $\q{A}^*$ we derive that
\[
\q{A}^* w
= \q{A}^* \left( \sum_{k=1}^{m} w_k e^{(k)} \right)
= \sum_{k=1}^{m} w_k \q{A}^* e^{(k)}
= \sum_{k=1}^{m} w_k a_k a_k^*,
\]
for all $w \in \bb{R}^m$, which concludes the proof.

%\bibliographystyle{IEEEtran}
% argument is your BibTeX string definitions and bibliography database(s)
%\bibliography{RankOneQuotientProperty}
\bibliography{ms.bbl}

\begin{thebibliography}{10}

\bibitem{BreakingCoherence}
B.~Adcock, A.~C. Hansen, C.~Poon, and B.~Roman.
\newblock Breaking the coherence barrier: A new theory for compressed sensing.
\newblock In {\em Forum of Mathematics, Sigma}, volume~5, pages 1--84.
  Cambridge University Press, 2017.

\bibitem{ARR13}
A.~Ahmed, B.~Recht, and J.~Romberg.
\newblock Blind deconvolution using convex programming.
\newblock {\em IEEE Trans. Inform. Theory}, 60(3):1711--1732, 2013.

\bibitem{Ariananda.2012}
D.~D. Ariananda and G.~Leus.
\newblock Compressive wideband power spectrum estimation.
\newblock {\em IEEE Trans. Signal Process.}, 60(9):4775--4789, 2012.

\bibitem{BCW11}
A.~Belloni, V.~Chernozhukov, and L.~Wang.
\newblock Square-root lasso: pivotal recovery of sparse signals via conic
  programming.
\newblock {\em Biometrika}, 98(4):791--806, 12 2011.

\bibitem{Bhatia.1997}
R.~Bhatia.
\newblock {\em Matrix Analysis}, volume 169 of {\em Graduate Texts in
  Mathematics}.
\newblock {Springer New York}, New York, NY and s.l., 1997.

\bibitem{BA18}
S.~Brugiapaglia and B.~Adcock.
\newblock Robustness to unknown error in sparse regularization.
\newblock {\em IEEE Trans. Inf. Theory}, 64(10):6638--6661, 2018.

\bibitem{Can08}
E.~J. Cand{\`e}s.
\newblock The restricted isometry property and its implications for compressed
  sensing.
\newblock {\em Comptes Rendus Mathematique}, 346(9):589 -- 592, 2008.

\bibitem{CandesEldarStrohmerVoro.2013}
E.~J. Cand\`{e}s, Y.~Eldar, T.~Strohmer, and V.~Voroninski.
\newblock {Phase Retrieval via Matrix Completion}.
\newblock {\em SIAM J. Imag. Sci.}, 6(1):199--225, 2013.

\bibitem{Candes.2014}
E.~J. Cand{\`e}s and X.~Li.
\newblock Solving quadratic equations via phaselift when there are about as
  many equations as unknowns.
\newblock {\em Found. Comput. Math.}, 14(5):1017--1026, 2014.

\bibitem{CLS15}
E.~J. Candes, X.~Li, and M.~Soltanolkotabi.
\newblock Phase retrieval from coded diffraction patterns.
\newblock {\em Appl. Comput. Harmon. Anal.}, 39(2):277--299, 2015.

\bibitem{Candes.2010}
E.~J. Cand{\`e}s and Y.~Plan.
\newblock Matrix completion with noise.
\newblock {\em Proceedings of the IEEE}, 98(6):925--936, 2010.

\bibitem{Candes.2011}
E.~J. Cand{\`e}s and Y.~Plan.
\newblock Tight oracle inequalities for low-rank matrix recovery from a minimal
  number of noisy random measurements.
\newblock {\em IEEE Trans. Inform. Theory}, 57(4):2342--2359, 2011.

\bibitem{CRT06a}
E.~J. Cand\`{e}s, J.~Romberg, and T.~Tao.
\newblock Robust uncertainty principles: exact signal reconstruction from
  highly incomplete frequency information.
\newblock {\em IEEE Trans. Inf. Theory}, 52(2):489--509, 2006.

\bibitem{CRT06b}
E.~J. Cand\`{e}s, J.~Romberg, and T.~Tao.
\newblock Stable signal recovery from incomplete and inaccurate measurements.
\newblock {\em Comm. Pure Appl. Math.}, 59(8):1207--1223, 2006.

\bibitem{Candes.2013}
E.~J. Cand{\`e}s, T.~Strohmer, and V.~Voroninski.
\newblock Phaselift: Exact and stable signal recovery from magnitude
  measurements via convex programming.
\newblock {\em Commun. Pure Appl. Math.}, 66(8):1241--1274, 2013.

\bibitem{CT06}
E.~J. {Cand{\`e}s} and T.~{Tao}.
\newblock {Near-Optimal Signal Recovery From Random Projections: Universal
  Encoding Strategies?}
\newblock {\em IEEE Trans. Inf. Theory}, 52(12):5406--5425, 2006.

\bibitem{Candes.2010b}
E.~J. Cand{\`e}s and T.~Tao.
\newblock The power of convex relaxation: Near-optimal matrix completion.
\newblock {\em IEEE Trans. Inform. Theory}, 56(5):2053--2080, 2010.

\bibitem{Chen.2015}
Y.~Chen, Y.~Chi, and A.~J. Goldsmith.
\newblock Exact and stable covariance estimation from quadratic sampling via
  convex programming.
\newblock {\em IEEE Trans. Inform. Theory}, 61(7):4034--4059, 2015.

\bibitem{cochran_1934}
W.~G. Cochran.
\newblock The distribution of quadratic forms in a normal system, with
  applications to the analysis of covariance.
\newblock {\em Math. Proc. Cambridge Philos. Soc.}, 30(2):178--191, 1934.

\bibitem{Cohen09}
A.~Cohen, W.~Dahmen, and R.~DeVore.
\newblock Compressed sensing and best k-term approximation.
\newblock {\em J. Amer. Math. Soc}, 21:211--231, 2009.

\bibitem{DGT09}
N.~Dafnis, A.~Giannopoulos, and A.~Tsolomitis.
\newblock Asymptotic shape of a random polytope in a convex body.
\newblock {\em J. Funct. Anal.}, 257(9):2820 -- 2839, 2009.

\bibitem{Daubechies10}
I.~Daubechies, R.~DeVore, M.~Fornasier, and C.~G\"unt\"urk.
\newblock Iteratively reweighted least squares minimization for sparse
  recovery.
\newblock {\em Commun. Pure Appl. Math.}, 63:1--38, 2010.

\bibitem{DemanetHand}
L.~Demanet and P.~Hand.
\newblock Stable optimizationless recovery from phaseless linear measurements.
\newblock {\em J. Fourier Anal. Appl.}, 20(1):199--221, 2014.

\bibitem{DeVoreWojtaszczyk09}
R.~DeVore, G.~Petrova, and P.~Wojtaszczyk.
\newblock Instance-optimality in probability with an $\ell_1$-minimization
  decoder.
\newblock {\em Appl. Comput. Harmon. Anal.}, 27(3):275 -- 288, 2009.

\bibitem{Donoho:2006}
D.~L. Donoho.
\newblock Compressed sensing.
\newblock {\em IEEE Trans. Inf. Theory}, 52(4):1289--1306, 2006.

\bibitem{DE03}
D.~L. Donoho and M.~Elad.
\newblock Optimally sparse representation in general (nonorthogonal)
  dictionaries via l1 minimization.
\newblock {\em Proc. Natl. Acad. Sci. USA}, 100(5):2197--2202, 2003.

\bibitem{Fazel.March2002}
M.~Fazel.
\newblock {\em Matrix Rank Minimization with Applications}.
\newblock PhD thesis, {Stanford University}, March 2002.

\bibitem{Flammia.2012}
S.~T. Flammia, D.~Gross, Y.-K. Liu, and J.~Eisert.
\newblock Quantum tomography via compressed sensing: error bounds, sample
  complexity and efficient estimators.
\newblock {\em New J. Phys.}, 14(9):095022, 2012.

\bibitem{Fornasier11}
M.~Fornasier, H.~Rauhut, and R.~Ward.
\newblock Low-rank matrix recovery via iteratively reweighted least squares
  minimization.
\newblock {\em SIAM J. Optim.}, 21(4):1614--1640, 2011.

\bibitem{Fou14}
S.~Foucart.
\newblock Stability and robustness of $\ell_1$-minimizations with weibull
  matrices and redundant dictionaries.
\newblock {\em Linear Algebra Appl.}, 441:4--21, 2014.
\newblock special issue on Sparse Approximate Solution of Linear Systems.

\bibitem{Foucart.2013}
S.~Foucart and H.~Rauhut.
\newblock {\em A mathematical introduction to compressive sensing}.
\newblock Applied and Numerical Harmonic Analysis. Birkh{\"a}user, New York,
  2013.

\bibitem{GaiffasKlopp17}
S.~Ga{\"i}ffas and O.~Klopp.
\newblock High dimensional matrix estimation with unknown variance of the
  noise.
\newblock {\em Statist. Sinica}, 27(1):115--145, 2017.

\bibitem{Gluskin89}
E.~D. Gluskin.
\newblock {Extremal Properties of Orthogonal Parallelepipeds and Their
  Applications to the Geometry of Banach Spaces}.
\newblock {\em Sb. Math.}, 64(1):85, 1989.

\bibitem{MichaelGrantandStephenBoyd.2020}
M.~Grant and S.~Boyd.
\newblock {CVX: Matlab Software for Disciplined Convex Programming, Version
  2.2}, 2020.

\bibitem{Gross.2010}
D.~Gross, Y.-K. Liu, S.~T. Flammia, S.~Becker, and J.~Eisert.
\newblock Quantum state tomography via compressed sensing.
\newblock {\em Phys. Rev. Lett.}, 105(15):150401, 2010.

\bibitem{GKKMR19}
O.~Gu{\'e}don, F.~Krahmer, C.~K{\"u}mmerle, S.~Mendelson, and H.~Rauhut.
\newblock On the geometry of polytopes generated by heavy-tailed random
  vectors.
\newblock {\em arXiv preprint arXiv:1907.07258}, 2019.

\bibitem{Hauptetal}
J.~{Haupt}, W.~U. {Bajwa}, G.~{Raz}, and R.~{Nowak}.
\newblock Toeplitz compressed sensing matrices with applications to sparse
  channel estimation.
\newblock {\em IEEE Trans. Inf. Theory}, 56(11):5862--5875, 2010.

\bibitem{JKS18}
P.~{Jung}, F.~{Krahmer}, and D.~{Stöger}.
\newblock Blind demixing and deconvolution at near-optimal rate.
\newblock {\em IEEE Trans. Inform. Theory}, 64(2):704--727, 2018.

\bibitem{Kabanava.2016}
M.~Kabanava, R.~Kueng, H.~Rauhut, and U.~Terstiege.
\newblock Stable low-rank matrix recovery via null space properties.
\newblock {\em Inf. Inference}, 5(4):405--441, 2016.

\bibitem{Koren.2009}
Y.~Koren, R.~Bell, and C.~Volinsky.
\newblock {Matrix Factorization Techniques for Recommender Systems}.
\newblock {\em Computer}, 42(8):30--37, 2009.

\bibitem{KMR14}
F.~Krahmer, S.~Mendelson, and H.~Rauhut.
\newblock Suprema of chaos processes and the restricted isometry property.
\newblock {\em Commun. Pure Appl. Math.}, 67(11):1877--1904, 2014.

\bibitem{KS19}
F.~Krahmer and D.~Stöger.
\newblock On the convex geometry of blind deconvolution and matrix completion.
\newblock {\em Comm. Pure Appl. Math.}, to appear.
\newblock arXiv preprint arXiv:1902.11156.

\bibitem{KW14}
F.~{Krahmer} and R.~{Ward}.
\newblock Stable and robust sampling strategies for compressive imaging.
\newblock {\em IEEE Trans. Image Process.}, 23(2):612--622, 2014.

\bibitem{Kueng.2017}
R.~Kueng, H.~Rauhut, and U.~Terstiege.
\newblock Low rank matrix recovery from rank one measurements.
\newblock {\em Appl. Comput. Harmon. Anal.}, 42(1):88--116, 2017.

\bibitem{KS18}
C.~K{\"u}mmerle and J.~Sigl.
\newblock {Harmonic Mean Iteratively Reweighted Least Squares for Low-Rank
  Matrix Recovery}.
\newblock {\em Journal of Machine Learning Research}, 19(47):1--49, 2018.

\bibitem{Lai13}
M.~Lai, Y.~Xu, and W.~Yin.
\newblock Improved iteratively reweighted least squares for unconstrained
  smoothed $\ell_q$ minimization.
\newblock {\em SIAM J. Numer. Anal}, page 2013, 2013.

\bibitem{Leus.2011}
G.~Leus and Z.~Tian.
\newblock Recovering second-order statistics from compressive measurements.
\newblock In {\em 2011 4th IEEE International Workshop on Computational
  Advances in Multi-Sensor Adaptive Processing (CAMSAP)}, pages 337--340,
  Piscataway, NJ, 2011. IEEE.

\bibitem{LPRT05}
A.~Litvak, A.~Pajor, M.~Rudelson, and N.~Tomczak-Jaegermann.
\newblock Smallest singular value of random matrices and geometry of random
  polytopes.
\newblock {\em Adv. Math.}, 195(2):491 -- 523, 2005.

\bibitem{Liu.2011}
Y.-K. Liu.
\newblock {Universal low-rank matrix recovery from Pauli measurements}.
\newblock In {\em {Advances in Neural Information Processing Systems (NIPS)}},
  volume~24, pages 1638--1646, 2011.

\bibitem{LDSP08}
M.~{Lustig}, D.~L. {Donoho}, J.~M. {Santos}, and J.~M. {Pauly}.
\newblock Compressed sensing mri.
\newblock {\em IEEE Signal Processing Magazine}, 25(2):72--82, March 2008.

\bibitem{ML17}
S.~Mendelson and G.~Lecu\'{e}.
\newblock Sparse recovery under weak moment assumptions.
\newblock {\em J. Eur. Math. Soc.}, 19(3):881--904, 2017.

\bibitem{Natarajan95}
B.~Natarajan.
\newblock Sparse approximate solutions to linear systems.
\newblock {\em SIAM J. Comput.}, 24(2):227--234, 1995.

\bibitem{CoSaMP}
D.~Needell and J.~A. Tropp.
\newblock {CoSaMP: Iterative signal recovery from incomplete and inaccurate
  samples}.
\newblock {\em Appl. Comput. Harmon. Anal.}, 26(3):301--321, 2009.

\bibitem{PetersenJung20}
H.~B. Petersen and P.~Jung.
\newblock Robust instance-optimal recovery of sparse signals at unknown noise
  levels.
\newblock {\em arXiv preprint arXiv:2008.08385}, 2020.

\bibitem{Rauhut10}
H.~Rauhut.
\newblock Compressive sensing and structured random matrices.
\newblock In M.~Fornasier, editor, {\em {Theoretical Foundations and Numerical
  Methods for Sparse Recovery}}, volume~9 of {\em Radon Series on Computational
  and Applied Mathematics}, pages 93 -- 200. De Gruyter, Berlin, 2010.

\bibitem{RRT12}
H.~Rauhut, J.~Romberg, and J.~A. Tropp.
\newblock Restricted isometries for partial random circulant matrices.
\newblock {\em Appl. Comput. Harmon. Anal.}, 32(2):242 -- 254, 2012.

\bibitem{Recht.2010}
B.~Recht, M.~Fazel, and P.~A. Parrilo.
\newblock Guaranteed minimum-rank solutions of linear matrix equations via
  nuclear norm minimization.
\newblock {\em SIAM Rev.}, 52(3):471--501, 2010.

\bibitem{RV08}
M.~Rudelson and R.~Vershynin.
\newblock {On sparse reconstruction from Fourier and Gaussian measurements}.
\newblock {\em Comm. Pure Appl. Math.}, 61(8):1025--1045, 2008.

\bibitem{SlawskiLiHein:2015}
M.~Slawski, P.~Li, and M.~Hein.
\newblock Regularization-free estimation in trace regression with symmetric
  positive semidefinite matrices.
\newblock In {\em {Advances in Neural Information Processing Systems (NIPS)}},
  volume~28, pages 2782--2790, 2015.

\bibitem{TianLiRamWal.2014}
L.~Tian, X.~Li, K.~Ramchandran, and L.~Waller.
\newblock Multiplexed coded illumination for fourier ptychography with an led
  array microscope.
\newblock {\em Biomed. Opt. Express}, 5(7):2376--2389, 2014.

\bibitem{Tib96}
R.~Tibshirani.
\newblock Regression shrinkage and selection via the lasso.
\newblock {\em J. R. Stat. Soc. Ser. B Stat. Methodol.}, 58(1):267--288, 1996.

\bibitem{Tropp07}
J.~A. Tropp and A.~C. Gilbert.
\newblock Signal recovery from random measurements via orthogonal matching
  pursuit.
\newblock {\em IEEE Trans. Inform. Theory}, 53(12):4655--4666, 2007.

\bibitem{SDPT3:Tutuncu:2003}
R.~H. T{\"u}t{\"u}nc{\"u}, K.-C. Toh, and M.~J. Todd.
\newblock Solving semidefinite-quadratic-linear programs using sdpt3.
\newblock {\em Math. Program.}, 95(2):189--217, 2003.

\bibitem{Vershynin.2018}
R.~Vershynin.
\newblock {\em High-Dimensional Probability}.
\newblock {Cambridge University Press}, 2018.

\bibitem{Wojtaszczyk10}
P.~Wojtaszczyk.
\newblock Stability and instance optimality for {G}aussian measurements in
  compressed sensing.
\newblock {\em Found. Comput. Math.}, 10(1):1--13, 2010.

\end{thebibliography}

\end{document}